\newcommand\redsout{\bgroup\markoverwith{\textcolor{red}{\rule[0.5ex]{2pt}{1pt}}}\ULon}
\newcommand{\stkout}[1]{\ifmmode\text{\redsout{\ensuremath{#1}}}\else\redsout{#1}\fi}
\newcommand\reallywidecheck[1]{%
\savestack{\tmpbox}{\stretchto{%
  \scaleto{%
    \scalerel*[\widthof{\ensuremath{#1}}]{\kern-.6pt\bigwedge\kern-.6pt}%
    {\rule[-\textheight/2]{1ex}{\textheight}}
  }{\textheight}%
}{0.6ex}}%
\stackon[1pt]{#1}{\scalebox{-0.8}{\tmpbox}}%
}
\newcommand{\prob}{\mathds{P}}
\newcommand{\e}{\mathrm{e}}
\newcommand{\tr}{\mathrm{tr\,}}
\newcommand{\supp}{\mathrm{supp\,}}
\newcommand{\Ima}{\mathrm{Im\,}}
\newcommand{\Rea}{\mathrm{Re\,}}
\newcommand{\dist}{\mathrm{dist\,}}
\newcommand{\diag}{\mathrm{diag}}
\newcommand{\Op}{\mathrm{Op}}
\newcommand{\mO}{\mathcal{O}}
\newcommand{\C}{\mathds{C}}
\newcommand{\R}{\mathds{R}}
\newcommand{\N}{\mathds{N}}
\newcommand{\Z}{\mathds{Z}}
\newcommand{\wt}{\widetilde}
\newcommand{\wh}{\widehat}
\newcommand{\cV}{\mathcal{V}}
\renewcommand{\ge}{\geqslant}
\renewcommand{\geq}{\geqslant}
\renewcommand{\le}{\leqslant}
\renewcommand{\leq}{\leqslant}
\newtheorem{thm}{Theorem}
\newtheorem{corollary}[thm]{Corollary}
\newtheorem{prop}[thm]{Proposition}
\newtheorem{lem}[thm]{Lemma}
\newtheorem{definition}[thm]{Definition}
\newtheorem{rem}[thm]{Remark}
\numberwithin{equation}{section}
\numberwithin{thm}{section}
\title[Absence of small magic angles]{Absence of small magic angles for disordered tunneling potentials in twisted bilayer graphene}
\author{Simon Becker}
\address[Simon Becker]{ETH Zurich, 
Institute for Mathematical Research, 
R\"amistrasse 101, 8092 Zurich,
Switzerland}
\email{simon.becker@math.ethz.ch}
\author{Izak Oltman}
\address[Izak Oltman]{Department of Mathematics, 
University of California, 
Berkeley, CA 94720, USA.}
\email{ioltman@berkeley.edu}
\author{Martin Vogel}
\address[Martin Vogel]{Institut de Recherche Math{\'e}matique Avanc{\'e}e - UMR 7501, 
Universit{\'e} de Strasbourg et CNRS, 7 rue René-Descartes, 67084 Strasbourg Cedex, France.}
\email{vogel@math.unistra.fr}
\date{\today}
\keywords{Spectral theory; random perturbations; mathematical physics}
\begin{document}
\begin{abstract}
We consider small random perturbations of the standard high-symmetry 
tunneling potentials in the Bistritzer-MacDonald Hamiltonian 
describing twisted bilayer graphene. Using methods developed by 
Sj\"ostrand for studying the spectral asymptotics of non-selfadjoint 
pseudo-differential operators, we prove that for sufficiently 
small twisting angles the Hamiltonian will not exhibit a flat band 
with overwhelming probability, and hence the absence of the so-called 
\textit{magic angels}. Moreover, we prove a probabilistic Weyl law for 
the eigenvalues of the non-selfadjoint tunneling operator, subject to 
small random perturbations, of the Bistritzer-MacDonald Hamiltonian in the 
chiral limit.
\end{abstract}
\maketitle
\tableofcontents
\section{Introduction}
Twisted bilayer graphene is a stacked and twisted two-dimensional carbon 
material that exhibits a variety of strongly correlated electron phenomena 
such as superconducting phases \cite{Cao}. In the one-particle band structure, 
the existence of strongly correlated phases is indicated by the occurrence of 
flat bands. 
The purpose of this article is to study the stability of flat bands under 
small random perturbations of the tunneling potentials. Such perturbations 
adequately reflect material impurities e.g. due to internal strain effects 
(lattice relaxations) \cite{NamKo17}.

The stability of flat bands under random perturbations depends sensitively 
on the nature of the disorder. If the disorder is signed, then classical 
Wegner-type estimates rule out the presence of flat bands under disorder, 
since the integrated density of states does not exhibit any jump 
discontinuities. This has been implemented for magnetic Schr\"odinger 
operators \cite{CHK,GKM,GKS,GKS2} and for twisted bilayer graphene by the 
authors in \cite{BOV23}.

In this article, we deal with a more realistic scenario, where we study 
random perturbations of the standard high-symmetry tunneling potentials 
in the Bistritzer-MacDonald Hamiltonian \cite{BM11} for twisted bilayer 
graphene, see \cite{CGG,Wa22}. Since the potential perturbations are not 
signed, it is not immediate that they can sufficiently perturb the spectrum 
to destroy the flat band. Our main result, Theorem \ref{thm1}, shows 
for random tunneling potential perturbations, there are no flat bands with 
overwhelming probability.

Let us start by introducing the Hamiltonian. Thus, let $h \in ]0,1]$ be 
proportional to the physical twisting angle, then the Bistritzer-MacDonald 
Hamiltonian is a semiclassical matrix-valued first order differential 
operator of the form
\begin{equation}\label{eq:Hamiltonian}
H_{\text{BM}}(w,h) 
= \begin{pmatrix} w C & D_h \\ D_h^* & w C \end{pmatrix}
\end{equation}
acting on $L^2(\mathbb C;\mathbb C^4)$ with domain $H^1(\mathbb C;\mathbb C^4).$

Letting $D_{x} =\frac{1}{2}( D_{x_1} -i D_{x_2})$ and 
$D_{x_j} = -i \partial_{x_j}$, the matrix-valued entries in 
\eqref{eq:Hamiltonian} are
\begin{equation}\label{eq1.0}
	D_h := 
	\begin{pmatrix}
		2hD_{\overline{x}} & U(x) \\
		U(-x) & 2hD_{\overline{x}} \\
	\end{pmatrix} \text{ and } C
	=\begin{pmatrix} 0 & V(x)\\ 
		\overline{V(x)} & 0
	\end{pmatrix},
\end{equation}
where he tunnelling potentials $U,V$ are smooth functions which are 
characterized, for $a_j = \frac{4}{3}\pi i \omega^j$ 
and $\omega = \operatorname{exp}(2\pi i /3)$, by
\begin{equation}
  \label{eq:symmU}
    \begin{split}
      &V ( x + \mathbf a_j ) = \bar \omega V ( x ) ,\ \ V ( \omega x ) 
	  	= V ( x ) , \ \  \overline{ V ( x ) } = V ( - x ) , \ \ { V ( \bar x ) } 
	  	= V (-x ) , \\
      &U ( x + \mathbf a_j ) = \bar \omega U ( x ) ,  \ \ U ( \omega x ) 
	  = \omega U ( x ) , \ \  \overline{ U ( \bar x ) } 
	  = U ( x ).
    \end{split}
\end{equation}
In this article, we focus on the chiral limit \cite{TKV19}. This limit is 
obtained by setting $w_0 \equiv 0$ in the Hamiltonian \eqref{eq:Hamiltonian} 
\begin{equation*}
	H_{\text{chiral}} 
	= \begin{pmatrix} 0 & D_h \\ 
		D_h^* & 0 
	\end{pmatrix}.
\end{equation*}
Since $H_{\text{chiral}}$ is periodic with respect to the lattice 
\begin{equation}\label{eq:Gamma}
	\Gamma:=4\pi(i\omega \Z\oplus i\omega^2\Z),
\end{equation}
we can apply the Bloch-Floquet transform, see \cite[Sec.2.3]{BEWZ22}, to 
obtain an equivalent family of operators parametrized by $k \in \C$ 
on $L^2(\C/\Gamma;\C^4)$ with domain $H^1_h(\C/\Gamma;\C^4)$
\begin{equation}\label{eq:Hk}
H_{\text{chiral}}(k) 
= \begin{pmatrix} 0 & D_h+ hk \\ 
	D_h^* +h\bar k& 0 
\end{pmatrix}. 
\end{equation}
We see $\C/\Gamma$ as a smooth compact manifold equipped with a smooth 
positive density of integration $dx$. A natural choice would be the 
Riemannian volume density inherited from 
$\C$.

The operator \eqref{eq:Hk} satisfies the chiral symmetry 
\begin{equation*}
	\operatorname{diag}(1,-1) H_{\text{chiral}}(k) \operatorname{diag}(1,-1) 
	= - H_{\text{chiral}}(k).
\end{equation*}
This implies that eigenvalues of $H_{\text{chiral}}(k)$ come in pairs 
$E_{-n}(k)=-E_n(k)$ with 
\begin{equation*}
	\dots \le E_{-2}(k) \le E_{-1}(k)\le 0 \le E_1(k)\le E_2(k)\le \dots .
\end{equation*}
When $E_1(k)\equiv 0$ we say that $H_{\text{chiral}}$ exhibits a flat band 
at energy zero. Since $D_h$ is a Fredholm operator of index $0$, see e.g. 
\cite[Proposition 2.3]{BEWZ22}, one concludes that 
\begin{equation}\label{eq:equiv_cond} 
	E_1(k) \equiv 0 \Longleftrightarrow \operatorname{Spec}(D_h)
	=\C.
\end{equation}
A twisting angle proportional to $h$ at which \eqref{eq:equiv_cond} holds 
is referred to as a \textit{magic angle} \cite{BEWZ22}.  
In the present article, we shall use \eqref{eq:equiv_cond} to study the 
equivalent magic angle condition $\operatorname{Spec}(D_h)=\C$ under 
random perturbations by off-diagonal potentials of $D_h$.
In Theorem \ref{thm1}, we show that the lowest singular value of 
a suitable random perturbation $D_h-z$ being at least an exponentially 
small distance away from zero is overwhelmingly high. Thus, $z$ is not 
in the spectrum of the perturbation of $D_h$ with great probability 
and thus there cannot be any flat band, or equivalently, a magic angle. 
\par
One might object that an exponentially flat band would in practise still 
look fairly flat. But in the case of the chiral model, it is known 
\cite{BEWZ22} that there are $\ge \mathcal O(1/h)$ many bands that 
are $\mathcal O(e^{-c/h})$ close to zero energy. Thus, the bound in 
Theorem \ref{thm1} implies that the designated flat band may 
mix with the other exponentially small bands which do not carry a 
non-zero Chern number. 
\\
\par
In general, there may be topological obstructions to destruct a flat 
band of a Dirac operator by potential-type perturbations. This is why 
the study of non-signed perturbations is subtle.
An example where this happens is the magnetic Dirac operator 
\begin{equation*}
	H_{\text{magnetic}} 
	= \begin{pmatrix} 
		0 & 2D_{\bar x} - A(x) \\ 
		2D_{x}-\overline{A(x)} & 0 
	\end{pmatrix}
\end{equation*}
with magnetic potential $A(x)=A_1(x)+i A_2(x).$ If we choose a magnetic 
potential $A$ that gives rise to a constant magnetic field 
$B = \partial_1 A_2 - \partial_2 A_1 >0,$ then we can apply the 
magnetic Bloch transform to study its spectrum on a compact domain 
$\C/\Gamma$ on which the magnetic flux is commensurable, i.e. 
$\phi:=B \vert \C/\Gamma \vert \in \Z_+$ with so-called magnetic 
boundary conditions. Details of this construction can for instance be found 
in \cite{BZ24}.

Similar to \eqref{eq:equiv_cond}, one has that 
$\operatorname{Spec}_{L^2(\C/\Gamma)}(2D_{\bar x} - A) = \C.$ However, 
the Fredholm index of $2D_{\bar x}-A(x)$ on $\C/\Gamma$ is equal to 
$\phi \neq 0$, see \cite{BZ24}. Since this operator $2D_{\bar x}-A(x)$ 
is elliptic on a compact domain, its Fredholm index is invariant under 
bounded potential perturbations. This implies that its spectrum is equal 
to the entire complex plane even under arbitrary bounded potential 
perturbations. In this sense, the flat bands of a magnetic Dirac operator, 
the so-called Landau levels, cannot be destroyed by random perturbations 
of the magnetic potential. 
\section{Main results}
Let $h\in]0,1]$ and consider the unbounded semiclassical differential 
operator 
\begin{equation}\label{eq1}
	D_h := 
	\begin{pmatrix}
		2hD_{\overline{x}} & U(x) \\
		U(-x) & 2hD_{\overline{x}} \\
	\end{pmatrix}: 
	L^2(\C/\Gamma;\C^2) \to L^2(\C/\Gamma;\C^2), 
\end{equation}
with $U \in C^\infty(\C/\Gamma;\C)$ as in \eqref{eq:symmU} and 
equipped with the domain $H_h^1(\C/\Gamma;\C^2)$, making it a closed densely 
defined unbounded operator. It was observed in \cite[Proposition 2.3]{BEWZ22} 
that $D_h$ is Fredholm operator of index $0$. 
\subsection{Random tunneling perturbation}
Let $\widetilde{P}_1,\widetilde{P}_2$ be $h$-independent elliptic positive 
second order differential operators on $\C/\Gamma$ with smooth coefficients. 
Put $P_j = h^2 \widetilde{P}_j$, $j=1,2$, and let $\{\psi_n^{1}\}_{n\in\N},
\{\psi_n^{2}\}_{n\in\N}\subset L^2(\C/\Gamma;\C)$ be two orthonormal bases 
composed of eigenfunctions of $P_1$ and $P_2$, respectively, so that 
\begin{equation}\label{eq2}
	P_j \psi_n^j = \mu_{n,j}^2\psi_n^j, \quad 
	\mu_{n,j}\geq 0. 
\end{equation}
For $L\gg1$ and $D_j=D_j(L)>0$ we consider the potentials 
\begin{equation}\label{eq3}
	q^j_\alpha(x) = \sum_{0\leq \mu_{n}^j\leq L} \alpha_n \psi_{n}^j(x), 
	\quad \alpha \in \C^{D_j}.
\end{equation}
The Weyl law, cf. \eqref{eq:sa8.00}, for the eigenvalues of elliptic 
self-adjoint second order differential operators yields 
\begin{equation}\label{eq3.1}
	D_j \asymp L^2 h^{-2}.
\end{equation}
We then know from standard Sobolev estimates, see Proposition \ref{app:prop4} 
and \eqref{eq:sa8.0}, that for $j=1,2$ and $s>1$  
\begin{equation}\label{eq4.1}
	\| q^j_\alpha\|_{H_h^{s}(\C/\Gamma)} \leq \mO_s(1)L^{s}\|\alpha\|_{\C^{D_j}},
	\quad \| q^j_\alpha\|_{L^\infty(\C/\Gamma)} \leq \mO_s(1) h^{-1}L^{s}\|\alpha\|_{\C^{D_j}}. 
\end{equation}
Here, $H_h^{s}(\C/\Gamma)$ is a semiclassical Sobolev space. See Appendix 
\ref{App:MatrixPseudo} for a brief review.  
Constraints on $L$ and $\|\alpha\|_{\C^{D_j}}$ will be specified later on. 
The potentials \eqref{eq3} give rise to a tunneling potential 
\begin{equation}\label{eq5}
	Q_\gamma = 
	\begin{pmatrix}
		0 & q^1_\alpha(x) \\
		-q^2_\beta(x) & 0 \\
	\end{pmatrix}
	:L^2(\C/\Gamma;\C^2)\to L^2(\C/\Gamma;\C^2),  
\end{equation}
with $\gamma=(\alpha,\beta) \in \C^{D_1}\times \C^{D_2} \simeq \C^{D}$, 
$D=D_1+D_2$. By \eqref{eq4.1} and standard Sobolev estimates, cf. 
Proposition \ref{app:prop4},
\begin{equation*}
	\|Q_\gamma\|_{L^2(\C/\Gamma;\C^2)\to L^2(\C/\Gamma;\C^2)}
	\leq 
	\|Q_\gamma\|_{L^\infty(\C/\Gamma;\C^{2\times 2})} 
	= 
	\mO_s(1) h^{-1}L^{s}\|\gamma\|_{\C^{D}}.
\end{equation*}
Let $\gamma$ be real or complex random vector in $\R^{D}$ or 
$\C^{D}$, respectively, with joint probability law
\begin{equation}\label{eq:probaM}
	\gamma_*(d \prob) =  
	Z_h^{-1}\, \mathbf{1}_{B(0,R) }(\gamma) \,\e^{\phi(\gamma)} L(d\gamma),
\end{equation} 
where $Z_h>0$ is a normalization constant, $B(0,R)$ is either the real 
ball $\Subset \R^{D}$ or the complex ball $\Subset \C^{D}$ of radius 
$R \gg 1$, and centered at $0$, $L(d\gamma)$ denotes the Lebesgue
measure on either $\R^{D}$ or $\C^{D}$ and $\phi \in C^1$ with 
\begin{equation}\label{eq:probaM2}
	\| \nabla_\gamma \phi \| = \mO(h^{- \kappa_4}) 
\end{equation}
uniformly, for an arbitrary but fixed $\kappa_4\geq 0$. 
\\
\par 
Fix $s>1$ and $\varepsilon \in ]0,s-1[$, and for 
$C>0$ large enough, we fix 
\begin{equation}
	L = C h^{\frac{5}{s-1-\varepsilon}}, \quad 
	Ch^{-2-\frac{5s}{s-1-\varepsilon}} \leq R \leq Ch^{-\kappa_3}, 
	\quad 
	\kappa_3\geq 2+ \frac{5(1+\varepsilon)}{s-1-\varepsilon}, 
\end{equation}
and
\begin{equation}
	\kappa_1= 1+ \frac{5s}{s-1-\varepsilon} + \kappa_3, 
	\quad
	\kappa_5 = \kappa_3+\kappa_4 +2 + \frac{10}{s-1-\varepsilon}.
\end{equation}
\begin{rem}
	One example for a random vector $\gamma$ with law \eqref{eq:probaM} 
	is a truncated complex or real Gaussian random variables with expectation $0$,  
	and uniformly bounded covariances. The covariance matrix $\Sigma \in \C^{D\times D}$ 
	then satisfies $\Sigma>0$ and $\|\Sigma\|=\mO(\sqrt{D})$. In this case 
	$\phi(\gamma) = -\langle \Sigma^{-1} \gamma |\gamma\rangle$ 
	and \eqref{eq:probaM2} holds with 
	$\kappa_4 = 1+ \frac{5}{s-1-\varepsilon}+\kappa_3$.
\end{rem}
\subsection{Absence of small magic angels with overwhelming probability}
Let $\tau_0\in]0,\sqrt{h}]$, let $C_0>0$ be large enough, and 
consider the perturbed operator 
\begin{equation}\label{eq6}
	D^\delta_h:= D_h + \delta h^{\kappa_1} Q_\gamma, 
	\quad \delta = \frac{\tau_0 h^{\kappa_1+2}}{C_0},
\end{equation}
which, equipped with the domain $H^1_h(\C/\Gamma;\C^2)$, is 
a closed densely defined operator $L^2 (\C/\Gamma;\C^2)\to 
L^2 (\C/\Gamma;\C^2)$. 
\par 
Fix $z\in \C$. Then  
\begin{equation*}
	S := (D_h-z)^*(D_h-z)
\end{equation*}
is selfadjoint on $H_h^2(\C/\Gamma;\C^2)$. Since $S$ is a 
positive selfadjoint operator with domain that injects compactly 
into $L^2(\C/\Gamma;\C^2)$, 
it follows that $S$ has compact resolvent, and therefore its spectrum 
contains only isolated eigenvalues of finite multiplicity. Let 
$N:=N(\tau_0^2)$ be the number of eigenvalues $t_j^2$ of $S$ in $[0,\tau_0^2]$, 
i.e. 
\begin{equation*}
	0\leq t_1^2 \leq \dots \leq t_N^2 \leq \tau_0^2 
	< t_{N+1}^2 \leq \dots,  
\end{equation*}
with $t_j(D_h-z)=t_j \geq 0$. In \eqref{fa:eq13} we prove an upper bound 
on $N$ which in combination with the corresponding lower bound on $N$ proven 
in \cite[Theorem 5]{BEWZ22} gives 
\begin{equation*}
	\frac{1}{C} h^{-1} \leq N(\tau_0^2) \leq C h^{-1}, 
\end{equation*}
for some $C>0$.
\begin{thm}\label{thm1}
	Let $0 < \varepsilon <  \exp( -C_1 h^{-2} \varepsilon_0(h))$, $C_1>1$ 
	sufficiently large, let $N=N(\tau_0^2)$, and let 
	\begin{equation*}
		\varepsilon_0(h) :=  C(\log \tau_0^{-1} + (\log h^{-1})^2)(h+h^2\log h^{-1}).
	\end{equation*}
	with $C>0$ large enough. Then, for $h>0$ small enough
	\begin{equation*}
		\prob \left( 
			 t_1(D^\delta_h-z) \geq \frac{\varepsilon}{8(C\tau_0)^{N-1}}
			\right)
		\geq 1- C h^{-\kappa_5}\varepsilon_0(h)
		\exp\left( - \frac{h^2}{C\varepsilon_0(h)} \log \frac{1}{\varepsilon }
		\right).
	\end{equation*} 
\end{thm}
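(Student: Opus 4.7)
The plan is to adapt the Hager--Sj\"ostrand Grushin-problem strategy for non-selfadjoint operators perturbed by a random potential. At the fixed $z$, let $e_1,\dots,e_N$ and $f_1,\dots,f_N$ be $L^2$-orthonormal right and left singular vectors of $D_h-z$ corresponding to $t_1,\dots,t_N\in[0,\tau_0]$, and set
\begin{equation*}
R_-:\C^N\to L^2,\quad R_-\alpha=\sum_j\alpha_j f_j,\qquad R_+:L^2\to\C^N,\quad (R_+u)_j=\langle u,e_j\rangle.
\end{equation*}
Then $\mathcal P_0=\begin{pmatrix} D_h-z & R_- \\ R_+ & 0\end{pmatrix}$ is boundedly invertible, with inverse whose lower-right block is $E_{-+}=\operatorname{diag}(-t_1,\dots,-t_N)$. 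Since $\|\delta h^{\kappa_1}Q_\gamma\|_{L^2\to L^2}\ll 1$ by the choice of constants and \eqref{eq4.1}, replacing $D_h$ with $D_h^\delta$ keeps the Grushin problem invertible and Schur-complement expansion yields
\begin{equation*}
E_{-+}^\delta=E_{-+}-\delta h^{\kappa_1}R_+ Q_\gamma R_-+\mO\bigl(\delta^2 h^{2\kappa_1}\|Q_\gamma\|^2\bigr),\qquad t_1(D_h^\delta-z)\asymp t_1(E_{-+}^\delta).
\end{equation*}
Because $\|E_{-+}^\delta\|\le C\tau_0$, all singular values of $E_{-+}^\delta$ are bounded by $C\tau_0$, whence
\begin{equation*}
t_1(E_{-+}^\delta)=|\det E_{-+}^\delta|\,(t_2(E_{-+}^\delta)\cdots t_N(E_{-+}^\delta))^{-1}\ge |\det E_{-+}^\delta|\,(C\tau_0)^{-(N-1)}.
\end{equation*}
The theorem thus reduces to the small-ball bound $\prob(|\det E_{-+}^\delta|<\varepsilon/8)\le Ch^{-\kappa_5}\varepsilon_0(h)\exp(-\tfrac{h^2}{C\varepsilon_0(h)}\log\tfrac{1}{\varepsilon})$.

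\textbf{Anti-concentration for the determinant.} Viewed as a function of $\gamma\in\C^D$, the determinant $F(\gamma)=\det E_{-+}^\delta(\gamma)$ is an analytic polynomial of degree $N\asymp h^{-1}$. The central computation is to exhibit a direction $\gamma_0$ for which $|F(\gamma_0)|$ is quantitatively large; equivalently, to show that the linear map $\gamma\mapsto R_+ Q_\gamma R_-\in\C^{N\times N}$ is surjective with right-inverse of polynomial norm in $h^{-1}$. The abundant dimension count $D\asymp L^2 h^{-2}\gg N^2$ together with Sobolev estimates for the singular vectors $e_j,f_j$ (obtained from the ellipticity of $D_h$ modulo the $N$-dimensional subspace spanned by $\{e_j\}$ and the exponential-decay/concentration estimates for quasimodes from \cite{BEWZ22}) make this realizable through a truncated expansion of the profile matrix $[\langle \psi\, e_k,f_j\rangle]$ in the eigenbasis $\{\psi_n^j\}_{\mu_n^j\le L}$. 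Feeding this non-degeneracy into a Remez/Carbery--Wright-type anti-concentration estimate for the measure \eqref{eq:probaM}---whose log-density has gradient $\mO(h^{-\kappa_4})$, contributing the $h^{-\kappa_5}$ prefactor after accounting for the $R\le Ch^{-\kappa_3}$ truncation---yields the claimed exponential small-ball inequality. The factor $\varepsilon_0(h)\sim h\log h^{-1}$ in the exponent arises from the logarithmic loss incurred when the degree-$N$ polynomial growth of $F$ is converted into an anti-concentration rate on the ball of radius $R$.

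\textbf{Main obstacle.} The hardest step is the quantitative non-degeneracy of $\gamma\mapsto R_+ Q_\gamma R_-$: one must show that low-frequency perturbations can simultaneously move all $N$ small singular values of $D_h-z$ along $N^2$ linearly independent matrix directions with $h$-polynomial efficiency. This demands (i) $H_h^s$-smoothness and semiclassical Sobolev bounds for the singular vectors $e_j,f_j$ at scale $L$, which rest on the ellipticity of $D_h$ complemented by the exponential quasimode-decay estimates in \cite{BEWZ22}, and (ii) a Weyl-type completeness statement for the truncated eigenbasis $\{\psi_n^j\}_{\mu_n^j\le L}$ guaranteeing that any prescribed $N\times N$ profile matrix is realized, up to an $h$-polynomially controlled remainder, by a coefficient vector $\gamma\in\C^D$ of $h$-polynomial norm. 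Once these inputs are secured, combining them with the Grushin reduction and the anti-concentration estimate delivers the stated probabilistic bound.
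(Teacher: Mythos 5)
Your proposal captures the correct skeleton — Grushin reduction, lower-bounding $t_1(D^\delta_h-z)$ by $|\det E_{-+}^\delta|/(C\tau_0)^{N-1}$, and then a small-ball estimate for the holomorphic function $\gamma\mapsto\det E_{-+}^\delta(\gamma)$ — but it glosses over the parts of the argument that actually carry the weight, and in a couple of places the route you sketch would not close.

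First, you set up the Grushin problem with $R_-\alpha=\sum_j\alpha_j f_j$ using the true left singular vectors $f_j$. The paper instead uses $\wt f_j=\mathcal G e_j$, where $\mathcal G=\begin{pmatrix}0&G\\-G&0\end{pmatrix}$, $Gu=\bar u$, is the antilinear involution coming from the chiral structure of the operator; since $D_h^*=\mathcal G D_h\mathcal G$, the vectors $\mathcal G e_j$ form an equally valid orthonormal basis of the span of the $f_j$. This choice is not cosmetic: it is exactly what makes the coefficients $(Q e_n\mid \mathcal G e_m)$ a quadratic form in the $e_j$ alone. With the chirally symmetric off-diagonal $Q$ and the Dirac potential $\widehat Q$ of \eqref{eq:DiracPotential}, the matrix $M_{\widehat Q}$ becomes $E E^t$ for the $N\times N$ matrix $E$ of sampled components $\vec e_{j_m}(a_m)$, and Lemma~\ref{lem:DetLowerBound} gives $|\det M_{\widehat Q}|\geq N!/(2|\C/\Gamma|)^N$ via a greedy Gram--Schmidt argument. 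Your proposal to show that $\gamma\mapsto R_+ Q_\gamma R_-$ is ``surjective with polynomial-norm right-inverse'' using the dimension count $D\gg N^2$ does not by itself produce this quantitative non-degeneracy, and without the $\mathcal G$-symmetry the resulting matrix has no product structure to exploit.

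Second, and more seriously, your proposal assumes a single perturbation $Q_{\gamma_0}$ suffices to make $|\det E_{-+}^\delta(\gamma_0)|$ large. This is not the case here. The determinant bound from Lemma~\ref{lem:DetLowerBound} controls $s_k(M_Q)$ well only for $k\lesssim N/2$; for $k$ near $N$ the lower bound $(h^2/N)^{(k-1)/(N-k+1)}$ in \eqref{eq:DiracPotential5}/\eqref{eq:DiracPotential13} degenerates. A single perturbation therefore lifts only the top half of the small singular values $t_\nu$. The paper's key structural device is the iteration (Proposition~\ref{prop:p1} fed into the scheme \eqref{lss:eq20}--\eqref{lss:eq29}): one applies the perturbation, shrinks the cutoff $\tau_0^{(k)}=\tau_0 h^{\kappa_2 k}$, decreases the count $N^{(k+1)}=\lfloor(1-\theta)N^{(k)}\rfloor$, and repeats $k_0=\mO(\log N)=\mO(\log h^{-1})$ times to finally push up $t_1$. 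This is precisely where the $(\log h^{-1})^2$ and $\log\tau_0^{-1}$ factors in $\varepsilon_0(h)$ come from, via the product estimate \eqref{plb:eq2}; your sketch neither produces this iteration nor explains those logarithms.

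Two lesser points: the $H^{s+1}_h$ bounds on linear combinations of the $e_j$ come from the Helffer--Sj\"ostrand functional calculus for $S_\delta$ (Proposition~\ref{fa:prop7}, following the scheme of \cite[Section 16.5]{Sj19}), not from the quasimode-decay estimates of \cite{BEWZ22}. And the anti-concentration step is the complex-analytic argument of \cite[Section 8]{Sj09} for the holomorphic function $\gamma\mapsto\det E_{-+}^\delta$ against the measure \eqref{eq:probaM}, which is controlled only via $\|\nabla\phi\|=\mO(h^{-\kappa_4})$ and a truncation to $B(0,R)$; this is not a log-concave density, so a Carbery--Wright estimate does not directly apply. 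These last two are correctable misattributions, but the absence of the chiral symmetry construction and of the iteration argument are genuine gaps.
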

Let us give some comments on this result: The spectra of non-selfadjoint 
operators -- as opposed to the spectra of their selfadjoint counterparts -- 
can be highly sensitive to even small perturbations. Early works studying 
this phenomenon can be traced back to works of Trefethen 
\cite{Tr97,TrEm05} in numerical analysis. In the context of semiclassical 
non-selfadjoint pseudo-differential operators spectral instability 
can be very pronounced. We refer the reader to the works by Davies 
\cite{Da99b,Da99}, Zworski \cite{Zw01}, Dencker-Zworski-Sj\"ostrand 
\cite{NSjZw04} and Pravda Starov \cite{Pr06,Pr08} for a characterization 
of the zone of spectral instability in the complex plane. 
\par
In \cite{Ha06,Sj09,Sj10a} Hager and Sj\"ostrand proved a result similar to 
Theorem \ref{thm1} for semiclassical elliptic differential operators $P(x,hD_x)$ 
with principal symbol $p(x,\xi)$ subject to small perturbations of the 
form \eqref{eq3}. However, in their work one requires fundamentally that the 
principal symbol satisfies 
the symmetry $p(x,\xi)=p(x,-\xi)$. In the present work we can circumvent 
this assumption due to the $2\times2$ matrix structure of $D_h$ which 
allows for the use of random tunneling potentials of the form \eqref{eq5}. 
\par
Another stark difference to these previous works is that the operator $D_h$ 
is somewhat pathological from a spectral theory point of view. Indeed, for 
certain discrete values of $h$ the spectrum of operator $D_h$ can be the 
entire complex plane $\C$ (precisely when $h$ is proportional 
to a magic angle). This possibility has been excluded in previous 
works by \cite{Ha06,Sj09,Sj10a}. However, the example of $D_h$ shows that 
operators exhibiting such behavior are physically relevant. 
\\
\par
Since $D_h^\delta$ \eqref{eq6} is Fredholm of index $0$, we immediately 
deduce from Theorem \ref{thm1} the following 
\begin{corollary}\label{cor1}
	Under the assumptions of Theorem \ref{thm1}, and $h>0$ small enough, 
	the spectrum of $D_h^\delta$ is discrete with probability 
	\begin{equation*}
		\geq 1- C h^{-\kappa_5}\varepsilon_0(h)
		\exp\left( - \frac{h^2}{C\varepsilon_0(h)} \log \frac{1}{\varepsilon }
		\right).
	\end{equation*} 
\end{corollary}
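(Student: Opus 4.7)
The plan is to apply Theorem \ref{thm1} at a single, arbitrary point $z \in \C$ and then deduce discreteness of the full spectrum of $D_h^\delta$ from its Fredholm-index-zero property, so that the corollary reduces essentially to the qualitative statement that ``spectrum equals $\C$'' is incompatible with an invertible point.

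First I would fix any $z \in \C$, say $z = 0$, and invoke Theorem \ref{thm1}. On the event
\[
	\left\{ t_1(D_h^\delta - z) \geq \frac{\varepsilon}{8(C\tau_0)^{N-1}} \right\},
\]
which has probability at least $1 - Ch^{-\kappa_5}\varepsilon_0(h)\exp\bigl(-h^2 \log(1/\varepsilon)/(C\varepsilon_0(h))\bigr)$, the lowest singular value of $D_h^\delta - z$ is strictly positive. In particular, $D_h^\delta - z : H_h^1(\C/\Gamma;\C^2) \to L^2(\C/\Gamma;\C^2)$ has trivial kernel on the favorable event.

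Next I would use that $D_h^\delta$ is Fredholm of index $0$, as recalled in the paragraph preceding the corollary (it is a bounded off-diagonal perturbation of the Fredholm index-zero operator $D_h$ from \cite[Proposition 2.3]{BEWZ22}). Index zero together with trivial kernel forces $D_h^\delta - z$ to be bijective, hence to admit a bounded inverse $(D_h^\delta - z)^{-1} : L^2(\C/\Gamma;\C^2) \to H_h^1(\C/\Gamma;\C^2)$. Composing with the compact Rellich embedding $H_h^1(\C/\Gamma;\C^2) \hookrightarrow L^2(\C/\Gamma;\C^2)$ on the compact quotient $\C/\Gamma$ yields a compact resolvent at $z$.

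Finally, a closed densely defined operator on $L^2$ whose resolvent set is nonempty and whose resolvent at one point is compact has purely discrete spectrum: by the resolvent identity, compactness of the resolvent propagates to every point of the resolvent set, and analytic Fredholm theory applied to $z \mapsto (D_h^\delta - z)^{-1}$ shows that $\spec(D_h^\delta)$ is a discrete subset of $\C$ consisting of isolated eigenvalues of finite algebraic multiplicity. The probability bound is then inherited verbatim from Theorem \ref{thm1}. There is essentially no substantive obstacle here; the only small care needed is the verification that $D_h^\delta$ inherits Fredholm index zero from $D_h$, and that point has already been recorded in the preceding paragraph of the paper.
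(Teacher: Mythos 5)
Your argument is correct and is precisely what the paper has in mind when it writes ``Since $D_h^\delta$ is Fredholm of index $0$, we immediately deduce from Theorem \ref{thm1} the following.'' You have simply unfolded that one-line deduction: Theorem \ref{thm1} makes the fixed $z$ a point of the resolvent set with the stated probability, index zero plus injectivity gives bijectivity, the compact Rellich embedding of $H^1_h(\C/\Gamma;\C^2)$ into $L^2(\C/\Gamma;\C^2)$ gives compactness of the resolvent, and the standard resolvent-identity/analytic-Fredholm argument upgrades a single resolvent point to a discrete spectrum.
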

Consider now the perturbed chiral operator 
\begin{equation}\label{eq:Hk2}
	H_{\text{chiral}}^\delta 
	= \begin{pmatrix} 0 & D_h^\delta \\ 
		(D_h^\delta)^* & 0 
	\end{pmatrix}.
\end{equation}
In view of \eqref{eq:equiv_cond} and Corollary \ref{cor1} we get the 
the following 
\begin{thm}\label{thm2}
	Under the assumptions of Theorem \ref{thm1}, and $h>0$ small enough, 
	$H_{\text{chiral}}^\delta$ does not exhibit a flat band at energy $0$, 
	or equivalently $h>0$ small enough is not a magic angle, with probability 
	\begin{equation*}
		\geq 1- C h^{-\kappa_5}\varepsilon_0(h)
		\exp\left( - \frac{h^2}{C\varepsilon_0(h)} \log \frac{1}{\varepsilon }
		\right).
	\end{equation*} 
\end{thm}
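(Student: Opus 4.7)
The plan is to deduce Theorem \ref{thm2} as a direct consequence of Corollary \ref{cor1} together with a perturbed version of the equivalence \eqref{eq:equiv_cond}. The main point is that everything used to establish \eqref{eq:equiv_cond} for the unperturbed operator continues to hold for $D_h^\delta$.

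First, I would check that the Bloch-Floquet decomposition still applies to $H_{\text{chiral}}^\delta$. The key observation is that $Q_\gamma$ is $\Gamma$-periodic: by construction \eqref{eq3} it is a finite linear combination of eigenfunctions $\psi_n^j$ of the elliptic operators $P_j$ on the quotient $\C/\Gamma$, and hence defines a smooth function on $\C/\Gamma$ which lifts to a $\Gamma$-periodic potential on $\C$. Consequently, the construction from \cite[Sec.~2.3]{BEWZ22} applies verbatim to $D_h^\delta$ and yields the family
\begin{equation*}
H_{\text{chiral}}^\delta(k)
= \begin{pmatrix} 0 & D_h^\delta + hk \\
	(D_h^\delta)^* + h\bar k & 0
\end{pmatrix}
\end{equation*}
on $L^2(\C/\Gamma;\C^4)$ with eigenvalues $E_n^\delta(k)$ satisfying the same chiral symmetry $E_{-n}^\delta(k) = -E_n^\delta(k)$.

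Next, I would establish the analogue of the equivalence \eqref{eq:equiv_cond} for the perturbed operator, namely
\begin{equation*}
E_1^\delta(k) \equiv 0 \Longleftrightarrow \operatorname{Spec}(D_h^\delta) = \C.
\end{equation*}
For this, it suffices to note that $D_h^\delta : H_h^1(\C/\Gamma;\C^2) \to L^2(\C/\Gamma;\C^2)$ is Fredholm of index $0$. Indeed, $D_h$ is Fredholm of index $0$ by \cite[Proposition 2.3]{BEWZ22}, and multiplication by $\delta h^{\kappa_1} Q_\gamma$ maps $H_h^1$ into $L^2$ compactly by Rellich–Kondrachov (since $Q_\gamma$ is smooth and the embedding $H^1_h \hookrightarrow L^2$ is compact on the compact manifold $\C/\Gamma$). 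Thus $D_h^\delta$ is a compact perturbation of a Fredholm operator of index $0$, hence itself Fredholm of index $0$. The proof of \eqref{eq:equiv_cond} in \cite{BEWZ22} then goes through unchanged: a flat band at $0$ forces $D_h^\delta - z$ to be non-invertible for every $z \in \C$, i.e.\ $\operatorname{Spec}(D_h^\delta) = \C$, and conversely.

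Finally, I would invoke Corollary \ref{cor1}: with the probability stated in Theorem \ref{thm1}, the spectrum of $D_h^\delta$ is discrete, hence a proper subset of $\C$. By the equivalence just established, this precludes a flat band of $H_{\text{chiral}}^\delta$ at energy zero, i.e.\ $h$ is not a magic angle for the perturbed Hamiltonian. The probability bound is inherited unchanged from Corollary \ref{cor1}. There is no real obstacle here; the content is entirely in Theorem \ref{thm1}, and the only nontrivial verification is the preservation of the index of $D_h$ under the random perturbation, which follows from compactness.
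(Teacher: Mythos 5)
Your proof is correct and follows essentially the same route as the paper, which deduces Theorem \ref{thm2} directly from the equivalence \eqref{eq:equiv_cond} and Corollary \ref{cor1}. You supply two details the paper leaves implicit, namely that $Q_\gamma$ is $\Gamma$-periodic (so Bloch–Floquet still applies) and that $D_h^\delta$ remains Fredholm of index $0$; the paper actually obtains the latter from the smallness of $\delta\|Q\|_{L^2\to L^2}$ rather than your compactness argument, but both reasonings are standard and equally valid.
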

We may take for instance $\varepsilon = \exp( -1/(Ch) -C_1 h^{-2} \varepsilon_0(h))$, 
with $C>0$ large enough, so that $h>0$ small enough is not a magic angle with 
probability $\geq 1 - C\exp(-1/(Ch))$.
\begin{figure}
\includegraphics[width=8cm]{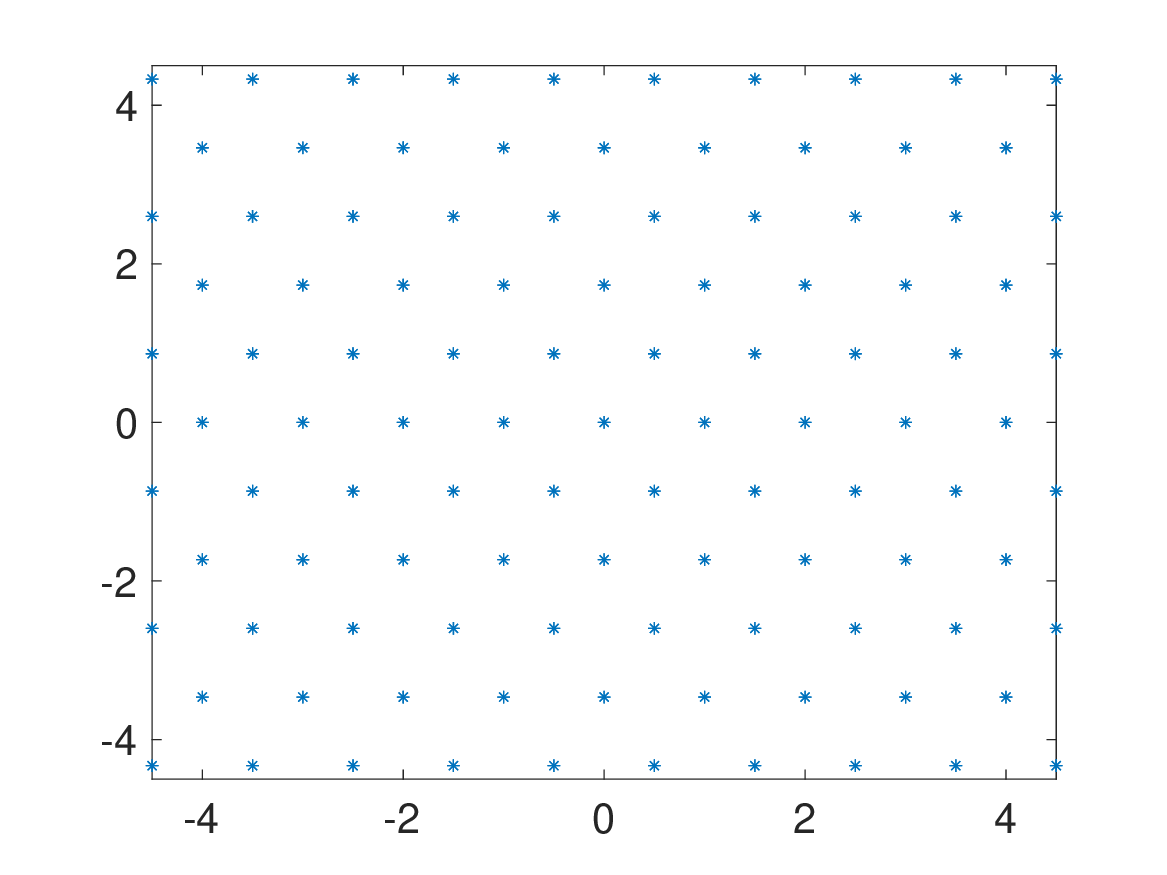} 
\includegraphics[width=8cm]{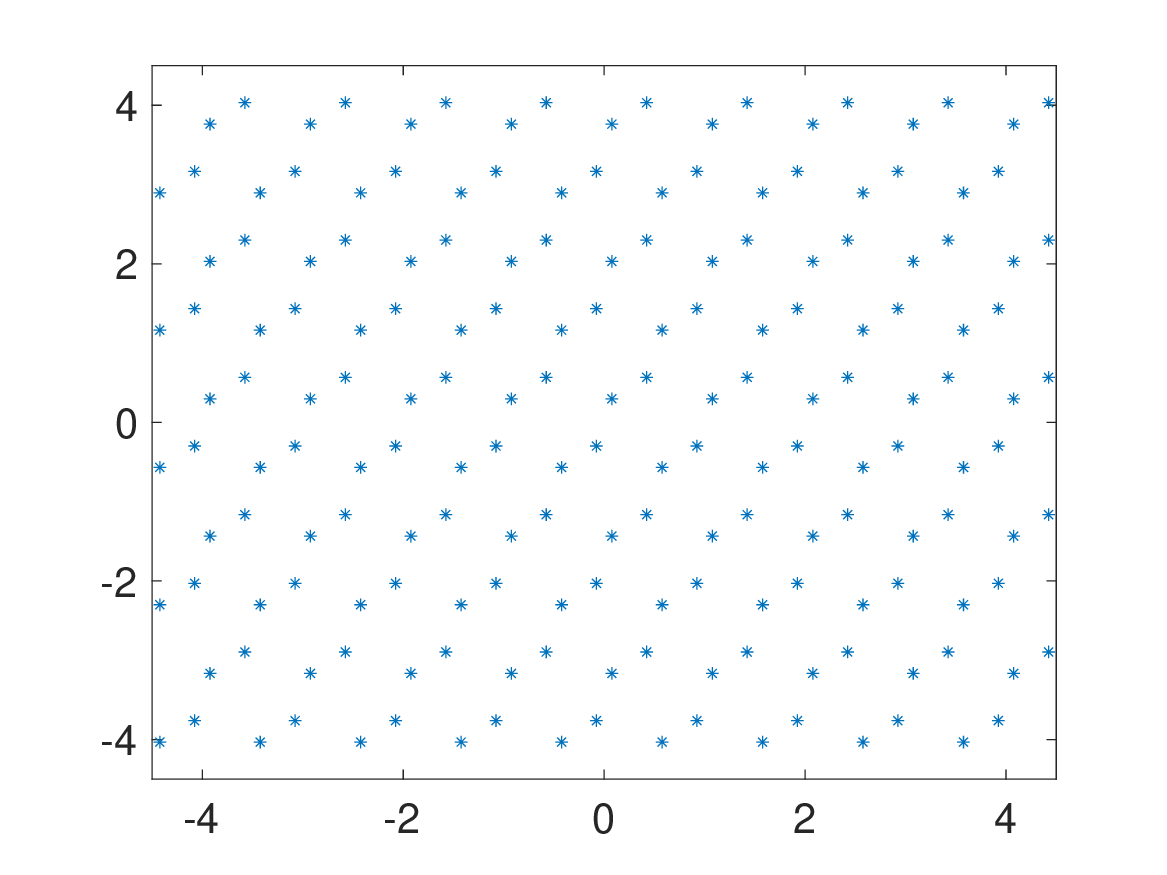} \\
\includegraphics[width=8cm]{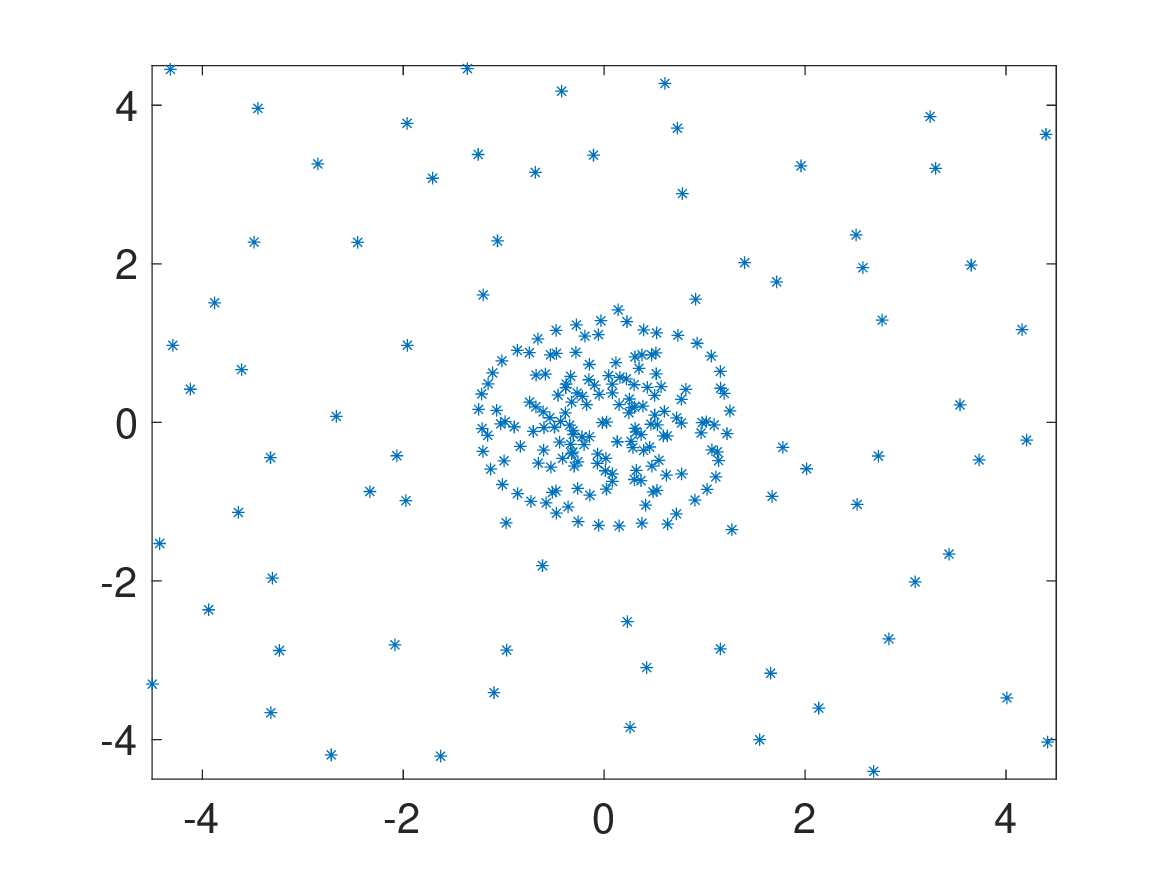} 
\includegraphics[width=8cm]{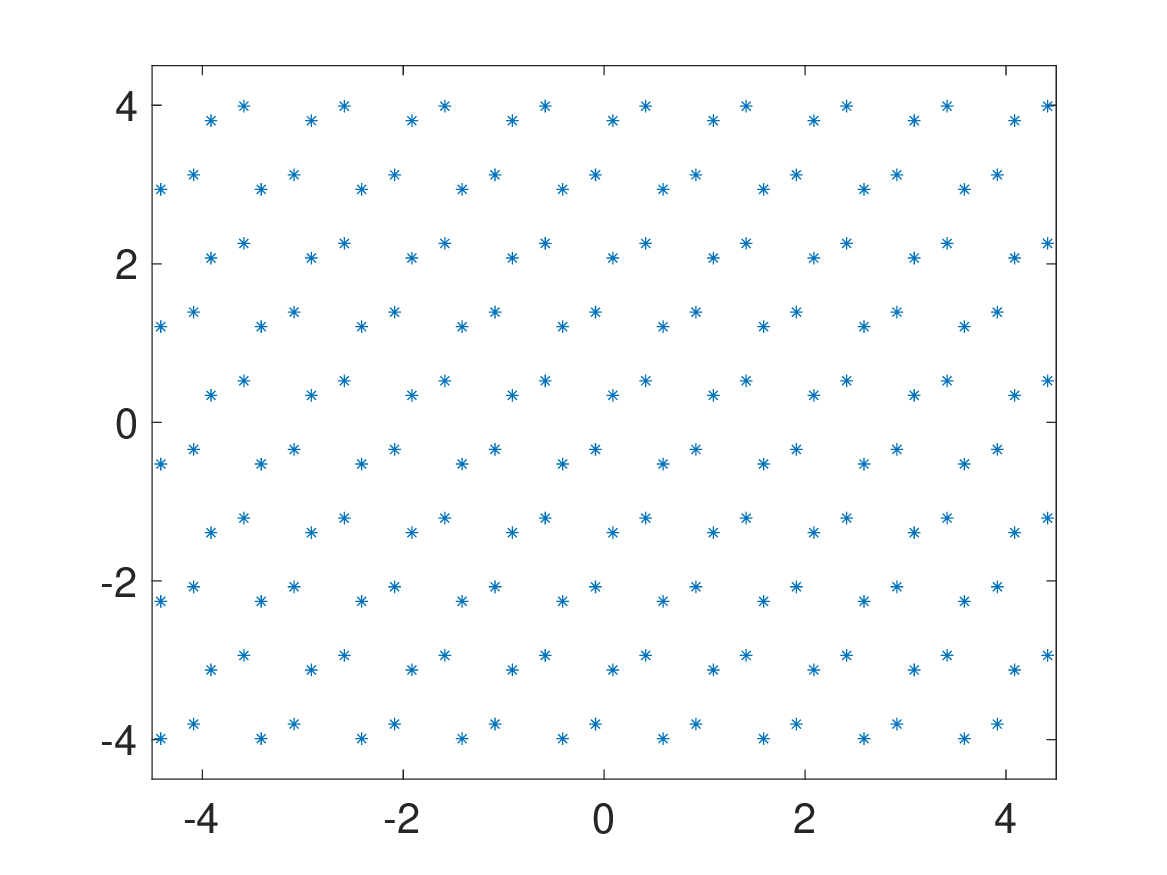}
\caption{Top left: Away from magic angles $h$, $\operatorname{Spec}(D_h)=h\Gamma^*$ 
with two-fold multiplicity.  Top right: Spectrum of randomly perturbed operator 
$D_h$ away from magic angles $h$. Bottom left: Spectrum of $D_h$ numerically 
computed at largest magic angle $h$. In reality the spectrum is the entire 
complex plane.  Bottom right: Spectrum of random perturbation of $D_h$ for 
$h$ the largest magic angle.}\label{fig1}
\end{figure}
\subsection{Eigenvalue asymptotics -- a probabilistic Weyl law}
The semiclassical principal symbol of $D_h$ is 
\begin{equation}\label{eq1.2}
	d (x,\xi)= 
	\begin{pmatrix}
		\xi_1+i\xi_2 & U(x) \\
		U(-x) & \xi_1+i\xi_2 \\
	\end{pmatrix}
	\in S^1(T^*(\C/\Gamma);\mathrm{Hom}(\C^2,\C^2)),
\end{equation}
see Appendix \ref{App:MatrixPseudo} for a definition of this symbol class.
\begin{thm}\label{thm3}
	Let $\Omega\Subset\C$ be relatively compact with Lipschitz boundary. 
	Let $\lambda_{j}(x,\xi)$, $(x,\xi)\in T^*(\C/\Gamma)$, $j=1,2$ denote the two 
	eigenvalues of $d(x,\xi)$. Then, under the assumptions of Theorem 
	\ref{thm1}, and for $h>0$ small enough,
	\begin{equation}\label{thm3:eq1}
		\#( \sigma(D_h^\delta) \cap \Omega) 
		= \frac{1}{(2\pi h)^2}\sum_1^2 \iint_{(\lambda_j)^{-1}(\Omega)} dxd\xi 
		+\mO(h^{-1})
	\end{equation}
	with probability 
	\begin{equation*}
		\geq 1- C h^{-\kappa_5}\varepsilon_0(h)
		\exp\left( - \frac{h^2}{C\varepsilon_0(h)} \log \frac{1}{\varepsilon }
		\right).
	\end{equation*} 
\end{thm}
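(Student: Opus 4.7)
The plan is to follow the Hager--Sj\"ostrand strategy for probabilistic Weyl laws for non-selfadjoint semiclassical operators \cite{Ha06,Sj09,Sj10a}, adapted to the matrix-valued, periodic setting of $D_h$. The leading-order term in \eqref{thm3:eq1} has a purely symbolic origin: since the singular values of $d(x,\xi)-z$ equal $|\lambda_1(x,\xi)-z|$ and $|\lambda_2(x,\xi)-z|$, the selfadjoint Weyl law for $(D_h-z)^*(D_h-z)$ (the same computation underlying \eqref{fa:eq13}) yields, in a regularized sense restricted to singular values $\le \tau_0$,
\[
\sum_j \log s_j(D_h-z)
= \frac{1}{(2\pi h)^2}\sum_{j=1}^2 \iint \log|\lambda_j(x,\xi)-z|\,dxd\xi
+ \mO(h^{-1}),
\]
and the distributional identity $\frac{1}{2\pi}\Delta_z\log|\cdot-z|=\delta_z$, when integrated against $\Delta_z\chi$ for $\chi$ a smoothed $\mathbf{1}_\Omega$, reproduces exactly $(2\pi h)^{-2}\sum_j\iint_{\lambda_j^{-1}(\Omega)}dxd\xi$.

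To link this computation to $\#(\sigma(D_h^\delta)\cap\Omega)$, I would introduce a Grushin problem
\[
\cP(z)=\begin{pmatrix} D_h^\delta - z & R_- \\ R_+ & 0 \end{pmatrix}
\colon H_h^1(\C/\Gamma;\C^2)\oplus \C^N \to L^2(\C/\Gamma;\C^2)\oplus \C^N,
\]
with $N=N(\tau_0^2)\asymp h^{-1}$ and $R_\pm$ built from the $N$ smallest singular vectors of $D_h-z$. Invertibility on a discrete grid $\{z_k\}$ of spacing $\sim h^{N_0}$ covering a neighbourhood of $\overline\Omega$ follows from Theorem \ref{thm1}, combined with a union bound polynomial in $h^{-1}$ that is absorbed into the exponentially small failure probability in the statement. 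The resulting effective Hamiltonian $E_{-+}(z)\in \C^{N\times N}$ has zeros (with multiplicity) exactly at the eigenvalues of $D_h^\delta$, so that
\[
\#(\sigma(D_h^\delta)\cap\Omega)
=\frac{1}{2\pi}\int \Delta_z\chi(z)\, u(z)\,dm(z),
\quad u(z):=\log|\det E_{-+}(z)|,
\]
for a cut-off $\chi\in C_c^\infty$ approximating $\mathbf{1}_\Omega$ at scale $h$ (needed to compensate for the mere Lipschitz regularity of $\partial\Omega$; the resulting boundary contribution will itself be of size $\mO(h^{-1})$).

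The crux is proving matching $\mO(h^{-1})$-accurate upper and lower bounds on $u(z)$. The upper bound is deterministic: from $\log|\det E_{-+}(z)|=\sum_{j=1}^N \log s_j(E_{-+}(z))$, combined with the Grushin comparison $s_j(E_{-+}(z))\asymp t_j(D_h^\delta-z)$ in increasing order, the smallness of $\delta h^{\kappa_1}\|Q_\gamma\|\to 0$, and Weyl's interlacing inequality, one gets $u(z)\le \sum_j\log t_j(D_h-z)+\mO(1)$. The matching lower bound at each grid point $z_k$ is where the probabilistic input enters: Theorem \ref{thm1} ensures $t_1(D_h^\delta-z_k)\ge \varepsilon/(8(C\tau_0)^{N-1})$, which together with the deterministic control of $t_2,\ldots,t_N$ gives $u(z_k)\ge \sum_j \log t_j(D_h-z_k)-\mO(h^{-1})\log(1/\varepsilon)$. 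A subharmonic interpolation argument (Cartan's estimate, cf.\ \cite{Sj09}) then propagates the grid-wise lower bound to almost every $z$ in $\supp\chi$. Substituting both bounds into the counting identity and using the symbolic computation of the first paragraph yields \eqref{thm3:eq1}, the $\mO(h^{-1})$ remainder absorbing the gap between bounds, the exceptional interpolation set, and the boundary shell. The main obstacle is this subharmonic interpolation step: transferring a probabilistic bound available only at a grid of spacing $h^{N_0}$ into a pointwise $\mO(h^{-1})$-quality estimate, while preserving the prescribed exceptional probability, requires a delicate balance between the grid spacing, the cut-off scale in $\chi$, the total mass $\asymp h^{-2}$ of $\Delta_z u$, and the parameter $\varepsilon$ in Theorem \ref{thm1}.
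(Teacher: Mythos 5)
Your proposal follows the standard Hager--Sj\"ostrand route for probabilistic Weyl laws: zero--counting via $u(z)=\log|\det E_{-+}(z)|$, deterministic upper bounds, probabilistic lower bounds on a grid of spacing $\sim h^{N_0}$ supplied by Theorem~\ref{thm1}, and subharmonic interpolation to fill in between grid points. That strategy is viable and is indeed what is done in \cite{Ha06,HaSj08,Sj09,Sj10a} for generic non-selfadjoint semiclassical operators, but it is \emph{not} what the paper does here, and it is considerably heavier than necessary.

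The paper's proof of Theorem~\ref{thm3} (Section~\ref{sec:Count}) is a short homotopy argument that exploits the translation invariance of the spectrum rather than any log-determinant machinery. One introduces the family $P_\rho = \wt D_h + \rho(\wt U + \delta h^{\kappa_1}Q)$, with $P_0=\wt D_h$ and $P_1=D_h^\delta$. Each $P_\rho$ is Fredholm of index $0$, and conjugation by $\e^{i\Rea\langle x|\gamma^*\rangle}$ shows $\sigma(P_\rho)=\sigma(P_\rho)+h\gamma^*$ for all $\gamma^*\in\Gamma^*$. By \cite[(2.9)]{BEWZ22}, $\sigma(\wt D_h)=h\Gamma^*$ with multiplicity $2$ per fundamental cell. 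Using compactness of $(\wt D_h-\lambda)^{-1}(\wt U+\delta h^{\kappa_1}Q)$ one shows $\sigma(P_\rho)$ is discrete for all but at most countably many $\rho$, and on the high-probability event of \eqref{eq:FL} the endpoint $\sigma(P_1)$ is also discrete, so one can choose a continuous path $\rho(t)$ along which $\sigma(P_{\rho(t)})$ stays discrete. Continuity of eigenvalues then forces $\#(\sigma(P_{\rho(t)})\cap h\mathfrak{C}_{\gamma^*})=2$ for all $t$. Counting how many cells $h\mathfrak{C}_{\gamma^*}$ are needed to cover $\Omega$ (using the Lipschitz boundary to control the boundary shell by $\mO(h^{-1})$ cells) gives \eqref{eq:ce4}, and a direct computation of $\iint_{\lambda_j^{-1}(\Omega)}dxd\xi = |\C/\Gamma|\cdot|\Omega|$ matches this against the stated symbolic integral. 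No Cartan estimate, no grid union bound, no $\log|\det E_{-+}|$. What you buy with the paper's approach is simplicity and sharpness, purchased with the very specific structure of $D_h$: the exact $h\Gamma^*$-periodicity of the spectrum and the explicit $\sigma(\wt D_h)=h\Gamma^*$, neither of which is available in the generic setting of \cite{Sj09}. Your route would also eventually yield the theorem, but the subharmonic-interpolation step you flag as the main obstacle is precisely the part the paper avoids entirely by using the lattice symmetry.

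One small caution on your first paragraph: the singular values of $d(x,\xi)-z$ are \emph{not} individually $|\lambda_j(x,\xi)-z|$, since $d$ is not normal in general; the identity you want is only $\prod_j s_j(d-z)=|\det(d-z)|=\prod_j|\lambda_j-z|$, which is what enters the log-determinant. Worth stating explicitly to avoid a misleading impression.
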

Since the error term is independent of $\varepsilon>0$, we may take for 
instance $\varepsilon = \exp( -1/(Ch) -C_1 h^{-2} \varepsilon_0(h))$, with 
$C>0$ large enough, so that for $h>0$ small enough \eqref{thm3:eq1} holds 
with probability $\geq 1 - C\exp(-1/(Ch))$. This probabilistic Weyl law 
is numerically illustrated in Figure \ref{fig1}. 
\par
As detailed in Section \ref{sec:Count} we have that 
\begin{equation*}
	\sum_1^2 \iint_{(\lambda_j)^{-1}(\Omega)} dxd\xi 
	 =2|\Omega|\cdot |\C/\Gamma|.
\end{equation*}
\par 
In particular interest to the Theorem \ref{thm3} is the line of research by Hager 
\cite{Ha06}, Bordeaux-Montrieux \cite{BM}, Hager-Sj\"ostrand \cite{HaSj08}, 
Sj\"ostrand \cite{Sj09,Sj10a}, Christiansen-Zworski \cite{ZwChrist10}, 
and by two of the authors \cite{Ol23,Vo20} showing -- roughly speaking -- 
that for a large class of elliptic pseudo-differential operators 
and Berezin-Toeplitz quantizations the presence of a small random 
perturbation leads to a probabilistic complex version of Weyl's 
asymptotics for the eigenvalues. A lower bound on the smallest singular   
value of the perturbed operator with high probability, as in Theorem 
\ref{thm1}, is a fundamental ingredient in these proofs. 
\par 
For related results in the context of random matrix theory we refer 
the reader to the works of \'Sniday \cite{Sn02}, 
Guionnet-Wood-Zeitouni \cite{GuMaZe14,Wo16}, 
Davies-Hager \cite{DaHa09} Basak-Paquette-Zeitouni 
\cite{BPZ18,BPZ18b,BaZe20}, Bordenave-Capitaine \cite{BoCa16} and 
Sj\"ostrand together with one of the authors \cite{SjVo19b,SjVo19a}.
\subsection{Outline of the paper}
In Section \ref{sec:GPandSg} we review the construction of a well-posed 
Grushin problem and we discuss the spectral theory of the singular 
values of matrix-valued (pseudo-)differential operators. In Section 
\ref{sec:SobPert} we discuss counting estimates on the number of small 
singular values of $D_h$ and of Sobolev perturbations of $D_h$. In 
Section \ref{sec:LiftSSV} we present a modified version of a method 
developed Sj\"ostrand \cite{Sj09}, adapted to suit the operator $D_h$, 
and construct a tunneling potential $Q$ such that the smallest singular value of 
$D_h + \delta Q$ is not too small. Then we show that a good lower bound 
holds in fact with high probability. This then proves Theorem \ref{thm1}. 
Finally, in Section \ref{sec:Count}, we prove Theorem \ref{thm3}.
\subsection{Notation and conventions}
In this paper we will use the following set of notations. The notation  
$a \ll b$ means that $Ca \leq  b$ for some sufficiently large constant $C>0$. 
Writing $a \asymp b$ means that there exists a constant $C\geq1$ such that 
$C^{-1} a \leq b \leq Ca$. Writing $f = \mO(h)$ means that there exists 
a constant $C>0$ (independent of $h$) such that $|f| \leq C h$. 
When we want to emphasize that the constant $C>0$ depends on some parameter 
$\varepsilon$, then we write $C_\varepsilon$, or with the above notation 
$\mO_\varepsilon(h)$. 
\par
We use the standard notation of floor: for $x \in \R$ we write 
$\lfloor x\rfloor  := \max\{ n\in \Z; n\leq x\}$. 
\par
For $\psi\in L^2(\C/\Gamma;\C^2)$ we will frequently write 
$\psi = (\psi^1,\psi^2)$, $\psi^1,\psi^2 \in L^2(\C/\Gamma;\C)$. 
Here we equip $L^2(\C/\Gamma;\C)$ with the $L^2$ scalar 
product $(\psi^1| \phi^1) = \int_{\C /\Gamma} \psi^1 \overline{\phi^1} dx$. 
Correspondingly, we equip $\psi\in L^2(\C/\Gamma;\C^2)$ with the scalar 
product 
\begin{equation}\label{eq:SpProduct}
	(\psi| \phi) = (\psi^1|\phi^1 )+ (\psi^2|\phi^2 ), \quad 
	\psi,\phi \in L^2(\C/\Gamma;\C^2). 
\end{equation}
To avoid confusion we will denote the standard sesquilinear 
scalar product on $\C^n$, $n\geq 1$, by $\langle \lambda | \mu \rangle$ 
for $\lambda, \mu \in \C^n$. For different values of $n$ we will 
use the same notation for the scalar product whenever the context 
is clear.  
\\
\\
\textbf{Acknowledgements.} 
S. Becker acknowledges support from the SNF Grant PZ00P2 
216019. I. Oltman was jointly funded by the National Science 
Foundation Graduate Research Fellowship under grant DGE-1650114 
and by grant DMS-1901462. 
M. Vogel was partially funded by the Agence Nationale 
de la Recherche, through the project ADYCT (ANR-20-CE40-0017). 
\section{Review of Grushin problems and singular values}
\label{sec:GPandSg}
\subsection{Singular values}
For a compact operator $A:\mathcal{H}\to\mathcal{H}$ on a complex
separable Hilbert space we define, following \cite{GoKr69}, 
the \emph{singular values} of $A$ to be the decreasing sequence 
\begin{equation}\label{eq:sgv1}
	s_1(A)\geq s_2(A)\geq \dots \searrow 0,
\end{equation}
of all eigenvalues of the compact selfadjoint operator 
$(A^*A)^{1/2}$. The intertwining relations 
\begin{equation*}
	A(A^*A)=(AA^*)A, \quad (A^*A)A^* = A^*(AA^*)
\end{equation*}
imply that the non-vanishing singular values of $A$ and 
$A^*$ coincide. Furthermore, we have min-max characterization, 
see for example \cite[p. 25]{GoKr69}, of the singular values 
\begin{equation*}
	s_j(A) = \inf_{\substack{
		L\subset \mathcal{H}
	}}\sup_{u\in L\backslash\{0\}} 
		\frac{( (A^*A)^{1/2}u|u)}{(u|u)}, 
\end{equation*}
where the infimum is over all closed subspaces 
$L\subset \mathcal{H}$ of codimension $= j-1$. 
\\
\par 
The Ky Fan inequalities \cite[Corollary 2.2]{GoKr69} for two compact operators 
$A,B:\mathcal{H}\to\mathcal{H}$ give that for any $n,m\geq 1$ 
\begin{equation}\label{eq:KyFan}
	\begin{split}
		&s_{n+m-1}(A+B) \leq s_n(A)+s_m(B)\\
		&s_{n+m-1}(AB)\leq s_n(A)s_m(B).
	\end{split}
\end{equation}
\subsection{Singular values for matrix-valued differential operators}
In this section we review the construction of a well-posed Grushin 
problem for pseudo-differential operators presented in \cite[Section 17.A]{Sj19}. 
There, the theory is presented in the case of scalar-valued differential 
operators, however, it can easily be extended (by following the same arguments 
presented in \cite[Section 17.A]{Sj19} with a few modifications) to 
the case of unbounded matrix-valued operators 
\begin{equation*}
	P = P_0 + V, \quad \text{on } L^2(X;\C^n).
\end{equation*}
Here, $P_0\in\Psi_h^m(X;\C^n,\C^n)$ is a classically elliptic semiclassical $n\times n$ 
complex matrix-valued pseudo-differential operator of order $m$ on a compact smooth 
manifold $X$ equipped with a smooth density of integration, and 
$V \in L^{\infty}(X;\C^{n\times n})$. To ease the notation, we 
will write $H^s_h=H^s_h(X;\C^n)$, $s\in\R$. See Appendix \ref{App:MatrixPseudo} 
for a brief review of such operators and spaces. 

\par 
First, we view $P$ as a bounded operator $H^m_h\to H_h^0$. Then, $P^*$ 
is a bounded operator $H^0_h\to H_h^{-m}$, and the operator 
$P^*P:H^m_h\to H^{-m}_h$ satisfies 
\begin{equation*}
	(P^*Pu|u) =\| Pu\|^2 \geq 0, \quad u \in H^m_h.
\end{equation*}

Let $w\in ]0,+\infty[$. Mimicking the proof of \cite[Proposition 17.A.1]{Sj19}, 
we find that, since $P_0$ is classically elliptic, $(P^*P+w):H^m_h\to H^{-m}_h$ 
is bijective with bounded inverse. Furthermore, 
it is selfadjoint as a bounded operator $H^m_h\to H^{-m}_h$. In particular, 
when $P$ is injective, then both conclusions hold for $w=0$. 
\par 
Since the injection $H^s_h \hookrightarrow H^{s'}_h$, $s>s'$, is compact, 
the operator $(P^*P+w)^{-1}:H^{-m}_h\to H^{m}_h$ induces a compact selfadjoint 
operator $H_h^0 \to H_h^0$. Using tacitly that $P_0$ is classically elliptic 
we see that the range of $(P^*P+w)^{-1}|_{H_h^0 \to H_h^0}$ consists of all 
$u\in H^m_h$ such that $Pu\in H^m_h$. The spectral theorem for selfadjoint operators 
yields the existence of an orthonormal basis of $L^2$ composed out 
of eigenfunctions $e_1,e_2,\dots $ in $H_h^0$ such that
\begin{equation*}
	(P^*P + w)^{-1}e_j = \mu_j^2(w)e_j, \quad \mu_j(w) >0,
\end{equation*}
where $\mu_1 \geq \mu_2 \geq \dots \searrow 0$ when $j\to \infty$. 
We see by the above mapping properties that $e_j\in H^m_h$ and $Pe_j\in H^m_h$, 
so 
\begin{equation}\label{eq:genGP0.1}
	(P^*P + w)e_j = \mu_j^{-2}(w)e_j, 
\end{equation}
and 
\begin{equation*}
	P^*Pe_j = (\mu_j^{-2}(w)-w)e_j.
\end{equation*}
Since $t_j^2:=(\mu_j^{-2}(w)-w) = (P^*Pe_j|e_j)\geq 0$, we have found 
an orthonormal basis of eigenfunctions $e_1,e_2,\dots $ $\in H_h^0$ with 
$e_j\in H^m_h$ and $Pe_j\in H^m_h$ such that 
\begin{equation}\label{eq:genGP1}
	P^*Pe_j = t_j^2ej, \quad 0\leq t_j \nearrow +\infty. 
\end{equation}
On the other hand, consider $S:=P^*P:L^2 \to L^2$ as a closed unbounded 
operator with domain $\mathcal{D}(S)=\{u \in H^m_h; P^*P u \in L^2 \}$. 
Since $\mathcal{D}(S) = (P^*P+w)^{-1}L^2$, which is dense in 
$(P^*P+w)^{-1}H^{-m}_h=H^m_h$, we conclude that $S$ is densely defined.
\par 
Next, we check that $S$ is closed and self-adjoint. Indeed, if 
$(S+w)u_j=:v_j\to v$, $u_j\to u$ in $L^2$, then $v_j\to v$ in $H^{-m}_h$ 
and $u_j\to u$ in $H^{m}_h$ since $(P^*P+w)$ is a bounded bijective operator 
from $H^m_h$ to $H^{-m}_h$. In particular $(P^*P + w) u = v$, and since $v\in L^2$ 
we see that $u\in \mathcal{D}(S)$. Self-adjointness follows immediately from the 
fact that $S$ is symmetric and that the range of $S+w$ is the entire space 
$L^2$. 
\par 
Since $S$ is a positive selfadjoint operator with domain that is dense 
in $H^m_h$ which injects compactly into $L^2$, it follows that $S$ has a compact 
resolvent. So it has purely discrete spectrum and we can find an orthonormal 
basis in $L^2$ of associated eigenfunctions. In view of \eqref{eq:genGP1}, 
we find that the $t_j$ are independent of $w$ and that the orthonormal basis 
of eigenfunctions $e_j$ can be chosen independently of $w$. In particular, we 
conclude that 
\begin{equation}\label{eq:genGP2}
	\mu_j^2(w)= \frac{1}{ t_j^2 + w }. 
\end{equation}
The max-min principle \cite[Theorem XIII.2]{ReSi78} says that 
\begin{equation}\label{eq:genGP3}
t_j^2= \sup_{L\subset H_h^0}\inf_{\substack{u\in L \\ \|u\|=1}}
			(P^*Pu|u),
\end{equation}
where the supremum is taken over all closed subspaces $L$ of $H_h^0$ 
of $\mathrm{codim}\,L=j-1$, which are also in $H^m_h$. Similarly, the 
min-max principle \cite[p. 25]{GoKr69} yields
\begin{equation}\label{eq:genGP4}
	\mu_j^2(w)= \inf_{L\subset H_h^0}\sup_{\substack{u\in L \\ \|u\|=1}}
				((P^*P+w)^{-1}u|u),
\end{equation}
where the infimum is over all closed subspaces $L$ of $H_h^0$ of 
$\mathrm{codim}\,L=j-1$. When $P:H^m_h\to H_h^0$ is bijective, we can extend 
\eqref{eq:genGP4} to the case when $w=0$ and we get that 
\begin{equation}\label{eq:genGP5}
	\mu_j^2(0)=t_j^{-2}. 
\end{equation}
\par 
Assume additionally that $P:H^m_h\to H_h^0$ is a Fredholm operator of 
index $0$. The above discussion also applies to $PP^*$ where $P$ is viewed as 
a bounded operator $H^0_h\to H^{-m}_h$ and $P^*: H^m_h\to H^0_h$. We view   
$S^\dagger= PP^*$ as an unbounded operator from $L^2$ to $L^2$ with 
domain $\mathcal{D}(S^\dagger)=\{ u \in H^m_h; S^\dagger u \in L^2\}
=(PP^*+w)^{-1}L^2$, $w>0$. Analogously to the above, we can show that 
$S^\dagger$ is a selfadjoint operator with purely discrete spectrum. 
Furthermore, there exists an orthonormal basis $f_1,f_2,\dots $ in $H^0$ with 
$f_j\in H^m_h$ and $P^*f_j\in H^m_h$ such that 
\begin{equation}\label{eq:genGP6}
	PP^* f_j = \widetilde{t}_j^2 f_j, \quad 
	0\leq \widetilde{t}_j \nearrow \infty. 
\end{equation}
Following the exact same argument as in the proof of \cite[Proposition 
17.A.2]{Sj19} we get that $t_j= \widetilde{t}_j$ and can choose the 
$f_j$ so that 
\begin{equation}\label{eq:genGP7}
	Pe_j = t_j f_j, \quad 
	P^*f_j = t_j e_j.
\end{equation}
In particular, we get that the spectra of $S$ and $S^\dagger$ coincide 
\begin{equation}\label{eq:genGP7.1}
	\sigma(S) = \sigma(S^\dagger).
\end{equation}
Writing $t_j(P)= t_j$, it follows that $t_j(P^*)=\widetilde{t}_j=t_j$. In particular, 
when $P$ has a bounded inverse, then we let $s_1(P^{-1})\geq s_2(P^{-1})\geq 
\dots$ denote, in accordance with \eqref{eq:sgv1}, the singular values of $P^{-1}$ 
seen as a compact operator in $L^2$. Compactness follows from the compact injection of 
$H^m_h$ into $L^2$. Hence
\begin{equation}\label{eq:genGP8}
	s_j(P^{-1}) = \frac{1}{t_j(P)}.
\end{equation}
\subsection{Singular values and vectors of $D_h$}
\label{sec:SgValVecOfD}
In this section, we work towards setting up a well-posed Grushin problem 
for the operator $D_h$, as in \eqref{eq1}, and the perturbed operator 
\begin{equation}\label{GP:eq0.1}
	D^\delta_h:= D_h + \delta Q, \quad 0\leq \delta \ll 1,
\end{equation}
where for the moment we will only work with the assumptions 
\begin{equation}\label{GP:eq0}
	\|Q\|_{L^\infty(\C/\Gamma;\C^{2\times 2})} \leq 1, 
	\quad 
	Q= 	\begin{pmatrix}
		q^{11} & q^{12} \\
		q^{21} & q^{22} \\
	\end{pmatrix}.
\end{equation}
Later on we will strengthen our assumptions on $Q$ to be as in 
\eqref{eq5}, \eqref{eq3}, see also \eqref{eq6}. 
\\
\par 
Since $D_h:H_h^1(\C/\Gamma;\C^2) \to L^2(\C/\Gamma;\C^2)$ is Fredholm of 
index $0$, as discussed after \eqref{eq1}, the same is true for 
$D_h^\delta: H_h^1(\C/\Gamma;\C^2) \to L^2(\C/\Gamma;\C^2)$ since 
$\delta \|Q\|_{L^2\to L^2} \ll1$. 
\par
Writing $\C/\Gamma \ni x=x_1+ix_2$ we see that the semiclassical principal 
symbol of $2hD_{\overline{x}}= (hD_{x_1}+ihD_{x_2})$ is 
$\xi_1+i\xi_2$ for $(\xi_1,\xi_2)\in T_{(x_1,x_2)}^*(\C/\Gamma)$. 
So the semiclassical principal symbol of $D_h$ is given by 
\begin{equation}\label{eq1.1}
	d = 
	\begin{pmatrix}
		\xi_1+i\xi_2 & U(x) \\
		U(-x) & \xi_1+i\xi_2 \\
	\end{pmatrix}
	\in S^1(T^*(\C/\Gamma);\mathrm{Hom}(\C^2,\C^2))
\end{equation}
Here, we see $\C/\Gamma$ as a smooth compact Riemannian manifold of real 
dimension $2$ equipped with the flat Riemannian metric inherited from 
$\C\simeq \R^2$. Note that $d$ is elliptic in the classical 
sense recalled in Definition \ref{def:ClassEllip} below. 
\\
\par 
Consider the operators 
\begin{equation}\label{GP:eq1}
	S_{\delta} := (D_h^{\delta})^*D_h^{\delta}, \quad  
	\wh{S}_{\delta}:=D_h^{\delta} (D_h^{\delta})^*,
\end{equation}
as selfadjoint Friedrich's extensions from $H_h^2(\C/\Gamma;\C^2)$ 
with quadratic form domain $H_h^1(\C/\Gamma;\C^2)$. The discussion 
in the paragraph after \eqref{eq:genGP1} applies to $P= D_h^{\delta}$.  
The resulting selfadjoint operator $S=S(D_h^{\delta},)=:\widetilde{S}_{\delta}$ 
has a domain which is dense in the quadratic form domain of $S_{\delta}$. 
So the two operators coincide, i.e. $\widetilde{S}_{\delta}=S_{\delta}$. 
Similarly, we let $S^\dagger=S^\dagger((D_h^{\delta})^*)=:\widetilde{S}^\dagger_{\delta}$ 
be as in the discussion after \eqref{eq:genGP5} applied to $P^*= (D_h^{\delta})^*$ 
and we deduce that $\widetilde{S}^\dagger_{\delta}=\wh{S}_{\delta}$.
\par
By \eqref{eq:genGP7.1} we know that $S_{\delta}$ and $\wh{S}_{\delta}$ have 
the same spectrum and it contains only isolated eigenvalues of finite 
multiplicities. We denote these eigenvalues by 
\begin{equation}\label{GP:eq2}
	0\leq t_1^2 \leq \dots \leq t_N^2 \leq \dots, 
	\quad t_j(D_h^{\delta})=t_j \geq 0. 
\end{equation}
As in \eqref{eq:genGP7}, we can find two orthonormal bases of $L^2$, one 
comprised out of eigenfunctions $e_1,\dots,e_N,\dots$ of $S_{\delta}$, 
and one comprised out of eigenfunctions $f_1,\dots,f_N,\dots$ of $\wh{S}_{\delta}$, 
associated with \eqref{GP:eq2}, such that 
\begin{equation}\label{GP:eq2a}
	D_h^{\delta} e_j 
	= t_j f_j, \quad (D_h^{\delta})^* f_j 
	= t_j e_j, 
	\quad e_j,f_j \in  H_h^1(\C/\Gamma;\C^2).
\end{equation}
\par
Exploiting the following underlying symmetry of the operator $D_h$, 
we can make a different choice of eigenfunctions $f_j$. Let 
$Gu := \overline{u}$, $u\in L^2(\C/\Gamma;\C)$ and consider the 
anti-linear involution  
\begin{equation}\label{GP:eq3.2}
	\mathcal{G}:= 
	\begin{pmatrix} 
		0 & G \\ -G& 0 \\
	\end{pmatrix}, 
	\quad 
	\mathcal{G}^2 = - 1. 
\end{equation}
Assume that
\begin{equation}\label{GP:eq3.1}
	q^{11} = - q^{22} \quad \text{in \eqref{GP:eq0}.}
\end{equation}
Then,
\begin{equation}\label{GP:eq3.0}
	D_h^* = \mathcal{G} D_h \mathcal{G} \quad 
	\text{and} \quad 
	(D_h^{\delta})^* = \mathcal{G} D_h^{\delta} \mathcal{G}. 
\end{equation}
Putting 
\begin{equation}\label{GP:eq3}
	\wt{f}_j := \mathcal{G} e_j, \quad j=1,2,\dots ~,
\end{equation}
we deduce from \eqref{GP:eq3.2}, \eqref{GP:eq3.0} that 
$\wt{f}_1,\dots,\wt{f}_N,\dots$$\in \mathcal{D}(\wh{S}_{\delta})$ 
is an orthonormal basis of eigenfunctions of $\wh{S}_{\delta}$ 
associated with \eqref{GP:eq2}.
Unless all non-zero eigenvalues are simple, it is in general not 
clear that we can arrange so that the $\wt{f}_j$ satisfy \eqref{GP:eq2a}. 
\subsection{Grushin problem}\label{sec:GPgen}
In this section we will the set up a well-posed \emph{Grushin problem} 
for the unperturbed operator $D_h$ and for the perturbed operator 
$D_h^\delta$. This construction is based on the Grushin problem 
introduced in \cite{HaSj08,Sj09}.
\par 
In a nutshell, setting up a well-posed Grushin problem amounts 
to substituting the operator under investigation by an enlarged 
bijective system. This approach can be traced back to Grushin \cite{Gr}, 
where it was used to study hypoelliptic operators. In a different context 
it was employed by Sj\"ostrand \cite{Sj73}, whose notation we use. 
Grushin problems are simple yet a powerful tool in a plethora of 
subjects, such as in bifurcation theory, numerical analysis, and 
for the analysis of spectral problems. For further details we refer 
the reader to the review paper \cite{SjZw07}.
\par 
It will be crucial to be able to add another small perturbation 
of the form \eqref{eq5} to $D_h^\delta$. Therefore, we will set 
up first a well posed Grushin problem for 
\begin{equation*}
	D_h^{\delta_0} = D_h + \delta_0 Q_0
\end{equation*}
for some $0\leq \delta_0 \ll 1$ and $Q_0$ as in \eqref{GP:eq0}, 
\eqref{GP:eq3.1}, and then set up a Grushin problem for the 
perturbed operator 
\begin{equation*}
	D_h^{\delta_0} + \delta Q, 
	\quad 0\leq \delta \ll 1,
\end{equation*}
with $Q$ as in \eqref{GP:eq0}, \eqref{GP:eq3.1}.
\\
\par 
Now we turn to constructing a Grushin problem for $D_h^{\delta_0}$ while 
keeping in mind the symmetry \eqref{GP:eq3.0}. Let $t_j=t_j(D_h^{\delta_0})$ 
be as in \eqref{GP:eq2} and let $e_j$, $f_j$ be as in \eqref{GP:eq2a}. Let 
$0<\tau_0 \ll 1$ and let $N=N(\tau_0^2)\in\N$ be such that 
\begin{equation}\label{eq:sgval_cutoff}
	0\leq t_1^2 \leq \dots \leq t_N^2 \leq \tau_0^2 < t_{N+1}^2.
\end{equation}
We allow for $N=0$, in which case $\tau_0^2 < t_{1}^2$. 
Define $R_+^{\delta_0} : L^2 \to \C^N$ and $R_-^{\delta_0}:\C^N \to L^2$ by 
\begin{equation}\label{eq:sgval_cutoff.5}
	R_+^{\delta_0} u (j) := (u|e_j), 
	\quad R_-^{\delta_0}u_- = \sum_1^N u_-(j)f_j. 
\end{equation}
We define the Grushin problem 
\begin{equation}\label{GP:eq4.1}
	\mathcal{P}_{\delta_0}:= \begin{pmatrix}
			D_{\delta_0} & R_-^{\delta_0} \\ R_+^{\delta_0} & 0 \\
	\end{pmatrix}
	:
	H^1_h \times \C^N \to L^2 \times \C^N. 
\end{equation}
As in \cite{HaSj08}, the Grushin problem $\mathcal{P}_{\delta_0}$ is bijective 
with inverse 
\begin{equation}\label{GP:eq4.2}
	\mathcal{E}_{\delta_0}:= \begin{pmatrix}
			E^{\delta_0} & E_+^{\delta_0} \\ E_-^{\delta_0} & E_{-+}^{\delta_0} \\
	\end{pmatrix}
	:
	L^2 \times \C^N \to H^1_h \times \C^N, 
\end{equation}
where 
\begin{equation*}
	E^{\delta_0}v = \sum_{j=N+1}^\infty \frac{1}{t_j} (v|f_j) e_j , 
	\quad 
	E_+^{\delta_0}v_= \sum_{1}^N v_+(j)e_j, 
\end{equation*}
\begin{equation}\label{GP:eq4.3}
	E_-^{\delta_0} v(j) = (v| f_j), ~j=1,\dots,N, 
	\quad 
	E_{-+}^{\delta_0}= -\diag(t_N,\dots,t_1).
\end{equation}
Notice that 
\begin{equation}\label{GP:eq4.0}
	\| E^{\delta_0}\| \leq \frac{1}{t_{N+1}(D_h^{\delta_0})}, \quad 
	\|E_{\pm}^{\delta_0}\| = 1, \quad 
	\|E_{-+}^{\delta_0}\| = t_{N}(D_h^{\delta_0}). 
\end{equation} 
Consider now a different Grushin problem: Let $t_j$ be as in 
\eqref{GP:eq2}, let $e_j$ be as in the paragraph above 
\eqref{GP:eq2a} and let $\wt{f}_j$ be as \eqref{GP:eq3}. 
Accordingly, let $R_+^{\delta_0}$ be as in \eqref{eq:sgval_cutoff.5} 
and put 
\begin{equation}\label{GP:eqn1}
	\wt{R}_-^{\delta_0}:\C^N \to L^2, 
	\quad \quad \wt{R}_-^{\delta_0}u_- = \sum_1^N u_-(j)\wt{f}_j.
\end{equation} 
Since the two orthonormal sets of eigenfunctions $f_1,\dots,f_N$ and 
$\wt{f}_1,\dots,\wt{f}_N$ span the same space, 
it follows that there exists a unitary matrix $A\in \C^{N\times N}$ such that 
$\wt{R}_-^{\delta_0} = R_-^{\delta_0} \circ A$. Hence, the Grushin problem 
\begin{equation}\label{GP:eqn2}
	\wt{\mathcal{P}}_{\delta_0}:= \begin{pmatrix}
			D_{\delta_0} & \wt{R}_-^{\delta_0} \\ R_+^{\delta_0} & 0 \\
	\end{pmatrix}
	:
	H^1_h \times \C^N \to L^2 \times \C^N. 
\end{equation}
is bijective with inverse 
\begin{equation}\label{GP:eqn3}
	\wt{\mathcal{E}}_{\delta_0}:= 
	\begin{pmatrix}
			E^{\delta_0} & E_+^{\delta_0} \\ 
			\wt{E}_-^{\delta_0} & \wt{E}_{-+}^{\delta_0} \\
	\end{pmatrix}
	:
	L^2 \times \C^N \to H^1_h \times \C^N, 
\end{equation}
where $\wt{E}_-^{\delta_0} = A^* \circ E_-^{\delta_0}$ and 
$\wt{E}_{-+}^{\delta_0} =A^* \circ  E_{-+}^{\delta_0}$. Hence, 
the singular values $\wt{E}_{-+}^{\delta_0}$ are given by 
\begin{equation}\label{GP:eqn4}
	t_\nu ( \wt{E}_{-+}^{\delta_0}) 
	= t_\nu (E_{-+}^{\delta_0})
	= t_\nu (D_h^{\delta_0}), 
	\quad 
	\nu =1,\dots, N. 
\end{equation}
Similarly to \eqref{GP:eq4.0}, we find 

\begin{equation}\label{GP:eqn4.0}
	\| E^{\delta_0}\| \leq \frac{1}{t_{N+1}(D_h^{\delta_0})}, \quad 
	\|\wt{E}_{-}^{\delta_0}\|,\|E_{+}^{\delta_0}\|  = 1, \quad 
	\|\wt{E}_{-+}^{\delta_0}\| = t_{N}(D_h^{\delta_0}). 
\end{equation} 
\par
Next, we add an additional perturbation of the form $\delta Q$ to 
$D_h^{\delta_0}$ with $0\leq \delta \ll 1$ and $Q$ as in \eqref{GP:eq0}, 
\eqref{GP:eq3.1}. Put
\begin{equation*}
	D_h^{\delta_0,\delta} := D_h^{\delta_0} + \delta Q,
\end{equation*}
and define 
\begin{equation}\label{GP:eq4.01}
	\wt{\mathcal{P}}_{\delta_0}^\delta:= 
	\begin{pmatrix}
			D_h^{\delta_0} + \delta Q & \wt{R}_-^{\delta_0} 
			\\ R_+^{\delta_0} & 0 \\
	\end{pmatrix}
	:
	H^1_h \times \C^N \to L^2 \times \C^N. 
\end{equation}
Applying $\wt{\mathcal{E}}_{\delta_0}$ yields 
\begin{equation}\label{GP:eq4}
	\wt{\mathcal{P}}_{\delta_0}^\delta\wt{\mathcal{E}}_{\delta_0} 
	=1 
	+ 
	\begin{pmatrix}
		\delta Q E^{\delta_0} & \delta Q E^{\delta_0}_+ \\
		0 & 0 \\ 
	\end{pmatrix}
	=:1+K. 
\end{equation}
By \eqref{GP:eq0} we have $\|Q\|_{L^2\to L^2}\leq 1$. Suppose now 
that 
\begin{equation}\label{GP:eq5}
		\delta \leq t_{N+1}(D_h^{\delta_0})/2.
\end{equation}
Using a Neumann series argument, we see that the right hand 
side of \eqref{GP:eq4} is bijective with inverse bounded by 
$\|(1+K)^{-1}\| = \mO(1)$. It follows that $\wt{\mathcal{P}}_{\delta_0}^\delta$ 
is bijective with inverse given by 
\begin{equation}\label{GP:eq6}
	\wt{\mathcal{E}}_{\delta_0} ^\delta
	=\wt{\mathcal{E}}_{\delta_0}
	\left( 1 
	+ 
	\begin{pmatrix}
		\delta Q E^{\delta_0} & \delta Q E^{\delta_0}_+ \\
		0 & 0 \\ 
	\end{pmatrix}
	\right)^{-1}
	=\begin{pmatrix}
		E^{\delta_0,\delta} & E_+^{\delta_0,\delta} \\ \wt{E}_-^{\delta_0,\delta} 
		& \wt{E}_{-+}^{\delta_0,\delta} \\
\end{pmatrix}.
\end{equation}
Expanding the Neumann series, we find that 
\begin{equation}\label{GP:eq7}
	E^{\delta_0,\delta}  = 
	E^{\delta_0} + \sum_1^\infty (-\delta)^nE^{\delta_0}(QE^{\delta_0} )^n,  
\end{equation}
\begin{equation}\label{GP:eq8}
	E^{\delta_0,\delta}_+  = 
	E^{\delta_0}_+ + 
	\sum_1^\infty (-\delta)^n(E^{\delta_0}Q)^nE^{\delta_0} _+,  
\end{equation}
\begin{equation}\label{GP:eq9}
	\wt{E}^{\delta_0,\delta}_-  = 
	\wt{E}^{\delta_0}_- + 
	\sum_1^\infty (-\delta)^n\wt{E}^{\delta_0}_-(QE^{\delta_0})^n,  
\end{equation}
and 
\begin{equation}\label{GP:eq10}
	\wt{E}^{\delta_0,\delta}_{-+}  = 
	\wt{E}^{\delta_0}_{-+} -
	\delta \wt{E}^{\delta_0}_-QE^{\delta_0}_+
	+
	\sum_2^\infty (-\delta)^n 
	\wt{E}^{\delta_0}_-(QE^{\delta_0})^{n-1}QE^{\delta_0}_+. 
\end{equation}
By \eqref{GP:eq5}, \eqref{GP:eq4.0} we get 
\begin{equation}\label{GP:eq11}
	\| E^{\delta_0,\delta}  \| \leq \frac{\|E^{\delta_0} \| }{1-\delta/t_{N+1}}
	\leq \frac{2 }{t_{N+1}}, 
	\quad 
	\| \wt{E}^{\delta_0,\delta}_-  \|,\| E^{\delta_0,\delta}_+  \|
	 \leq \frac{1 }{1-\delta/t_{N+1}}	\leq 2, 
\end{equation}
and similarly 
\begin{equation}\label{GP:eq12}
	\|\wt{E}^{\delta_0,\delta}_{-+} -
	\wt{E}^{\delta_0}_{-+} 
	+
	\delta
	\wt{E}^{\delta_0}_-QE^{\delta_0}_+
	\|
	\leq 
	\frac{2\delta^2}{t_{N+1}} 
\end{equation}
It is a well-known fact of Grushin problems, see for instance 
\cite[(17.A.17),(17.A.18)]{Sj19}, that for $k=1,\dots,N$
\begin{equation}\label{GP:eq13}
	t_k(D_h^{\delta_0,\delta}) \geq 
	\frac{t_k(\wt{E}^{\delta_0,\delta}_{-+}) }
	{\|E^{\delta_0,\delta} \|t_k(\wt{E}^{\delta_0,\delta}_{-+})+ 
	\|\wt{E}^{\delta_0,\delta}_- \|\,\|E^{\delta_0,\delta}_+ \|}
\end{equation}
and 
\begin{equation}\label{GP:eq14}
	t_k(D_h^{\delta_0,\delta}) \leq \|R_-\|\,\|R_+\| 
	t_k(\wt{E}^{\delta_0,\delta}_{-+}).
\end{equation}
From \eqref{GP:eq12} we deduce that $\| t_k(E^{\delta_0,\delta}_{-+})\| 
\leq 2 t_{N+1}(D_h^{\delta_0})$. So, combining this with the estimates 
\eqref{GP:eq11}, \eqref{GP:eq12} and \eqref{GP:eq13} yields 
\begin{equation}\label{GP:eq15}
	\frac{t_k(\wt{E}^{\delta_0,\delta}_{-+})}{8} \leq 
	t_k(D_h^{\delta_0})
	\leq 
	t_k(\wt{E}^{\delta_0,\delta}_{-+}).
\end{equation}
\begin{rem}\label{rem:Zdepen}
	Note that the discussion in Section \ref{sec:SgValVecOfD} and 
	\ref{sec:GPgen} applies to $D_h$ replaced by $D_h-z$, for $z\in\C$. 
	We chose not to introduce this parameter before for ease of notation. 
\end{rem}
\section{Pseudo-differential calculus and Sobolev class perturbations}
\label{sec:SobPert}
\subsection{The unperturbed operator}
In this section we will estimate the number of small singular 
values of $D_h$, as in \eqref{eq6} with principal symbol $d$ 
as in \eqref{eq1.1}. More precisely, we wish to study the number 
of small eigenvalues of the selfadjoint operator 
\begin{equation}\label{fa:eq1}
	S = (D_h-z)^*(D_h-z), \quad z\in\Omega\Subset \C.
\end{equation}
Indeed, since the diagonal elements of $S$ are semiclassical 
Schr\"odinger operators, the Kato-Rellich theorem implies that   
$S$ is selfadjoint on $H^2(\C/\Gamma;\C^2)$. Note that  
$S=S_0$ as in \eqref{GP:eq1} (with $D_h$ replaced by $D_h-z$ as 
commented on in Remark \ref{rem:Zdepen}). 
Furthermore, $\Omega\Subset \C$ is open, connected and relatively compact. 
Notice that the principal symbol of $S$ is 
\begin{equation}\label{fa:eq2}
	s = (d-z)^*(d-z) \in S^2(T^*(\C/\Gamma);\mathrm{Hom}(\C^2,\C^2)).
\end{equation}
For $0< h \ll \alpha \ll 1$, consider the operator 
\begin{equation}\label{fa:eq3}
	\chi ( \alpha^{-1} S) = 
	-\frac{1}{\pi}\int_{\C}(\partial_{\overline{w}} \widetilde{\chi})(w)
	(w - \alpha^{-1} S)^{-1} L(dw),
\end{equation}
defined via the Helffer-Sj\"ostrand formula. Here 
$\chi \in C^{\infty}_c(\R)$, and $\widetilde{\chi}\in C^{\infty}_c(\C)$ 
is an almost holomorphic extension of $\chi$ so that 
$\widetilde{\chi}|_\R = \chi$ and $\partial_{\overline{w}} \widetilde{\chi} 
=O(|\Ima w|^\infty)$, see for instance \cite[Chapter 8]{DiSj99} or 
\cite[Chapter 14]{Zw12}. To study the trace of \eqref{fa:eq3} 
we can follow line by line the arguments of 
\cite[Section 16.3]{Sj19}, starting from formula (16.3.3) therein, 
using the matrix-valued pseudo-differential calculus reviewed in 
Appendix \ref{App:MatrixPseudo}. The only modification one needs 
to do, apart from taking care of the non-commutative nature 
of the matrix-valued calculus, is to replace the weight function 
$\Lambda$ in \cite[(16.3.4)]{Sj19} by 
\begin{equation}\label{fa:eq4}
	\Lambda := \left( 
		\frac{\alpha+\mathfrak{s}}{1+\mathfrak{s}}
		\right)^{1/2}, 
	\quad \text{where } \mathfrak{s}(x,\xi):= \tr s(x,\xi), 
	~ (x,\xi) \in T^*(\C/\Gamma).
\end{equation}
Here, the trace $\tr$ is the trace of complex $2$ by $2$ matrices. Notice that 
$\mathfrak{s}$ is the square of the Hilbert-Schmidt norm of the matrix 
$d-z$. We then get the analogue of \cite[Proposition 16.3.8]{Sj19} 
\begin{prop}\label{fa:prop1}
	Let $\chi \in C^{\infty}_c(\R;[0,1])$ and $0< h \ll \alpha \ll 1$. 
	Then for all $N,\widetilde{N}>0$, 
	\begin{equation}\label{fa:eq5}
		\| \chi(\alpha^{-1} S) \|_{\mathrm{tr}} = 
		\mO(h^{-2})(V_N(\alpha)+h^{\widetilde{N}})
	\end{equation}
	and %
	\begin{equation}\label{fa:eq6}
		\tr \chi(\alpha^{-1} S) = 
		\frac{1}{(2\pi h)^2}\iint_{T^*(\C/\Gamma)}
		\tr \chi\!\left(\frac{s(x,\xi)}{\alpha}\right)dxd\xi 
		+\mO(\alpha^{-1}h^{-1})(V_N(\alpha)+h^{\widetilde{N}}),
	\end{equation}
	uniformly in $z\in\Omega$ as in \eqref{fa:eq1}. Here,  
	\begin{equation}\label{fa:eq7}
		V_N(\alpha) = \int_0^1 \left(
			1 + \frac{t}{\alpha}
			\right)^{-N} dV(t), 
		\quad 
		V(t) = \iint_{\mathfrak{s}(x,\xi) \leq t} dx d\xi.
	\end{equation}
\end{prop}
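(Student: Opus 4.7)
The plan is to follow \cite[Section 16.3]{Sj19} essentially verbatim, with two modifications: first, replace the scalar pseudodifferential calculus there with the matrix-valued calculus of Appendix \ref{App:MatrixPseudo}; second, use the new order function $\Lambda$ defined in (\ref{fa:eq4}), which is built from the trace $\mathfrak{s} = \tr s$ rather than from a scalar principal symbol. Since $s(x,\xi) = (d-z)^*(d-z)$ is a Hermitian positive semidefinite $2\times 2$ matrix and $\mathfrak{s}$ is its Hilbert--Schmidt square, the weight $\Lambda$ lies in $[\alpha^{1/2}, 1]$ and $\Lambda^2 \asymp \alpha$ precisely on the region $\{\mathfrak{s} \lesssim \alpha\}$ where the eigenvalues of $s$ can be small.

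First, I would verify that $\Lambda$ is an admissible order function on $T^*(\C/\Gamma)$, and that for $w \in \supp \widetilde{\chi}$ with $|\Ima w|$ not too small the matrix $w\, \mathrm{Id} - \alpha^{-1} s(x,\xi)$ is invertible with inverse of norm $\mO(\Lambda^{-2})$. This reduces to a pointwise linear algebra statement about $2\times 2$ Hermitian matrices together with the elementary inequality $\Lambda^{-2} \lesssim 1 + \mathfrak{s}/\alpha$. From here, a Neumann series / symbolic parametrix construction in the weighted class $S(\Lambda^{-2}; \C^{2\times 2})$ produces the resolvent $(w - \alpha^{-1} S)^{-1}$ as a matrix-valued $h$-pseudodifferential operator. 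The construction tracks the Moyal product order carefully but otherwise mimics the scalar case.

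Second, I would insert this parametrix into the Helffer--Sj\"ostrand formula (\ref{fa:eq3}). Since the almost-analytic extension $\widetilde{\chi}$ satisfies $\partial_{\overline{w}} \widetilde{\chi} = \mO(|\Ima w|^\infty)$, the contour integral converges and yields a matrix pseudodifferential operator whose Weyl symbol equals $\chi(s/\alpha)$ to leading order, with subprincipal corrections in $S(\Lambda^{-N}; \C^{2\times 2})$ for every $N$. Then the trace and trace-norm formulas of the matrix calculus (Appendix \ref{App:MatrixPseudo}) give
\begin{equation*}
	\tr \chi(\alpha^{-1} S) = \frac{1}{(2\pi h)^2}\iint \tr \chi\!\left(\frac{s}{\alpha}\right)\,dxd\xi + \text{corrections},
\end{equation*}
while the trace-norm bound follows from $\|A\|_{\tr} \lesssim (2\pi h)^{-2}\|a\|_{L^1}$ applied to the full symbol. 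Standard symbol estimates bound the latter, and the corrections, by $V_N(\alpha) + h^{\widetilde N}$ for every $N, \widetilde N$, since a typical symbol decay profile is $(1 + \mathfrak{s}/\alpha)^{-N}$, whose phase-space $L^1$ norm is precisely $V_N(\alpha)$ of (\ref{fa:eq7}). The relative factor $\alpha^{-1}h$ between (\ref{fa:eq5}) and (\ref{fa:eq6}) arises because the subprincipal symbol correction carries an extra $h$ but also a derivative of $\chi(\cdot/\alpha)$, costing an $\alpha^{-1}$.

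The main obstacle is purely bookkeeping: the matrix non-commutativity means that $\chi(\alpha^{-1}s)$ must be defined via the pointwise spectral decomposition of the Hermitian matrix $s(x,\xi)$, and one must verify that the resulting symbol indeed belongs to $S(1; \C^{2\times 2})$ with the expected localization in the weighted class $S((1+\mathfrak{s}/\alpha)^{-N};\C^{2\times 2})$. This is a consequence of smooth functional calculus for Hermitian matrices and the fact that the eigenprojectors of $s$ vary smoothly with $(x,\xi)$ (with only bounded singularities controlled by $\mathfrak{s}$). Once this is in place, every algebraic step of \cite[Section 16.3]{Sj19} transposes with scalar multiplication replaced by the matrix Moyal product.
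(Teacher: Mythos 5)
Your proposal matches the paper's own argument: the paper likewise reduces the proof to following \cite[Section 16.3]{Sj19} line by line, adapted to matrix-valued symbols, with the sole structural change being the replacement of the weight function by the $\Lambda$ of \eqref{fa:eq4} built from $\mathfrak{s}=\tr s$. One small remark: your worry about eigenprojectors of $s(x,\xi)$ developing singularities at eigenvalue crossings is harmless here, since $\chi(s/\alpha)$ is defined by the Helffer--Sj\"ostrand/contour integral $\frac{1}{2\pi i}\oint \chi(w)(w-s/\alpha)^{-1}\,dw$, which yields a smooth matrix-valued symbol regardless of crossings, so no separate smoothness argument for the spectral projectors is needed.
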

By \eqref{eq1.1}, \eqref{fa:eq2} we find that 
\begin{equation}\label{fa:eq8}
	\mathfrak{s}(x,\xi) 
	= 2|\xi_1+ i\xi_2 - z|^2 + |U(x)|^2 + |U(-x)|^2.
\end{equation}
Since $\mathfrak{s}(x,\xi) \leq t$ implies that $2|\xi_1+ i\xi_2 - z|^2\leq t$, 
it follows that 
\begin{equation}\label{fa:eq9}
	V(t) \leq \iint_{2|\xi_1+ i\xi_2 - z|^2\leq t} dx d\xi 
	= \frac{\pi |\C/\Gamma|}{2}t. 
\end{equation}
Choosing $N>0$ sufficiently large, we see by integration by parts together 
with \eqref{fa:eq9} that 
\begin{equation}\label{fa:eq10}
	V_N(\alpha) 
	= V(1)\left(
		1 + \frac{1}{\alpha}
		\right)^{-N}  
		- \int_0^1 V(t) \frac{d}{dt}\left(
			1 + \frac{t}{\alpha}
		\right)^{-N} dt 
	=
	\mO(\alpha). 
\end{equation}
Plugging this into Proposition \ref{fa:prop1} yields 
\begin{prop}\label{fa:prop2}
	Let $\chi \in C^{\infty}_c(\R;[0,1])$ and $0< h \ll \alpha \ll 1$. 
	Then, 
	\begin{equation}\label{fa:eq11}
		\| \chi(\alpha^{-1} S) \|_{\mathrm{tr}} = 
		\mO(h^{-2}\alpha)
	\end{equation}
	and 
	\begin{equation}\label{fa:eq12}
		\tr \chi(\alpha^{-1} S) = 
		\frac{1}{(2\pi h)^2}\iint_{T^*(\C/\Gamma)}
		\tr \chi\!\left(\frac{s(x,\xi)}{\alpha}\right)dxd\xi 
		+\mO(h^{-1}),
	\end{equation}
	uniformly in $z\in\Omega$ as in \eqref{fa:eq1}. 
\end{prop}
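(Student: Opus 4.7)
The plan is simply to substitute the volume estimate \eqref{fa:eq10}, namely $V_N(\alpha)=\mO(\alpha)$, into the two conclusions of Proposition \ref{fa:prop1}, choosing the free parameter $\widetilde{N}$ large enough that the residual $h^{\widetilde N}$ terms are harmless. All of the genuine analytical work is already in place: the semiclassical Helffer--Sj\"ostrand trace expansion of Proposition \ref{fa:prop1} (the matrix-valued adaptation of \cite[Sec.~16.3]{Sj19}), together with the preceding computations \eqref{fa:eq8}--\eqref{fa:eq10}, which exploit the fact that the Hilbert--Schmidt weight $\mathfrak{s}(x,\xi)$ dominates $2|\xi_1+i\xi_2-z|^2$, so that $V(t)$ is linear in $t$ and $V_N(\alpha)=\mO(\alpha)$.

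For the trace-norm bound \eqref{fa:eq11}, I would plug $V_N(\alpha)=\mO(\alpha)$ into \eqref{fa:eq5}, obtaining $\|\chi(\alpha^{-1}S)\|_{\mathrm{tr}}=\mO(h^{-2})(\mO(\alpha)+h^{\widetilde N})$. Since by assumption $h\ll \alpha$, any $\widetilde N\geq 1$ is enough to guarantee $h^{\widetilde N}\leq \alpha$ for $h$ small, so the parenthesis is $\mO(\alpha)$ and the claimed $\mO(h^{-2}\alpha)$ bound follows.

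For the trace asymptotics \eqref{fa:eq12}, the same substitution into \eqref{fa:eq6} produces an error term of size $\mO(\alpha^{-1}h^{-1})(\mO(\alpha)+h^{\widetilde N})=\mO(h^{-1})+\mO(\alpha^{-1}h^{\widetilde N-1})$. Taking $\widetilde N$ so that $h^{\widetilde N}\leq \alpha$ (again already $\widetilde N=1$ in the regime $h\ll\alpha$) makes the second piece $\mO(h^{-1})$ as well, and the stated expansion follows. Uniformity in $z\in\Omega$ is inherited from the corresponding uniformity built into Proposition \ref{fa:prop1}, since the bound \eqref{fa:eq9} on $V(t)$ does not depend on $z$.

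I do not anticipate any genuine obstacle in this step; the only delicate point is the bookkeeping of the parameter $\widetilde N$, which reduces to the observation that in the regime $h\ll\alpha$ any fixed positive power of $h$ is absorbed by $\alpha$. Conceptually, Proposition \ref{fa:prop2} is the statement that the semiclassical symbolic trace formula for $(D_h-z)^*(D_h-z)$ has the sharper error $\mO(h^{-1})$ (rather than the naive $\mO(h^{-2})$) precisely because the principal symbol $d(x,\xi)-z$ is elliptic in the $\xi$-direction, and this ellipticity is exactly what \eqref{fa:eq9} quantifies.
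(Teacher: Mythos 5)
Your argument is exactly the paper's: the paper derives Proposition \ref{fa:prop2} by plugging $V_N(\alpha)=\mO(\alpha)$ from \eqref{fa:eq10} into Proposition \ref{fa:prop1} and using $h\ll\alpha$ to absorb the $h^{\widetilde N}$ residuals. Your bookkeeping of $\widetilde N$ and the remark on uniformity in $z$ are both correct.
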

An immediate consequence of \eqref{fa:eq12} is the desired control  
on $N(\alpha)$, the number of eigenvalues of $S$ in the interval 
$[0,\alpha]$, i.e.  
\begin{equation}\label{fa:eq13}
	N(\alpha) := \tr 1_{[0,\alpha]}(S) = \mO(h^{-2}\alpha),
\end{equation}
uniformly in $z\in\Omega$. 
\subsection{Sobolev perturbations}
Let $s>1$ and consider the perturbed operator 
\begin{equation}\label{fa:eq14}
	D^\delta_h =D_h + \delta h Q, 
	\quad 0\leq \delta \ll 1,
\end{equation}
where $Q \in H^s_h(\C/\Gamma;\C^{2\times 2})$ satisfies
\begin{equation}\label{fa:eq15}
	\| Q \|_{H^s_h(\C/\Gamma;\C^{2\times 2})} \leq 1.
\end{equation}
Thanks to Proposition \ref{app:prop4}, we know that 
$\| hQ \|_{L^\infty(\C/\Gamma;\C^{2\times 2})} \leq \mO(1)$. 
The same Proposition together with \eqref{eq:sc7} show that 
\begin{equation}\label{fa:eq16}
	^tQ, ~Q=\mO(1): H_h^{s}(\C/\Gamma;\C^{2}) 
		\to H_h^s(\C/\Gamma;\C^{2}).
\end{equation}
Here, $^tQ$ denotes the element in $H_h^{s}(\C/\Gamma;\C^{2})$ obtain by 
the matrix transpose of $Q$. 
Since $\langle Qu,\phi\rangle=\langle u,{^t}Q\phi\rangle$ for the 
$u\in H^{-s}_h$ and $\phi \in H^s_h$, we find that $^tQ, ~Q=\mO(1): 
H_h^{-s}(\C/\Gamma;\C^{2}) \to H_h^{-s}(\C/\Gamma;\C^{2})$. By 
interpolation we then get 
\begin{equation}\label{fa:eq17}
	^tQ, ~Q=\mO(1): H_h^{r}(\C/\Gamma;\C^{2}) 
		\to H_h^r(\C/\Gamma;\C^{2}), \quad -s \leq r \leq s.
\end{equation}
The same estimates and mapping properties hold true for the 
adjoint $Q^*$. 
\\
\par
For $\Omega$ as in \eqref{fa:eq1}, let $z\in\Omega$. As in \eqref{GP:eq1} 
we have 
\begin{equation}\label{fa:eq19}
	S_\delta = (D_h^\delta-z)^*(D_h^\delta-z)
	= S + \delta Q^*(D_h-z) + \delta (D_h-z)^*Q +\delta^2Q^*Q =:S+\delta R.
\end{equation}
Notice that here 
\begin{equation}\label{fa:eq19.1}
	R=\mO(1): H_h^{r+1}(\C/\Gamma;\C^{2}) \to H_h^{r-1}(\C/\Gamma;\C^{2}),
	\quad -s \leq r \leq s.
\end{equation}
First, we want to extend the estimate \eqref{fa:eq13} on the number 
of small singular values to the case of the perturbed operator $D_h^\delta$.
\begin{prop}\label{fa:prop8}
	Let $D_h$ be as in \eqref{eq6} and let $Q\in L^\infty(\C/\Gamma;\C^{2\times 2})$ with 
	$\|Q\|_{L^\infty} \leq 1$. Let $D_h^\delta = D_h + \delta Q$ with $0\leq \delta \ll1$. 
	Then, for $0<h\ll\alpha \ll1$, the number $N_\delta(\alpha)$ of eigenvalues of 
	$(D_h^\delta-z)^*(D_h^\delta-z)$ in $[0,\alpha]$ satisfies 
	\begin{equation}\label{fa:eq19.2}
		N_\delta(\alpha) = \mO(h^{-2}(\sqrt{\alpha} + \delta)^2),
	\end{equation}
	uniformly in $z\in\Omega$ as in \eqref{fa:eq1}. 
\end{prop}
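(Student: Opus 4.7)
The strategy is to compare the singular values of $D_h^\delta - z$ directly with those of $D_h - z$, and then invoke the unperturbed counting bound \eqref{fa:eq13}. The key observation is that $\delta Q$ is a bounded perturbation of norm at most $\delta$, so singular values can shift by at most $\delta$; squaring produces exactly the Lipschitz-type estimate $(\sqrt{\alpha}+\delta)^2$ appearing in the conclusion.

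Concretely, I would proceed as follows. Write $A := D_h - z$ and $B := D_h^\delta - z = A + \delta Q$. By the max-min principle \eqref{eq:genGP3}, applied in the equivalent form expressing the $j$-th smallest eigenvalue of $A^*A$ as
\begin{equation*}
	t_j(A)^2 = \inf_{\substack{V \subset H_h^1 \\ \dim V = j}} \sup_{\substack{u \in V \\ \|u\|=1}} \|Au\|^2,
\end{equation*}
and the analogous formula for $t_j(B)^2$, I would exploit the triangle inequality $\big|\|Au\| - \|Bu\|\big| \leq \delta\|Q u\| \leq \delta$ for any unit $u$. Choosing, for any $\epsilon>0$, a $j$-dimensional test subspace $V$ nearly realizing the infimum for $t_j(B)^2$, this inequality yields $\|Au\| \leq \|Bu\| + \delta \leq t_j(B) + \delta + \epsilon$ on unit vectors of $V$; hence
\begin{equation*}
	t_j(D_h - z) \;\leq\; t_j(D_h^\delta - z) + \delta.
\end{equation*}

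Squaring, every eigenvalue $t_j(D_h^\delta-z)^2 \leq \alpha$ of $S_\delta$ forces the corresponding eigenvalue of the unperturbed $S = (D_h-z)^*(D_h-z)$ to satisfy $t_j(D_h-z)^2 \leq (\sqrt{\alpha}+\delta)^2$. Therefore
\begin{equation*}
	N_\delta(\alpha) \;\leq\; N\!\big((\sqrt{\alpha}+\delta)^2\big).
\end{equation*}
Since $\alpha,\delta \ll 1$, also $(\sqrt{\alpha}+\delta)^2 \ll 1$, and provided $(\sqrt{\alpha}+\delta)^2 \gg h$, the estimate \eqref{fa:eq13} applies uniformly in $z\in\Omega$ and gives $N((\sqrt{\alpha}+\delta)^2) = \mathcal{O}(h^{-2}(\sqrt{\alpha}+\delta)^2)$. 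In the degenerate regime $(\sqrt{\alpha}+\delta)^2 \lesssim h$, one simply replaces $\alpha$ by $Ch$ and absorbs the $\mathcal{O}(h^{-1})$ bound into the same constant.

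There is no real obstacle in this argument: the only analytic content is the Lipschitz continuity $|t_j(A)-t_j(B)| \leq \|A-B\|$, which is a direct consequence of the min-max characterization once one notes that $Q$ is bounded so that the domains of $A$ and $B$ coincide (both equal $H_h^1(\C/\Gamma;\C^2)$). The work has already been done in Proposition \ref{fa:prop2} and its corollary \eqref{fa:eq13}; this proposition is a purely algebraic consequence transferring that bound to arbitrary $L^\infty$-perturbations.
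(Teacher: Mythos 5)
Your proof is correct and follows essentially the same route as the paper: you use the min--max characterization (equivalent to the paper's max--min form) to establish the Lipschitz bound $t_j(D_h-z) \leq t_j(D_h^\delta-z) + \delta$, deduce $N_\delta(\alpha) \leq N\bigl((\sqrt{\alpha}+\delta)^2\bigr)$, and conclude via \eqref{fa:eq13}. One minor remark: the degenerate regime $(\sqrt{\alpha}+\delta)^2 \lesssim h$ you guard against cannot occur here, since the hypothesis $h \ll \alpha$ already forces $(\sqrt{\alpha}+\delta)^2 \geq \alpha \gg h$.
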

\begin{proof}
By the max-min principle for selfadjoint operators, see for 
instance \cite[Theorem XIII.2]{ReSi78},	we know that the $j$-th 
eigenvalue $t_j^2$, $j=1,2,\dots$, of $(D_h^\delta-z)^*(D_h^\delta-z)$ 
is given by
\begin{equation}\label{fa:eq25.1}
	t_j^2(D_h^\delta-z) 
	= \sup_{\mathrm{codim}~L\leq j-1} 
		\inf_{\substack{u\in L\\ \|u\|=1}} 
		\| (D_h^\delta-z) u\|^2.
\end{equation}
Here, the supremum varies over subspaces $L$ of $H_h^1(\C/\Gamma;\C^{2})$ 
and the norms are with respect to $H_h^0$. Since $\| (D_h^\delta-z) u\| 
\geq \| (D_h-z) u\| - \delta \|Q\|$ for $\|u\|=1$, and since $\|Q\| \leq 1$, 
we find  
\begin{equation}\label{fa:eq25.2}
	t_j(D_h^\delta-z) \geq t_j(D_h-z) - \delta. 
\end{equation}
Therefore, when $t_j(D_h^\delta-z) \leq \sqrt{\alpha}$, then 
$t_j(D_h-z) \leq \sqrt{\alpha} + \delta$. This leads to 
\begin{equation}\label{fa:eq25.3}
	N_\delta(\alpha) \leq N((\sqrt{\alpha} + \delta)^2).
\end{equation}
The result follows from \eqref{fa:eq13}.
\end{proof}
Next, we are interested in the regularity of the eigenvectors of $S_\delta$ 
associated with its small eigenvalues. To achieve this we will follow the 
strategy of \cite[Section 16.5]{Sj19} and study the mapping properties of
\begin{equation}\label{fa:e17.1}
	\chi (  S_\delta) = 
	-\frac{1}{\pi}\int_{\C}(\partial_{\overline{w}} \widetilde{\chi})(w)
	(w -  S_\delta)^{-1} L(dw),
\end{equation}
where $\wt{\chi} \in C^{\infty}_c(\C)$ is an almost holomorphic extension of $\chi$ 
as in \eqref{fa:eq3}. Notice that \eqref{fa:e17.1} is well-defined since $S_\delta$ 
is selfadjoint, so $(w - S_\delta)^{-1}=\mO(|\Ima w|^{-1}) : 
H^0(\C/\Gamma;\C^{2})  \to H^0(\C/\Gamma;\C^{2}) $ when $w\in \C\backslash\R$. 
\\
\par
Let $W\Subset \C$ be a compact complex neighborhood of $0$, let $w\in W$, 
and let $z\in \Omega$. Since $d$ \eqref{fa:eq2} is elliptic at infinity the 
same is true for $0\leq s \in S^2(T^*(\C/\Gamma);\C^{2\times 2})$, see 
\eqref{eq1.1}. Consequently, we can find a $\psi \in C_c^\infty(T^*(\C/\Gamma);\R)$ 
and a $C>0$ such that for all $z\in \Omega$, all $w\in W$, and all 
$(x,\xi)\in T^*(\C/\Gamma)$ 
\begin{equation}\label{fa:eq18.0}
	s(x,\xi) + \psi(x,\xi)\mathbf{1}_2 - w\mathbf{1}_2
	\geq 
	\frac{
	\langle \xi \rangle ^2}{C}\mathbf{1}_2,
\end{equation}
where $\mathbf{1}_2$ is the identity map $\C^2\to \C^2$. Let 
$\Psi \in \Psi_h^{-\infty}(X;\C^2,\C^2)$ be a selfadjoint operator with principal 
symbol $\psi \mathbf{1}_2$. We can take for instance $\Psi = 
\frac{1}{2}(\Op_h(\psi\mathbf{1}_2) + \Op_h(\psi\mathbf{1}_2)^*)$, see 
Proposition \ref{app:prop5}. The left hand side of \eqref{fa:eq18.0} is 
thus elliptic in $S^2(T^*(\C/\Gamma);\C^{2\times 2})$, so for $h>0$ 
sufficiently small
\begin{equation}\label{fa:eq18}
	S + \Psi- w\mathbf{1}_2
	\geq 
	\frac{
	\langle hD_x \rangle ^2}{C}\mathbf{1}_2,
\end{equation}
for a new constant $C>0$. Recall \eqref{fa:eq19}. So, for $0<h,\delta$ 
small enough we deduce from \eqref{fa:eq18} that 
\begin{equation}\label{fa:eq20}
	S_\delta + \Psi- w\mathbf{1}_2
	\geq 
	\frac{
	\langle hD_x \rangle ^2}{2C}\mathbf{1}_2.
\end{equation}
Thus, $(S_\delta + \Psi -w )^{-1}$ exists, is uniformly bounded in $w \in W$ and 
is holomorphic in $W$. Inserting the resolvent identity 
\begin{equation}\label{fa:eq21}
	(w-S_\delta)^{-1} = (w-(S_\delta+\Psi))^{-1} - (w-S_\delta)^{-1}\Psi (w-(S_\delta+\Psi))^{-1}
\end{equation}
into \eqref{fa:e17.1} we see that the contribution from the term 
$(w-(S_\delta+\Psi))^{-1}$ is equal to zero 
by holomorphicity, provided that the support of $\widetilde{\chi}$ 
is sufficiently small. In fact, given $W\Subset\C$ a compact complex 
neighborhood of $0$ and a $\chi\in C^\infty_c(\R)$ with $\supp \chi$ 
contained in the interior of $W\cap \R$, then we can always find an 
almost holomorphic extension $\widetilde{\chi}\in C^\infty_c(\C)$ 
with $\supp \widetilde{\chi}\in W$, see for instance \cite[Chapter 8]{DiSj99}. 
So, 
\begin{equation}\label{fa:eq22}
	\chi (  S_\delta) = 
	\frac{1}{\pi}\int_{\C}(\partial_{\overline{w}} \widetilde{\chi})(w)
	(w-S_\delta)^{-1}\Psi (w-(S_\delta+\Psi))^{-1} L(dw).
\end{equation}
We can now follow line by line the arguments from (16.5.13) to (16.5.16) 
in \cite{Sj19} and conclude that for $-s\leq t\leq s$ 
\begin{equation}\label{fa:eq23}
	\begin{split}
	&(w-S_\delta)^{-1} = \mO(|\Ima w|^{-1}) : 
	H^{r-1}(\C/\Gamma;\C^{2})  \to H^{r+1}(\C/\Gamma;\C^{2}) \\
	&(w-(S_\delta+\Psi))^{-1} = \mO(1) : 
	H^{r-1}(\C/\Gamma;\C^{2})  \to H^{r+1}(\C/\Gamma;\C^{2}).
	\end{split}
\end{equation}
Since $\Psi \in \Psi_h^{-\infty}(X;\C^2,\C^2)$ we know from 
Proposition \ref{app:prop6} that $\Psi = \mO(1) : H_h^{-s+1}(\C/\Gamma;\C^{2}) 
\to H_h^{s-1}(\C/\Gamma;\C^{2})$. By \eqref{fa:eq22}, \eqref{fa:eq23} it 
follows that 
\begin{equation}\label{fa:eq24}
	\chi (  S_\delta) = \mO(1):H_h^{-s-1}(\C/\Gamma;\C^{2})
	\to H^{s+1}_h(\C/\Gamma;\C^{2}).
\end{equation}
Let $\chi \in C_c^\infty{\R;[0,1]}$ be equal to $1$ on $[0,1/2]$ with 
$\supp \chi \subset [0,1]$. Furthermore, let $e_1,e_2,\dots $ be an orthonormal 
basis of $L^2$ eigenfunctions of $S_\delta$ associated with its eigenvalues 
$0\leq t_1^2\leq t_2^2\leq \dots $, then 
\begin{equation}\label{fa:eq25}
	\chi (  S_\delta)\left( \sum_1^{N} \lambda_j e_j \right) 
	= \sum_1^{N} \lambda_j e_j. 
\end{equation}
Combining \eqref{fa:eq24} with the fact that the inclusion map 
$\iota=\mO(1): H_h^{0}(\C/\Gamma;\C^{2}) \to H_h^{-s-1}(\C/\Gamma;\C^{2})$ 
(in fact it is a compact operator) we get that $\chi (  S_\delta) = \mO(1):
H_h^{0}(\C/\Gamma;\C^{2})\to H^{s+1}_h(\C/\Gamma;\C^{2})$. Together with 
\eqref{fa:eq25} we deduce the following 
\begin{prop}\label{fa:prop7} Let $e_1,e_2,\dots $ be an orthonormal 
basis of $L^2$ eigenfunctions of $S_\delta$ associated with its eigenvalues 
$0\leq t_1^2\leq t_2^2\leq \dots $, 
	\begin{equation}\label{fa:eq26}
		\bigg\|\sum_1^{N} \lambda_j e_j \bigg\|_{H^{s+1}_h(\C/\Gamma;\C^{2})}
		\leq \mO(1)\| \lambda\|_{\ell^2}, 
	\quad 
	\lambda =(\lambda_1,\dots,\lambda_N) \in \ell^2(\{1,\dots,N\}).
	\end{equation}
\end{prop}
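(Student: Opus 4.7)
The plan is to exploit the smoothing mapping property of $\chi(S_\delta)$ established in \eqref{fa:eq24} and combine it with the observation that, for a cutoff $\chi$ equal to $1$ near $0$, the operator $\chi(S_\delta)$ acts as the identity on the span of the eigenfunctions associated to the smallest eigenvalues of $S_\delta$. This converts a smoothing estimate for $\chi(S_\delta)$ into an a priori regularity estimate for the finite sums $\sum_{j=1}^N \lambda_j e_j$.

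More precisely, first I would take $\chi\in C_c^\infty(\R;[0,1])$ as in the paragraph preceding \eqref{fa:eq25}, namely $\chi\equiv 1$ on $[0,1/2]$ with $\supp\chi\subset[0,1]$, and appeal to the almost holomorphic extension framework of \eqref{fa:e17.1} so that \eqref{fa:eq24} applies. By the spectral theorem for the selfadjoint operator $S_\delta$, one has for any finite linear combination
\begin{equation*}
\chi(S_\delta)\Bigl(\sum_{j=1}^{N}\lambda_j e_j\Bigr)
= \sum_{j=1}^{N}\chi(t_j^2)\,\lambda_j e_j,
\end{equation*}
which reduces to $\sum_{j=1}^N \lambda_j e_j$ provided $N$ is chosen so that $t_N^2\leq 1/2$ (which is the regime of interest, since we only care about the small singular values, as in \eqref{eq:sgval_cutoff}; otherwise one rescales the cutoff, which only affects constants).

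Second, I would combine \eqref{fa:eq24}, which gives $\chi(S_\delta) = \mO(1):H_h^{-s-1}\to H_h^{s+1}$, with the continuous (in fact compact) inclusion $\iota = \mO(1): H_h^{0}\to H_h^{-s-1}$ already noted before the statement. This produces
\begin{equation*}
\chi(S_\delta) = \mO(1): H_h^{0}(\C/\Gamma;\C^2) \to H_h^{s+1}(\C/\Gamma;\C^2).
\end{equation*}
Applied to $u = \sum_{j=1}^N \lambda_j e_j$, together with the previous identity and the orthonormality of the $e_j$ in $L^2$ which gives $\|u\|_{L^2} = \|\lambda\|_{\ell^2}$, one obtains
\begin{equation*}
\Bigl\|\sum_{j=1}^N \lambda_j e_j\Bigr\|_{H_h^{s+1}}
= \bigl\|\chi(S_\delta)u\bigr\|_{H_h^{s+1}}
\leq \mO(1)\,\|u\|_{H_h^0}
= \mO(1)\,\|\lambda\|_{\ell^2},
\end{equation*}
which is precisely \eqref{fa:eq26}.

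The main obstacle, and essentially the only nontrivial point, is the compatibility of the cutoff $\chi$ with the neighborhood $W\Subset\C$ used to derive \eqref{fa:eq24}: one has to make sure that $\supp\widetilde{\chi}\subset W$, since \eqref{fa:eq18} and thus \eqref{fa:eq20} were established only for $w\in W$. This is handled by choosing $\chi$ supported in a small neighborhood of $0$ and invoking the standard fact recalled before \eqref{fa:eq22} that an almost holomorphic extension can always be chosen supported in an arbitrarily small complex neighborhood of $\supp\chi$. Everything else is a routine consequence of the machinery already assembled; in particular the ellipticity estimate \eqref{fa:eq18.0}, the Sobolev mapping properties \eqref{fa:eq17}, \eqref{fa:eq19.1}, and the resolvent expansion \eqref{fa:eq22} have all been proved previously.
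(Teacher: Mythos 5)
Your proof is correct and follows the same route as the paper: pick a cutoff $\chi$ equal to $1$ near $0$, use the spectral theorem to see that $\chi(S_\delta)$ acts as the identity on $\mathrm{span}\{e_1,\dots,e_N\}$ (since $t_N\le\tau_0\le\sqrt{h}\ll 1$), and combine the smoothing bound \eqref{fa:eq24} with the uniform inclusion $H_h^0\hookrightarrow H_h^{-s-1}$ to get $\chi(S_\delta)=\mO(1):H_h^0\to H_h^{s+1}$. One small caution: your parenthetical remark that one could simply rescale the cutoff if $t_N^2>1/2$ at the cost of only affecting constants is not quite innocuous, since \eqref{fa:eq24} relies on $\supp\widetilde\chi\subset W$ for the fixed neighborhood $W$ from \eqref{fa:eq18.0}; fortunately in the regime where this proposition is applied one always has $t_N^2\le\tau_0^2\le h$, so the issue never arises.
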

This is the analogue of \cite[Proposition 16.5.4]{Sj19} and 
\cite[(3.14)]{Sj09}, stated there for the scalar-valued operator 
case. 
\\
\par
Recall that each eigenfunctions $e_j$ takes values in $\C^2$ and so we 
may write $e_j=(e_j^1,e_j^2)$ for the two components. In view of 
\eqref{eq:sc7.0}, \eqref{eq:sc7} we see that \eqref{fa:prop7} implies 
\begin{equation}\label{fa:eq27}
	\bigg\|\sum_1^{N} \lambda_j e_j^k \bigg\|_{H^{s+1}_h(\C/\Gamma;\C)}
	\leq \mO(1)\| \lambda\|_{\ell^2}, 
	\quad k=1,2. 
\end{equation}
Let $k\in\{1,2\}^N$ and let $\lambda \in \ell^2(\{1,\dots,N\})$. We 
decompose $\lambda$ in to $\lambda', \lambda'' \in \ell^2(\{1,\dots,N\})$ 
as follows: for $j=1,\dots,N$, 
when $k(j) = 1$, then $\lambda'(j)=\lambda(j)$ and $\lambda''(j)=0$, 
and when $k(j) = 2$, then $\lambda'(j)=0$ and $\lambda''(j)=\lambda(j)$. 
So $\lambda = \lambda'+ \lambda ''$. Then, by \eqref{fa:eq27} and 
the triangular inequality 
\begin{equation}\label{fa:eq28}
	\begin{split}
	\bigg\|\sum_1^{N} \lambda_j e_j^{k(j)} \bigg\|_{H^{s+1}_h(\C/\Gamma;\C)}
	&\leq \bigg\|\sum_1^{N} \lambda'_j e_j^1 \bigg\|_{H^{s+1}_h(\C/\Gamma;\C)}
		+\bigg\|\sum_1^{N} \lambda''_j e_j^2 \bigg\|_{H^{s+1}_h(\C/\Gamma;\C)}
	\\
	&\leq \mO(1)(\| \lambda'\|_{\ell^2}+\| \lambda''\|_{\ell^2} )
	\\
	&\leq \mO(1)\| \lambda\|_{\ell^2}.
	\end{split}
\end{equation}
\section{Lifting the smallest singular value}
\label{sec:LiftSSV}
The aim of this section is twofold: first we construct a small tunneling 
potential $Q$ such that the smallest singular value of $D_h^\delta-z$ is 
not too small, and secondly we will show that this actually holds with 
good probability. The steps and strategy of the proof is an adaptation 
of the method developed in \cite{Sj09}.
\subsection{Matrix considerations}\label{sec:MatCon}
Let $0<\tau_0 \ll 1$ and let $N=N(\tau_0^2)$ be as in \eqref{eq:sgval_cutoff} and 
let $\{e_j\}_{j=1}^N$ be as in \eqref{GP:eq2a}. Write $e_j(x)= 
(e_j^1(x),e_j^2(x))$, $x\in \C/\Gamma$, and define the vectors
\begin{equation}\label{eq:vec1}
	\vec{e}_k(x) 
	= \begin{pmatrix}
		e_1^k(x)\\ \vdots \\ e_N^k(x) \\
	\end{pmatrix} \in\C^N, 
	\quad k =1,2. 
\end{equation}
\begin{prop}\label{prop:LinAlg}
	For every linear subspace $L\subset \C^N$ of 
	dimension $\dim L = M -1$, $M=1,\dots, N$, there exist a 
	$x_0\in \C/\Gamma$ and a $k\in \{1,2\}$ such 
	that 
	\begin{equation*}
		|\dist (\vec{e}_k(x_0),L)|^2 \geq \frac{1}{2 |\C/\Gamma| }\tr [(1-\pi_L)],
	\end{equation*}
	where $\pi_L$ denotes the orthogonal projection onto $L$. 
\end{prop}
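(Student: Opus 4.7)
The plan is to reduce the claim to an $L^1$ averaging argument once the squared distance is written as a quadratic form and the orthonormality of $\{e_j\}$ is reinterpreted as a matrix identity. Write $\pi_L^\perp = 1-\pi_L$. Since $\pi_L^\perp$ is an orthogonal projection,
\begin{equation*}
|\dist(\vec{e}_k(x),L)|^2 = \|\pi_L^\perp \vec{e}_k(x)\|_{\C^N}^2 = \vec{e}_k(x)^* \pi_L^\perp \vec{e}_k(x) = \tr\bigl(\pi_L^\perp \vec{e}_k(x)\vec{e}_k(x)^*\bigr),
\end{equation*}
using the cyclic property of the trace together with the rank-one identity $v^*Av = \tr(A vv^*)$.

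Next I would translate the orthonormality of $e_1,\dots,e_N$ in $L^2(\C/\Gamma;\C^2)$, together with the scalar product \eqref{eq:SpProduct}, into a matrix identity. The $(i,j)$-entry of $\sum_{k=1}^2 \vec{e}_k(x)\vec{e}_k(x)^*$ equals $\sum_{k=1}^2 e_i^k(x)\overline{e_j^k(x)}$, whose integral over $\C/\Gamma$ is exactly $(e_i|e_j)=\delta_{ij}$. Hence
\begin{equation*}
\int_{\C/\Gamma} \sum_{k=1}^2 \vec{e}_k(x)\vec{e}_k(x)^* \, dx = I_N.
\end{equation*}

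Combining the two displays and using linearity of the trace,
\begin{equation*}
\sum_{k=1}^{2} \int_{\C/\Gamma} |\dist(\vec{e}_k(x),L)|^2 \, dx = \tr\!\left( \pi_L^\perp \int_{\C/\Gamma} \sum_{k=1}^2 \vec{e}_k(x)\vec{e}_k(x)^* dx \right) = \tr(\pi_L^\perp).
\end{equation*}
The integrand on the left is a sum of two non-negative functions on a domain of measure $|\C/\Gamma|$, so a straightforward averaging (pigeonhole) argument produces a point $x_0\in\C/\Gamma$ and an index $k\in\{1,2\}$ with $|\dist(\vec{e}_k(x_0),L)|^2 \geq \tr(\pi_L^\perp)/(2|\C/\Gamma|)$, which is the desired bound.

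There is essentially no obstacle here beyond making sure the bookkeeping between the $\C^2$-valued $L^2$ inner product and the outer products of the column vectors $\vec e_k(x)$ is done correctly; the continuity of $x\mapsto \vec e_k(x)$, which is needed to extract a pointwise bound from an integral inequality, is not even required since the averaging argument gives an almost-every-point statement that suffices (and in fact the eigenfunctions $e_j$ are smooth by elliptic regularity of $S_\delta$, so a pointwise $x_0$ exists).
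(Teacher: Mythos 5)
Your proof is correct and is essentially the same argument as the paper's: you compute $\sum_{k=1}^2\int_{\C/\Gamma}|\dist(\vec e_k(x),L)|^2\,dx=\tr(1-\pi_L)$ using the orthonormality of the $e_j$ in $L^2(\C/\Gamma;\C^2)$, then conclude by averaging and a pigeonhole in $k$. The only cosmetic difference is that the paper packages the orthonormality via Gramian matrices $G^k_{n,m}=(e_n^k|e_m^k)$ (with $G^1+G^2=I_N$) and expands $\tr[(1-\pi_L)G^k]$ against an orthonormal basis adapted to $L$, whereas you use the rank-one identity $v^*Av=\tr(Avv^*)$ and identify $\int\vec e_k\vec e_k^*\,dx$ with the identity directly; these are the same matrices and the same trace computation.
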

\begin{proof}
By \eqref{eq:SpProduct} we find that for $n,m=1,\dots,N$ 
\begin{equation*}
	(e_n|e_m) = (e^1_n|e_m^1)+(e^2_n|e_m^2) =: G^1_{n,m} 
	+ G^2_{n,m}. 
\end{equation*}
We write $G^k$ for the Gramian matrices given by the coefficients 
$G^k_{n,m}$. Notice that $G:=G^1+G^2$ is the identity matrix. 
Let $\nu_1,\dots,\nu_N$ be an orthonormal basis of $\C^N$ such that 
$\nu_1,\dots,\nu_{M-1}$ is an orthonormal basis of $L$ when $M\geq 2$. 
Then, 
\begin{equation*}
\begin{split}
\int_{\C/\Gamma}|\dist (\vec{e}_k(x),L)|^2 dx 
	&= \sum_{\ell =M }^N \int_{\C/\Gamma}| (\vec{e}_k(x),\nu_\ell)|^2 dx \\
	&= \sum_{\ell =M }^N \sum_{n,m=1}^N \int_{\C/\Gamma} 
		\overline{\nu_\ell(n)}e_n^k(x) \overline{e_m^k(x)}\nu_\ell(m) dx \\ 
	&=  \sum_{\ell =M }^N (G^k \nu_\ell | \nu_\ell) \\
	&= \tr [(1-\pi_L)G^k].
\end{split}
\end{equation*}
Summing over $k$, and using that $\tr [(1-\pi_L)G^1]+\tr [(1-\pi_L)G^2]= 
\tr [(1-\pi_L)G]$, we find that 
\begin{equation*}
	\sup_{x\in\C/\Gamma}(|\dist (\vec{e}_1(x),L)|^2 + |\dist (\vec{e}_2(x),L)|^2) 
	\geq \frac{1}{|\C/\Gamma|}\tr [(1-\pi_L)]=:B. 
\end{equation*}
This implies that there exists an $x_0\in \C/\Gamma$ such that 
\begin{equation}\label{eq:LinAlgProp}
	|\dist (\vec{e}_1(x_0),L)|^2 + |\dist (\vec{e}_2(x_0),L)|^2
	\geq \frac{1}{|\C/\Gamma|}\tr [(1-\pi_L)]=:B. 
\end{equation}
Now either $|\dist (\vec{e}_1(x_0),L)|^2 \geq \frac{1}{2}B$ or 
$|\dist (\vec{e}_1(x_0),L)|^2 < \frac{1}{2}B$. In the first case 
we get the statement of the proposition for $k=1$ and in the second 
case, we deduce from \eqref{eq:LinAlgProp} that 
\begin{equation*}
	\begin{split}
	|\dist (\vec{e}_2(x_0),L)|^2 \geq \frac{1}{2}B,
	\end{split}
\end{equation*}
which concludes the proof. 
\end{proof}
Given $j\in \{1,2\}^N$ and $a = (a_1,\dots,a_N) \in (\C/\Gamma)^N$ we define 
the $N\times N$ matrix  
\begin{equation}\label{eq:matrix1}
	E := E(j,a) := (\vec{e}_{j_1}(a_1), \dots, \vec{e}_{j_N }(a_N)). 
\end{equation}
\begin{lem}\label{lem:DetLowerBound}
	There exist a $j\in \{1,2\}^N$ and $a = (a_1,\dots,a_N) \in (\C/\Gamma)^N$ 
	such that 
	\begin{equation*}
		| \det E(j,a) | \geq \frac{\sqrt{ N!}}{2^{N/2} |\C/\Gamma|^{N/2}}.
	\end{equation*}
\end{lem}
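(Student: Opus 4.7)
The plan is to build the columns of $E(j,a)$ one by one, greedily maximizing at each step the distance of the new column to the span of those already chosen, and then use the classical identity that $|\det E|$ equals the product of these successive distances.

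More precisely, I will argue by induction on $\ell = 1, \dots, N$, constructing $a_1, \dots, a_\ell \in \C/\Gamma$ and $j_1, \dots, j_\ell \in \{1,2\}$ together with the subspaces
\begin{equation*}
	L_\ell := \operatorname{span}\bigl( \vec{e}_{j_1}(a_1), \dots, \vec{e}_{j_\ell}(a_\ell) \bigr) \subset \C^N,
	\quad L_0 := \{0\},
\end{equation*}
such that $\dim L_\ell = \ell$ and
\begin{equation*}
	\bigl| \operatorname{dist}\bigl( \vec{e}_{j_\ell}(a_\ell), L_{\ell-1} \bigr) \bigr|^2
	\ge \frac{N - \ell + 1}{2\,|\C/\Gamma|}.
\end{equation*}
The inductive step is a direct application of Proposition \ref{prop:LinAlg} to $L = L_{\ell-1}$, which has $\dim L = \ell - 1 = M - 1$ for $M = \ell$, giving $\tr[(1 - \pi_{L_{\ell-1}})] = N - \ell + 1$. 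Since the resulting distance is strictly positive for $\ell \le N$, the new column is linearly independent from the previous ones, so $\dim L_\ell = \ell$ and the induction continues.

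Once the columns are chosen, I invoke the standard Gram--Schmidt / volume identity: if $v_1, \dots, v_N$ are the columns of an $N \times N$ matrix $E$, then
\begin{equation*}
	|\det E| = \prod_{\ell = 1}^N \operatorname{dist}\bigl( v_\ell, \operatorname{span}(v_1, \dots, v_{\ell-1}) \bigr).
\end{equation*}
Applying this to $v_\ell = \vec{e}_{j_\ell}(a_\ell)$ and plugging in the lower bounds obtained above yields
\begin{equation*}
	|\det E(j,a)|^2 \ge \prod_{\ell=1}^N \frac{N - \ell + 1}{2\,|\C/\Gamma|} = \frac{N!}{(2\,|\C/\Gamma|)^N},
\end{equation*}
which is exactly the claimed bound after taking square roots.

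I do not expect any serious obstacle: the content is really Proposition \ref{prop:LinAlg}, and everything else is a greedy/volumetric argument. The only point requiring a moment of care is that $\dim L_\ell = \ell$ (rather than smaller) at each stage, but this is automatic because the distance lower bound is strictly positive as long as $\ell \le N$, so every newly selected column lies outside the previous span.
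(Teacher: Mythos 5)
Your proof is correct and follows essentially the same strategy as the paper's: greedily choose the columns one at a time using Proposition \ref{prop:LinAlg} applied to the span of the already-chosen columns, then combine the successive distance lower bounds via the Gram--Schmidt/volume identity $|\det E| = \prod_\ell \operatorname{dist}(v_\ell, \operatorname{span}(v_1,\dots,v_{\ell-1}))$. The only difference is presentational: you explicitly note that the strict positivity of the distance bounds keeps $\dim L_\ell = \ell$ at each stage, a point the paper leaves implicit when it refers to ``the basis'' $\vec{e}_{j_1}(a_1),\dots,\vec{e}_{j_N}(a_N)$.
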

\begin{proof}
First let $L=\{0\}$. By Proposition \ref{prop:LinAlg} there exists a point 
$a_1\in\C/\Gamma$ and a $j_1\in\{1,2\}$ such that 
\begin{equation*}
	|\dist (\vec{e}_{j_1}(a_1),L)|^2 = \| \vec{e}_{j_1}(a_1)\|^2 \geq 
	\frac{1}{2 |\C/\Gamma| }.
\end{equation*}
Next, let $L=\C \vec{e}_{j_1}(a_1)$, then by Proposition \ref{prop:LinAlg} 
there exists a point $a_2\in\C/\Gamma$ and a $j_2\in\{1,2\}$ such that 
\begin{equation*}
	|\dist (\vec{e}_{j_2}(a_2),L)|^2 \geq \frac{N-1}{2 |\C/\Gamma| }.
\end{equation*}
A repeated application of Proposition \ref{prop:LinAlg} shows that for every  
$n=\{2,\dots, N\}$ there exists a point $a_n\in\C/\Gamma$ and a $j_n\in\{1,2\}$ 
such that
\begin{equation*}
	|\dist (\vec{e}_{j_n}(a_n),\C \vec{e}_{j_1}(a_1)\oplus \dots \oplus 
	\C \vec{e}_{j_{n-1}}(a_{n-1}))|^2 \geq \frac{N-n+1}{2 |\C/\Gamma| }.
\end{equation*}
Let $\nu_1,\dots,\nu_N\in\C^N$ be the Gram-Schmidt orthonormalization of 
the basis $\vec{e}_{j_1}(a_1),\dots,\vec{e}_{j_N}(a_N)$ so that  
\begin{equation*}
	\vec{e}_{j_n}(a_n) \equiv c_k \nu_k \mod(\nu_1,\dots,\nu_{n-1})
\end{equation*}
with $|c_k|^2 \geq (N-k+1)/(2|\C/\Gamma|)$. Thus, 
\begin{equation*}
	|\det E | = |\det (c_1 \nu_1, \dots , c_N \nu_N)| 
			 = | c_1 \cdot \dots \cdot c_N|,
\end{equation*}
and 
\begin{equation*}
	|\det E |^2 \geq \prod_{k=1}^N \frac{N-k+1}{2|\C/\Gamma|},
\end{equation*}
which concludes the proof. 
\end{proof}
\subsection{Admissible potential and approximation by a Dirac potential}
Analogously to \cite{Sj09,Sj10a} we use the following notion 
\begin{definition}\label{adm:Def1}
Let $L\gg 1$ and let $\psi^j_n$, $\mu^j_n$, $j=1,2$ be as in \eqref{eq2}. An 
admissible potential is a potential of the form 
\begin{equation}\label{adm:eq1}
	q^j_\alpha(x) = \sum_{0\leq \mu^j_{n}\leq L} \alpha_n \psi^j_{n}(x), 
	\quad \alpha \in \C^{D_j}, ~D_j=D_j(L)>0. 
\end{equation}
Given two admissible potentials $q^1_\alpha(x),q_\beta^2(x)$, an 
admissible tunneling potential is a potential of the form 
\begin{equation}\label{adm:eq1.2}
	Q_\gamma = 
	\begin{pmatrix}
		0 & q^1_\alpha(x) \\
		-q^2_\beta(x) & 0 \\
	\end{pmatrix},
	\quad
	\gamma=(\alpha,\beta) \in \C^{D_1}\times \C^{D_2} \simeq \C^{D}.
\end{equation}
\end{definition}
By the Weyl asymptotics \eqref{eq:sa8.00} we know 
\begin{equation}\label{adm:eq1.1}
	D_j  \asymp L^2 h^{-2}. 
\end{equation}
Such potentials can be approximated by Dirac potential in $H^{-s}$. 
The following quantitative version comes from \cite[Proposition 17.2.2]{Sj19}, 
see also \cite[Proposition 6.2]{Sj09}.
\begin{prop}\label{adm:prop1} 
	Let $s>1$, $\varepsilon\in ]0,s-1[$, and $j\in\{1,2\}$. Then there exists $C_{s,\varepsilon}>0$ 
	such that for $a\in \C/\Gamma$ there exist $\alpha_k\in \C$, $1\leq k<\infty$, such that for $L\geq 1$ 
	there exists $r\in H^{-s}(\C/\Gamma;\C)$ such that 
	\begin{equation}\label{adm:eq2}
		\delta(x-a) = \sum_{0\leq \mu_{n}^j\leq L} \alpha_n \psi^j_{n}(x) 
			+r^j(x),
	\end{equation}
	where $\|r^j\|_{H^{-s}_h} \leq C_{s,\varepsilon}L^{-(s-1-\varepsilon)}h^{-1}$, 
	and 
	\begin{equation}\label{adm:eq3}
		\left( \sum_{0\leq \mu_{n}^j\leq L}|\alpha_n|^2 \right)^{1/2} 
		\leq 
		\langle L\rangle^{1+\varepsilon}
		\left( \sum_{0\leq \mu_{n}^j\leq L}
		\langle \mu_n^j\rangle^{-2(1+\varepsilon)}|\alpha_n|^2 \right)^{1/2} 
		\leq 
		C_{s,\varepsilon}L^{1+\varepsilon}h^{-1}.
	\end{equation}
\end{prop}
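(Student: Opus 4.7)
The plan is to approximate $\delta(x-a)$ by its spectral projection onto the $P_j$-frequencies $\le L$, and then reduce both quantitative assertions to estimates on the diagonal density
\begin{equation*}
N_a(\mu) := \sum_{\mu_n^j \le \mu}|\psi_n^j(a)|^2
\end{equation*}
at the fixed point $a\in\C/\Gamma$. Since $\{\psi_n^j\}_{n\in\N}$ is an $L^2$-orthonormal basis, Parseval gives formally $\delta(x-a) = \sum_n \overline{\psi_n^j(a)}\,\psi_n^j(x)$, so I would set $\alpha_n := \overline{\psi_n^j(a)}$ for $\mu_n^j \le L$ and $r^j(x) := \sum_{\mu_n^j > L}\overline{\psi_n^j(a)}\psi_n^j(x)$. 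The identity $\delta(x-a) = \sum_{\mu_n^j\le L}\alpha_n\psi_n^j(x) + r^j(x)$ then holds in $\mathcal D'(\C/\Gamma)$.

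The next step is to identify the semiclassical Sobolev norms spectrally: since $P_j = h^2\widetilde P_j$ is $h$-independent-after-rescaling, positive, and classically elliptic of order two, the functional calculus yields the equivalence
\begin{equation*}
\|u\|_{H^{-s}_h}^2 \asymp \sum_n \langle\mu_n^j\rangle^{-2s}\,|(u|\psi_n^j)|^2,
\end{equation*}
so $\|r^j\|_{H^{-s}_h}^2 \asymp \sum_{\mu_n^j > L}\langle\mu_n^j\rangle^{-2s}|\psi_n^j(a)|^2$. I would then invoke a pointwise (local) Weyl bound of the shape $N_a(\mu)\le C\,\mu^2 h^{-2}$, valid uniformly in $a\in\C/\Gamma$ and in $\mu\ge h$, for a positive elliptic second-order operator on a compact $2$-manifold; such a bound follows from on-diagonal estimates for the semiclassical spectral projector (equivalently, from the heat-kernel short-time asymptotics of $e^{-tP_j}$). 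Combined with a dyadic decomposition of the tail $\{\mu_n^j > L\}$, this produces $\|r^j\|_{H^{-s}_h}^2 \le C_{s,\varepsilon}\,h^{-2}L^{-2(s-1-\varepsilon)}$, in which the $\varepsilon$-slack absorbs the sub-principal error in the local Weyl bound.

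For the coefficient estimate, the same Abel/dyadic scheme applied with the weight $\langle\mu_n^j\rangle^{-2(1+\varepsilon)}$ produces a geometric sum in $\mu$ that converges for any $\varepsilon>0$, giving
\begin{equation*}
\sum_{0\le \mu_n^j\le L}\langle\mu_n^j\rangle^{-2(1+\varepsilon)}|\psi_n^j(a)|^2 \;\le\; \sum_n \langle\mu_n^j\rangle^{-2(1+\varepsilon)}|\psi_n^j(a)|^2 \;\le\; C_\varepsilon\,h^{-2},
\end{equation*}
which is the second inequality in \eqref{adm:eq3}; the first inequality there is just the crude comparison $\langle\mu_n^j\rangle \le \langle L\rangle$ on the truncation range. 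The main obstacle is the careful semiclassical bookkeeping: one must track the scaling $\mu_n^j = h\sqrt{\sigma_n}$ between the eigenvalues of $P_j$ and those of the $h$-independent operator $\widetilde P_j$, and verify that the local Weyl estimate holds uniformly in $h$ and in $a$, so that the convergence at infinity provided by $\varepsilon>0$ is not destroyed by the $h$-dependent density of eigenvalues near $\mu=0$.
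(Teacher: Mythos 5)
Your proposal is correct and reproduces the argument behind the sources the paper actually cites for this statement (\cite[Prop.\ 17.2.2]{Sj19}, \cite[Prop.\ 6.2]{Sj09}) — the paper itself does not prove it. The key ingredients are exactly as you describe: set $\alpha_n=\overline{\psi_n^j(a)}$, identify $\|\cdot\|_{H^{\pm s}_h}$ spectrally via $(1+P_j)^{\pm s/2}$ (Proposition~\ref{app:prop9} of the paper), and control both the tail and the weighted coefficient sum through the uniform on-diagonal Weyl bound $N_a(\mu)\leq C\mu^2h^{-2}$ for $\mu\gtrsim h$, which follows from the pointwise Weyl law for the $h$-independent operator $\widetilde P_j$ after rescaling $\mu_n^j=h\lambda_n$.

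One small remark on bookkeeping: the $\varepsilon>0$ is not needed to absorb the sub-principal Weyl remainder in the tail estimate — your dyadic (or Abel) computation actually gives the sharper bound $\|r^j\|_{H^{-s}_h}\leq C_sL^{-(s-1)}h^{-1}$ for any $s>1$, which of course implies the stated $L^{-(s-1-\varepsilon)}$ bound since $L\geq 1$. The $\varepsilon$-slack is genuinely required only for the coefficient bound, where $\sum_n\langle\mu_n^j\rangle^{-2}|\psi_n^j(a)|^2$ would diverge logarithmically in dimension $2$ but $\sum_n\langle\mu_n^j\rangle^{-2(1+\varepsilon)}|\psi_n^j(a)|^2\leq C_\varepsilon h^{-2}$ converges. (Sj\"ostrand's formulation carries the $\varepsilon$-loss into the remainder estimate because he runs the tail through the same weighted $\ell^2$ bound, $\langle\mu_n\rangle^{-2s}\leq\langle L\rangle^{-2(s-1-\varepsilon)}\langle\mu_n\rangle^{-2(1+\varepsilon)}$ on $\{\mu_n>L\}$, which is a slightly lossier but tidier bookkeeping.) This is a cosmetic point; your argument is sound.
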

Note that when the operators $P_j$ in \eqref{eq2} have real coefficients, 
then we can choose the $\alpha_k$ to be real-valued.
\subsection{A suitable Dirac tunneling potential}
Given $j\in \{1,2\}^N$ and $a = (a_1,\dots,a_N) \in (\C/\Gamma)^N$ 
we consider matrix-valued Dirac potentials of the form 
\begin{equation}\label{eq:DiracPotential}
	\widehat{Q}=\widehat{Q}(j,a)= \sum_{k=1}^N \begin{pmatrix}
		0 &  \delta_{j_k,2} \,\delta(x-a_k) \\ 
		-\delta_{j_k,1}\, \delta(x-a_k) & 0 \\
	\end{pmatrix},
\end{equation}
where $\delta(x-a)$ denotes the Dirac delta distribution at $a\in \C/\Gamma$ 
and $\delta_{n,m}$ is the Kronecker delta. 
\\
\par 
Let $0<\tau_0 \ll 1$, $N=N(\tau_0^2)\geq 1$ and let $\{e_j\}_{j=1}^N$ be as in 
the beginning of Section \ref{sec:MatCon}. Recall \eqref{GP:eq3} and suppress the 
tilde for ease of notation. Consider  
\begin{equation*}
	(\widehat{Q}(j,a) e_n | f_m) = (\widehat{Q}(j,a) e_n | \mathcal{G} e_m) 
	= \sum_{k=1}^N e_n^{j_k}(a_k)e_m^{j_k}(a_k) 
	= (E \circ E^t)_{n,m},
\end{equation*}
where in the last line we used that $E_{n,m} = e_n^{j_m}(a_m)$, 
cf. \eqref{eq:vec1} and \eqref{eq:matrix1}. 
Define the complex $N\times N$ matrix $M_{\widehat{Q}}$ by its coefficients 
\begin{equation}\label{eq:DiracPotential2.0}
	(M_{\widehat{Q}})_{n,m} := (\widehat{Q}(j,a) e_n | f_m), \quad n,m\in \{1,\dots,N\}. 
\end{equation}
Thanks to Lemma \ref{lem:DetLowerBound} there exist 
$j\in \{1,2\}^N$ and $a = (a_1,\dots,a_N) \in (\C/\Gamma)^N$ such that 
\begin{equation}\label{eq:DiracPotential2}
	|\det M_{\widehat{Q}}| = |\det E \circ E^t| \geq 
	\frac{N!}{2^{N} |\C/\Gamma|^{N}}.
\end{equation}
For $s>1$, we have that $\|\delta(\cdot-a)\|_{H_h^{-s}(\C/\Gamma;\C)}= \mO_s(1)h^{-1}$. 
Hence, we get that for all $\lambda,\mu \in\C^N$, 
\begin{equation*}
	\begin{split}
		\langle 
			M_{\widehat{Q}} \lambda |\mu 
		\rangle
		&= \sum_{n,m=1}^N (M_{\widehat{Q}})_{n,m} \lambda_m \overline{\mu}_n \\
		& = \sum_{n,m=1}^N \int \langle\widehat{Q}(j,a) e_n | \mathcal{G} e_m\rangle
			\lambda_m \overline{\mu}_n dx \\
		& = \int \sum_{k=1}^N \delta(x-a_k) \left(\sum_{n=1}^N\overline{\mu}_n 	e_n^{j_k}(x)\right)
			\left(\sum_{m=1}^N\lambda_m	e _m^{j_k}(x)\right) dx \\
		& \leq \mO_s(1)Nh^{-2} \|\mu\| \|\lambda\|,
	\end{split}
\end{equation*}
where the scalar product in the second line is with respect to the $\C^2$ structure and 
where in the last line we used \eqref{fa:eq28}. We conclude that 
\begin{equation}\label{eq:DiracPotential3}
	s_1(M_{\widehat{Q}}) 
	= \| M_{\widehat{Q}}\|_{\C^N\to\C^N} 
	\leq CNh^{-2},
\end{equation}
for some $C=C_s>0$. Since 
\begin{equation*}
	s_1(M_{\widehat{Q}})^N \geq s_1(M_{\widehat{Q}})^{k-1} s_k(M_{\widehat{Q}})^{N-k+1} 
	\geq \prod_1^N s_k(M_{\widehat{Q}}) = |\det M_{\widehat{Q}}|, 
\end{equation*}
we deduce from \eqref{eq:DiracPotential2} that 
\begin{equation}\label{eq:DiracPotential4}
	s_1(M_{\widehat{Q}}) 
	\geq 
	\frac{(N!)^{1/N}}{2|\C/\Gamma|}
	\geq\frac{N}{2\e|\C/\Gamma|},
\end{equation}
where in the last step we used Stirling's formula. Similarly, 
we get for $k\geq 2$ that 
\begin{equation*}
	s_k(M_{\widehat{Q}}) 
	\geq \left( 
		\frac{N!}{s_1(M_{\widehat{Q}})^{k-1} 2^N |\C/\Gamma|^N } 
	\right)^{\frac{1}{N-k+1}}. 
\end{equation*}
Combining this with \eqref{eq:DiracPotential3} yields 
\begin{equation}\label{eq:DiracPotential5}
	s_k(M_{\widehat{Q}}) 
	\geq 
	\frac{(N!)^{\frac{1}{N-k+1}}}
	{C^{\frac{k-1}{N-k+1}} (2|\C/\Gamma|)^{\frac{N}{N-k+1}} }
	\left(\frac{h^2}{N}\right)^{\frac{k-1}{N-k+1}}.
\end{equation}
Next, we apply Proposition \ref{adm:prop1} to approximate each 
$\delta(x-a_k)$ in \eqref{eq:DiracPotential} by an admissible potential 
as in Definition \ref{adm:Def1}, see also \eqref{eq2}, \eqref{eq3}. 
This yields an approximation of $\wh{Q}$ by an admissible tunneling potential:
\begin{equation}\label{eq:DiracPotential6}
	\widehat{Q}=Q + R, 
	\quad 
	Q =  \begin{pmatrix}
		0 &  q^1_\alpha(x) \\ 
		-q^2_\beta(x)& 0 \\
	\end{pmatrix},
	\quad 
	R=  \begin{pmatrix}
		0 &  r^1(x) \\ 
		-r^2(x)& 0 \\
	\end{pmatrix},
\end{equation} 
where 
\begin{equation}\label{eq:DiracPotential7}
	 q^1_\alpha(x) = \sum_{0\leq \mu_{n}^1\leq L} \alpha_n \psi_{n}^1(x), 
	 \quad 
	q^2_\beta(x)= \sum_{0\leq \mu_{n}^2\leq L} \beta_n \psi_{n}^2(x), 
	\quad \alpha \in \C^{D_1}, \beta \in \C^{D_2}, 
\end{equation}  
with 
\begin{equation}\label{eq:DiracPotential8}
	\|q^1_\alpha\|_{H^{-s}_h},~\|q^2_\beta\|_{H^{-s}_h}
	=\mO_{s,\varepsilon}(1)h^{-1}N, 
\end{equation}
\begin{equation}\label{eq:DiracPotential9}
	\left( \sum_{0\leq \mu_{n}^1\leq L}|\alpha_n|^2 \right)^{1/2}, 
	~
	\left( \sum_{0\leq \mu_{n}^2\leq L}|\beta_n|^2 \right)^{1/2}
	\leq 
	\mO_{s,\varepsilon}(1) L^{1+\varepsilon}h^{-1}N, 
\end{equation} 
and 
\begin{equation}\label{eq:DiracPotential10}
	\|r^k\|_{H^{-s}_h} \leq \mO_{s,\varepsilon}(1)L^{-(s-1-\varepsilon)}h^{-1}N, 
	\quad k=1,2. 
\end{equation}  
Note that the constants here do not depend on $a_k$ and $j_k$. 
\begin{prop}\label{prop:TunPotAdm} Let $s>1$ and $\varepsilon \in ]0,s-1[$. 
	We can find an admissible tunneling potential $Q$ 
	as in (\ref{eq:DiracPotential6}--\ref{eq:DiracPotential9}) such that the matrix $M_Q\in C^{N\times N}$,  
	defined by its coefficients
	\begin{equation}\label{eq:DiracPotential11}
		(M_Q)_{n,m} = (Q e_n | \mathcal{G} e_m), 
	\end{equation}  
	satisfies 
	\begin{equation}\label{eq:DiracPotential13}
		s_k(M_{Q}) 
		\geq 
		\frac{(N!)^{\frac{1}{N-k+1}}}
		{C^{\frac{k-1}{N-k+1}} (2|\C/\Gamma|)^{\frac{N}{N-k+1}} }
		\left(\frac{h^2}{N}\right)^{\frac{k-1}{N-k+1}}
		-\mO_{s,\varepsilon}(1)L^{-(s-1-\varepsilon)}h^{-2}N,
	\end{equation}
	with $C>0$ as in \eqref{eq:DiracPotential5}, and 
	\begin{equation}\label{eq:DiracPotential14}
		s_1(M_Q) \leq \mO_{s,\varepsilon}(1)h^{-2}N.
	\end{equation}
	Additionally 
	\begin{equation}\label{eq:DiracPotential14.0}
			\|q^1_\alpha \|_{H_{h}^s(\C/\Gamma;\C)}, 
			\|q^2_\beta \|_{H_{h}^s(\C/\Gamma;\C)}, 
			\|Q\|_{H_{h}^s(\C/\Gamma;\C^{2\times 2})} 
			\leq 
			\mO_{s,\varepsilon}(1)h^{-1}NL^{(s+1+\varepsilon)}.
	\end{equation}
\end{prop}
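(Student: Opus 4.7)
The strategy is to combine the Dirac potential construction already developed in equations \eqref{eq:DiracPotential}--\eqref{eq:DiracPotential5} with the admissible-potential approximation of Proposition \ref{adm:prop1}, and then read off all three conclusions by a perturbation argument based on the Ky Fan inequalities \eqref{eq:KyFan}. Concretely, I begin from the Dirac tunneling potential $\widehat Q = \widehat Q(j,a)$ produced by Lemma \ref{lem:DetLowerBound} (through the successive application of Proposition \ref{prop:LinAlg}), for which the matrix $M_{\widehat Q}$ already satisfies the norm bound \eqref{eq:DiracPotential3} and the lower singular-value bounds \eqref{eq:DiracPotential4}, \eqref{eq:DiracPotential5}. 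Applying Proposition \ref{adm:prop1} entrywise decomposes $\widehat Q = Q + R$ as in \eqref{eq:DiracPotential6}, with admissible components satisfying \eqref{eq:DiracPotential7}--\eqref{eq:DiracPotential9} and remainder components obeying \eqref{eq:DiracPotential10}. The proof then reduces to controlling the perturbation $M_Q - M_{\widehat Q} = -M_R$ in operator norm.

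The key step is the bound $\| M_R \|_{\C^N \to \C^N} \le \mO_{s,\varepsilon}(1)\, L^{-(s-1-\varepsilon)} h^{-2} N$. For any $\lambda, \mu \in \C^N$, I would expand
\begin{equation*}
\langle M_R \lambda \mid \mu\rangle
= \int_{\C/\Gamma} r^1(x)\,\bigl(\textstyle\sum_n \overline{\mu_n} e_n^2\bigr)\bigl(\sum_m \lambda_m e_m^2\bigr)\,dx
 + \int_{\C/\Gamma} r^2(x)\,\bigl(\textstyle\sum_n \overline{\mu_n} e_n^1\bigr)\bigl(\sum_m \lambda_m e_m^1\bigr)\,dx ,
\end{equation*}
and then pair $r^k \in H^{-s}_h$ against the product of eigenfunction linear combinations via the $H^{-s}_h\times H^s_h$ duality. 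The needed $H^{s+1}_h$-control of $\sum_n \mu_n e_n^k$ is exactly Proposition \ref{fa:prop7} and its component version \eqref{fa:eq28}; semiclassical Sobolev embedding of $H^{s+1}_h \hookrightarrow L^\infty$ on the two-dimensional torus (for $s>1$) then gives a pointwise bound of order $h^{-1}\|\mu\|$, so the algebra estimate in $H^s_h$ produces $\|(\sum \overline{\mu_n}e_n^k)(\sum \lambda_m e_m^k)\|_{H^s_h}\le \mO(h^{-1})\|\mu\|\|\lambda\|$. Combined with \eqref{eq:DiracPotential10} this yields the claimed operator norm bound on $M_R$. This is the same mechanism already used to obtain \eqref{eq:DiracPotential3}, just carried out with $H^{-s}_h$ control instead of pointwise delta evaluations.

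With the operator norm bound on $M_R$ in hand, the Ky Fan inequality \eqref{eq:KyFan} gives $s_k(M_Q) \ge s_k(M_{\widehat Q}) - \| M_R \|$ for every $k$, and \eqref{eq:DiracPotential13} then follows by inserting the lower bound \eqref{eq:DiracPotential5}. The corresponding upper estimate $s_1(M_Q) \le s_1(M_{\widehat Q}) + \| M_R \|$ together with \eqref{eq:DiracPotential3} gives \eqref{eq:DiracPotential14}. Finally, the Sobolev bound \eqref{eq:DiracPotential14.0} is routine: from \eqref{eq:DiracPotential9} we have $\|\alpha\|_{\C^{D_1}}, \|\beta\|_{\C^{D_2}} \le \mO_{s,\varepsilon}(1) L^{1+\varepsilon} h^{-1} N$, and plugging these into the semiclassical Sobolev estimate \eqref{eq4.1} for $q^j_\alpha$ produces the factor $L^{s+1+\varepsilon}$.

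The main technical obstacle is the operator-norm estimate on $M_R$: one must exploit the higher semiclassical Sobolev regularity of the eigenfunctions $e_j$ of $S$ (Proposition \ref{fa:prop7}) in a way that is \emph{uniform} in $N$, so that coupling it with the $h^{-1} N$ factor from \eqref{eq:DiracPotential10} yields only the benign $h^{-2}N$ loss. Everything else is a direct translation of \cite[Prop.~17.2.5]{Sj19} / \cite[Prop.~6.4]{Sj09} to the present matrix setting, with the essential input being the off-diagonal structure \eqref{GP:eq3.1} that makes the pairing $(Qe_n\mid \mathcal{G}e_m)$ nondegenerate.
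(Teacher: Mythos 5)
Your proposal follows essentially the same route as the paper's proof: decompose $\widehat Q = Q + R$ via Proposition \ref{adm:prop1}, bound $\|M_R\|$ in operator norm by pairing $R\in H^{-s}_h$ against the product of eigenfunction combinations using the regularity from Proposition \ref{fa:prop7}/\eqref{fa:eq28} and the semiclassical Sobolev algebra property (Proposition \ref{app:prop4}), then conclude \eqref{eq:DiracPotential13} and \eqref{eq:DiracPotential14} by Ky Fan, and derive \eqref{eq:DiracPotential14.0} from the coefficient bound. The only cosmetic differences are that the paper goes directly to the $H^s_h$ product inequality rather than passing through $L^\infty$, and obtains \eqref{eq:DiracPotential14.0} by weighting \eqref{eq:sa8.0} against the middle term of \eqref{adm:eq3} rather than combining \eqref{eq4.1} with \eqref{eq:DiracPotential9}; both give the identical bound.
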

Before we turn to the proof, note that following 
the remark after Proposition \ref{adm:prop1}, we can choose the admissible 
tunneling potential obtained in Proposition \ref{eq:DiracPotential10} 
to be real-valued when the operators $P_j$ in \eqref{eq2} have real coefficients.
\begin{proof}[Proof of Proposition \ref{eq:DiracPotential10}]
	0. Recall \eqref{eq:DiracPotential6} and define the matrices 
	$M_{\widehat{Q}}, M_R\in \C^{N\times N}$ by their coefficients 
	$(M_{\widehat{Q}})_{n,m} = (\widehat{Q} e_n | \mathcal{G} e_m)$, 
	$(M_R)_{n,m} = (R e_n | \mathcal{G} e_m)$ respectively. Then, 
	\begin{equation}\label{eq:DiracPotential12}
		M_{\widehat{Q}}= M_Q+ M_R.
	\end{equation}
	1. Our first step is to control the operator norm of the matrix 
	$M_R$. So, for $\lambda,\mu\in \C^N$,  
	\begin{equation*}
		\langle 
			M_{R} \lambda |\mu 
		\rangle 
		=\int_{\C/\Gamma} \left\langle 
			R \left(\sum \overline{\mu}_n e_n(x)\right) |
			\left(\sum \lambda_m e_m(x)\right)
		\right\rangle dx.
	\end{equation*}
	By \eqref{fa:eq26}, \eqref{eq:DiracPotential10} and Proposition 
	\ref{app:prop4} 
	\begin{equation*}
	\begin{split}
		|\langle 
		M_{R} \lambda |\mu 
		\rangle|
		&\leq O_s(1) h^{-1} \|R\|_{H^{-s}_h(\C/\Gamma;\C^{2\times 2})} 
		\Big\|\left(\sum \overline{\mu}_n e_n\right)\Big\|_{H^{s}_h(\C/\Gamma;\C^{2})} 
		\Big\|\left(\sum \lambda_m e_m\right)\Big\|_{H^{s}_h(\C/\Gamma;\C^{2})}
		\\
		&\leq O_s(1)h^{-1} \|R\|_{H^{-s}_h(\C/\Gamma;\C^{2\times 2})} 
		\|\mu\|_{\ell^2} 
		\|\lambda\|_{\ell^2}
		\\
		&\leq O_{s,\varepsilon}(1)L^{-(s-1-\varepsilon)}h^{-2}N
		\|\mu\|_{\ell^2} 
		\|\lambda\|_{\ell^2}.
	\end{split}
	\end{equation*}
	Hence
	\begin{equation*}
		\| M_{R} \| \leq \mO_{s,\varepsilon}(1)L^{-(s-1-\varepsilon)}h^{-2}N,
	\end{equation*}
	which, together with \eqref{eq:DiracPotential3}, implies 
	\eqref{eq:DiracPotential14} since $L\geq1$.
	\\
	\par 
	2. Applying the first Ky Fan inequality in \eqref{eq:KyFan} to \eqref{eq:DiracPotential12} 
	yields $s_k(M_Q) \geq s_k(M_{\widehat{Q}}) - s_1(M_R)$ for $k=1,\dots,N$. The estimate 
	\eqref{eq:DiracPotential13} is then a direct consequence of \eqref{eq:DiracPotential14} 
	and \eqref{eq:DiracPotential5}. 
	\\
	\par
	3. Finally, combining \eqref{eq:sa8.0} with 
	\eqref{eq:DiracPotential8} and \eqref{eq:DiracPotential7}, we find 
	\begin{equation*}
	\begin{split}
		\|q^1_\alpha \|_{H_{h}^s(X;\C)}^2 
		&\leq 
		\mO(1)
		\sum_{\mu_n\leq L}\langle \mu_n\rangle^{2s}|\alpha_n|^2
		\\
		&\leq 
		\mO(1)L^{2(s+1+\varepsilon)}
		\sum_k\langle \mu_n\rangle^{-2(1+\varepsilon)}|\alpha_n|^2
		\\
		&\leq 
		\mO_{s,\varepsilon}(1)h^{-2}N^2L^{2(s+1+\varepsilon)}.
	\end{split}
	\end{equation*}
	Performing the same argument for $q^2_\beta$ proves 
	\eqref{eq:DiracPotential14.0}.
\end{proof}
\subsection{Lifting the smallest singular value - an iteration argument}
Fix $s>1$, $\varepsilon\in ]0, s-1[$, let $0\leq\delta_0\ll \sqrt{h}$ and put 
\begin{equation}\label{lss:eq18}
	D_h^{\delta_0} = D_h + \delta_0 Q_0, 
	\text{with } \|Q_0\|_{H_h^s(\C/\Gamma;\C^{2\times 2})} \ll h,
\end{equation}
so that $\|Q_0\|_{L^\infty(\C/\Gamma;\C^{2\times2})},\|Q_0\|_{L^2(X;\C^2)\to L^2(X;\C^2)} \leq 1$, 
see Proposition \ref{app:prop4} and \eqref{eq:DiracPotential9}. 
We suppose additionally that 
\begin{equation}\label{lss:eq18.0}
	Q_0 
	= 
	\begin{pmatrix}
		0 & q_0^1 \\ - q_0^2 & 0
	\end{pmatrix}
\end{equation}
so that $D_h^{\delta_0}$ satisfies the symmetry \eqref{GP:eq3.0}. 
\par
Fix $z\in\C$ and let $N$ denote the number of eigenvalues 
of $((D_h^{\delta_0}-z)^*(D_h^{\delta_0}-z) )^{1/2}$ in 
the interval $[0,\tau_0]$ with $\tau_0\in ]0,\sqrt{h}]$, i.e. 
\begin{equation}\label{lss:eq0}
	0\leq t_1 \leq \dots \leq t_N \leq \tau_0 < t_{N+1},
	\quad t_{\nu} = t_{\nu}(D_h^{\delta_0}-z).
\end{equation}
We allow $N=0$, in which case $t_1>\tau_0$. Setting $\alpha = Ch$, 
$C\gg 1$, we know from Proposition \ref{fa:prop8} that 
\begin{equation}\label{lss:eq0.1}
	N = N(\tau_0^2) = \mO( h^{-1} ).
\end{equation}
\begin{prop}\label{prop:p1}
1. Let $Q$ be an admissible tunneling potential as in Proposition 
\ref{prop:TunPotAdm}. Then, 
\begin{equation}\label{prop:p1.1}
	\|Q\|_{L^\infty(\C/\Gamma;\C^{2\times 2})} 
	\leq \mO(1)h^{-1} L^s \|(\alpha,\beta)\|_{\C^{D}}.
\end{equation}
2. Fix $z\in\C$, $s>1$, $\varepsilon\in ]0, s-1[$, $\eta>0$ and 
$\kappa_3\geq 2+ \frac{5(1+\varepsilon)}{s-1-\varepsilon}$. Put 
$\kappa_1= 1+ \frac{5s}{s-1-\varepsilon} + \kappa_3$, 
put $\kappa_2 = 2(\kappa_1+2) + \eta$, and let $C_L>0$ be sufficiently large. 
Then, 
\\
\\
a) when $N>0$ is large enough, there exists an admissible tunneling 
potential $Q$ as in Proposition \ref{prop:TunPotAdm} with 
\begin{equation}\label{prop:p1.2}
	L= C_Lh^{-\frac{5}{s-1-\varepsilon}}, 
\end{equation}
satisfying 
\begin{equation}\label{prop:p1.3}
	\|Q\|_{H_{h}^s(\C/\Gamma;\C^{2\times 2})}
			\leq \mO(1)h^{-\kappa_1+1},
	\quad 
	\|Q\|_{L^\infty(\C/\Gamma;\C^{2\times 2})} 
	\leq C_1 h^{-\kappa_1}
\end{equation}
for some $C_1>0$ and 
\begin{equation}\label{prop:p1.4}
	\|(\alpha,\beta)\|_{\C^{D}} = \mO(1)L^{1+\varepsilon }h^{-1}N, 
\end{equation}
such that if 
\begin{equation}\label{prop:p1.51}
	D_h^{\delta_0,\delta}= D_h^{\delta_0}+\delta \frac{h^{\kappa_1}Q}{C_1}, 
	\quad \delta = \frac{\tau_0}{C_0}h^{\kappa_1+2},
\end{equation}
for $C_0>0$ large enough, then 
\begin{equation}\label{prop:p1.5}
	\begin{split}
		t_\nu(D_h^{\delta_0,\delta} - z)
		\geq t_\nu(D_h^{\delta_0} - z) - \frac{\tau_0 h^{\kappa_1+2}}{C_0}, 
		\quad \nu >N. 
	\end{split}
\end{equation}
and for $h>0$ small enough (depending only on $\varepsilon$, $s$ and $\eta$)
\begin{equation}\label{prop:p1.6}
	t_\nu(D_h^{\delta_0,\delta})
	\geq 
	\tau_0 h^{\kappa_2}, 
	\quad 
	1 + \lfloor (1-\theta)N\rfloor  \leq \nu \leq N.
\end{equation}
b) When $N=\mO(1)$, then 2a) holds with $\nu=N$ in \eqref{prop:p1.6}.
\end{prop}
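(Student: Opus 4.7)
The proof rests on four pieces: a Sobolev inequality giving Part 1, an application of Proposition \ref{prop:TunPotAdm} with the specified $L$, a perturbation analysis of the Grushin problem built in Section \ref{sec:GPgen}, and a standard min-max argument for the large singular values.

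Part 1 is immediate: applying \eqref{eq4.1} componentwise to $q^1_\alpha$ and $q^2_\beta$ in the definition \eqref{adm:eq1.2} of $Q_\gamma$ yields
\[
\|Q_\gamma\|_{L^\infty(\C/\Gamma;\C^{2\times 2})}
\le \|q^1_\alpha\|_{L^\infty} + \|q^2_\beta\|_{L^\infty}
\le \mO_s(1)\,h^{-1} L^s \|(\alpha,\beta)\|_{\C^D}.
\]
For Part 2a, I invoke Proposition \ref{prop:TunPotAdm} with $L = C_L h^{-5/(s-1-\varepsilon)}$, so that the error $L^{-(s-1-\varepsilon)} h^{-2} N$ in \eqref{eq:DiracPotential13} is $\mO(h^{3})$, negligible against the leading term in the range of $k$ below. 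The estimates \eqref{prop:p1.3}, \eqref{prop:p1.4} on $\|Q\|_{H^s_h}$, $\|Q\|_{L^\infty}$, and $\|(\alpha,\beta)\|_{\C^D}$ then follow from \eqref{eq:DiracPotential14.0}, \eqref{eq:DiracPotential9}, Part 1, and the definition of $\kappa_1$, which is tuned so that $h^{\kappa_1} \|Q\|_{L^\infty} \le C_1$.

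The heart of the argument is the Grushin reduction. Exploiting the symmetry \eqref{lss:eq18.0} (which implies \eqref{GP:eq3.0}), I set up the Grushin problem \eqref{GP:eqn2} for $D_h^{\delta_0}-z$ using $\{e_j\}_{j=1}^N$ and $\wt f_j = \mathcal{G} e_j$. Because $\delta h^{\kappa_1}\|Q\|/C_1 \ll t_{N+1}$, the perturbed problem \eqref{GP:eq4.01} is well posed, and the Neumann expansion \eqref{GP:eq10} gives
\[
\wt E_{-+}^{\delta_0,\delta}
= \wt E_{-+}^{\delta_0} - \tfrac{\delta h^{\kappa_1}}{C_1}\, \wt E_-^{\delta_0} Q E_+^{\delta_0} + \mO\!\left(\tfrac{\delta^2 h^{2\kappa_1}}{C_1^2\, t_{N+1}}\right).
\]
A direct computation identifies the middle term, up to unitary conjugation by the matrix $A$ from \eqref{GP:eqn3}, with $\tfrac{\delta h^{\kappa_1}}{C_1} M_Q$, since $(Q e_m | \wt f_n) = (Q e_m | \mathcal{G} e_n) = (M_Q)_{n,m}$ by \eqref{eq:DiracPotential11}. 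Combining the singular-value bound \eqref{eq:DiracPotential13} with Ky Fan's inequality \eqref{eq:KyFan}, the bound $\|\wt E_{-+}^{\delta_0}\| \le \tau_0$ from \eqref{GP:eqn4.0}, and the choice $\delta = \tau_0 h^{\kappa_1+2}/C_0$ with $C_0$ large, I obtain
\[
t_\nu\!\left(\wt E_{-+}^{\delta_0,\delta}\right) \;\ge\; \frac{\delta h^{\kappa_1}}{2 C_1}\, s_{N+1-\nu}(M_Q) \;\ge\; \tau_0 h^{\kappa_2}
\]
for $\nu$ in the range $1+\lfloor (1-\theta)N\rfloor \le \nu \le N$: indeed, then $k:=N+1-\nu \le \theta N$, so the exponent $(k-1)/(N-k+1)$ in \eqref{eq:DiracPotential13} is bounded by $\theta/(1-\theta)$, keeping the factor $(h^2/N)^{(k-1)/(N-k+1)} \gtrsim h^{O(1)}$, which with $\kappa_2 = 2(\kappa_1+2)+\eta$ beats $\tau_0 h^{\kappa_2}$ for $\theta$ chosen sufficiently small and $h$ small. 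The comparison \eqref{GP:eq15} (applied to $D_h^{\delta_0,\delta}$) then yields \eqref{prop:p1.6}. Part 2b is the simpler variant where $N=\mO(1)$, in which case \eqref{eq:DiracPotential13} gives $s_1(M_Q)\gtrsim 1$ and the same argument handles $\nu=N$. Finally, \eqref{prop:p1.5} follows from the min-max argument of Proposition \ref{fa:prop8}: since $\|\delta h^{\kappa_1} Q/C_1\|_{L^2\to L^2} \le \tau_0 h^{\kappa_1+2}/C_0$, the analogue of \eqref{fa:eq25.2} gives the claim.

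The principal obstacle is the quantitative interplay in the Grushin step: the lower bound on $s_k(M_Q)$ degrades rapidly as $k\to N$, which is precisely what forces the restriction to the upper range $\nu \ge 1+\lfloor(1-\theta)N\rfloor$ and the delicate bookkeeping between $\kappa_1$, $\kappa_2$, $\kappa_3$ to ensure that the admissible-potential approximation error from \eqref{eq:DiracPotential13}, the higher-order Neumann contribution $\mO(\delta^2/t_{N+1})$, and the unperturbed piece $\wt E_{-+}^{\delta_0}$ are all dominated by the principal term $\tfrac{\delta h^{\kappa_1}}{C_1} M_Q$.
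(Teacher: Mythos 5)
Your Part 1, the application of Proposition \ref{prop:TunPotAdm} with the choice of $L$, the identification of the leading Grushin block with $M_Q$ up to unitary conjugation, and the min-max derivation of \eqref{prop:p1.5} all match the paper. The gap is in the key step for \eqref{prop:p1.6}. You apply the Ky Fan inequality to the decomposition
\begin{equation*}
\wt E_{-+}^{\delta_0,\delta}
= \wt E_{-+}^{\delta_0} - \tfrac{\delta h^{\kappa_1}}{C_1}\, \wt E_-^{\delta_0} Q E_+^{\delta_0} + \mathrm{error},
\end{equation*}
but for the contribution of $\wt E_{-+}^{\delta_0}$ you cite only the bound $\|\wt E_{-+}^{\delta_0}\| = t_N(D_h^{\delta_0}-z) \leq \tau_0$ from \eqref{GP:eqn4.0}. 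This controls all $s_j(\wt E_{-+}^{\delta_0})$ only by $\tau_0$, while the leading term from $M_Q$ has size at best $\tfrac{\delta h^{\kappa_1}}{C_1}\,s_{n+j-1}(M_Q) \sim \delta h^{\kappa_1+2} = \tau_0 h^{2\kappa_1+4}/C_0 \ll \tau_0$. Thus the Ky Fan estimate
\begin{equation*}
s_n\big(\wt E_{-+}^{\delta_0,\delta}\big) \geq s_{n+j-1}\!\left(\tfrac{\delta h^{\kappa_1}}{C_1}\wt E_-^{\delta_0} Q E_+^{\delta_0}\right) - s_j\big(\wt E_{-+}^{\delta_0}\big) - s_1(\mathrm{error})
\end{equation*}
yields a negative right-hand side: the unperturbed Grushin block is not a priori dominated by the perturbation.

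The paper closes this with a dichotomy that your proposal omits (see \eqref{lss:eq3.0}--\eqref{lss:eq3.1}). Either $s_j(E_{-+}^{\delta_0}(z)) \geq \tau_0 h^{\kappa_2}$ for \emph{all} $j \leq N - \lfloor(1-\theta)N\rfloor$, in which case one takes $Q = 0$ and \eqref{prop:p1.6} is immediate from \eqref{lss:eq3}; or $s_{j}(E_{-+}^{\delta_0}(z)) < \tau_0 h^{\kappa_2}$ for some such $j$, hence for $j = N-\lfloor(1-\theta)N\rfloor$ since the $s_j$ are nonincreasing. Only in the second case is $Q$ from Proposition \ref{prop:TunPotAdm} used, and there the Ky Fan subtraction is harmless because $s_j(\wt E_{-+}^{\delta_0}) < \tau_0 h^{\kappa_2} = \tau_0 h^{2\kappa_1+4+\eta}$, which is $o(1)$ times the leading term $\sim \tau_0 h^{2\kappa_1+4}/C_0$. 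Your argument effectively assumes you are always in the second case but then uses the crude bound $\tau_0$ where the much smaller $\tau_0 h^{\kappa_2}$ is required, and without the case split that smaller bound is simply not available. This is the missing idea, and it is essential: it is precisely what makes it possible to \emph{choose} the tunneling potential adapted to the given $D_h^{\delta_0}$ rather than having to overcome an adversarial $\wt E_{-+}^{\delta_0}$ of size $\tau_0$.
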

\begin{proof} The strategy of this proof is an adaptation of 
the proof of \cite[Proposition 7.2]{Sj09}. 
\\
\\
0. First notice that if $N=0$, then $t_{1}(D_h^{\delta_0}-z) > \tau_0$ and the statement 
of the Proposition follows with $Q=0$. Hence, carrying on, we may assume 
that $N\geq 1$.
\\
\\
1. Let $Q$ be an admissible tunneling potential. 
By \eqref{eq:sa8.0} we have that for $j=1,2$ 
\begin{equation*}
	\begin{split}
		\|q^j_\alpha \|_{H_{h}^s(\C/\Gamma;\C)}^2 
		&\leq 
		\mO(1)
		\sum_{\mu_n\leq L}\langle \mu_n^j\rangle^{2s}|\alpha_n|^2
		\\
		&\leq \mO(1)L^{2}\|\alpha\|_{\C^{D_j}}^2.
	\end{split}
 \end{equation*}
Combining Proposition \ref{app:prop4} and \eqref{eq:sa8.1} we find
\begin{equation}\label{lss:eq5}
	\|Q\|_{L^\infty(\C/\Gamma;\C^{2\times 2})} 
	\leq \mO(1)h^{-1}\|Q\|_{H_{h}^s(\C/\Gamma;\C^{2\times 2})}
	\leq \mO(1)h^{-1} L^s \|(\alpha,\beta)\|_{\C^{D}},
\end{equation}
and we conclude \eqref{prop:p1.1}.
\\
\par
If additionally $Q$ is an admissible tunneling potential as in 
Proposition \ref{prop:TunPotAdm}. Then, by \eqref{eq:DiracPotential9}, 
\begin{equation}\label{lss:eq6}
	\|(\alpha,\beta)\|_{\C^{D}} = \mO(1)L^{1+\varepsilon }h^{-1}N. 
\end{equation}
Notice that the constants in the estimates only depend on $s,\varepsilon$ 
(apart from global constants) which are fixed for our current purposes. In 
fact this is true for all constants in the what follows and will be used 
tacitly without being mentioned explicitly.  
\\
\par 
2. Next, we set up a Grushin Problem for $D_h^{\delta_0}$ as in 
\eqref{eq1}. Proceeding as in Section \ref{sec:GPgen} while 
keeping in mind the symmetry \eqref{GP:eq3.0} and the choice \eqref{GP:eq3} 
(we suppress the ``$\sim$'' superscript for ease of notation), to obtain 
a bijective Grushin problem 
\begin{equation}\label{lss:eq1}
	\mathcal{P}(z) = \begin{pmatrix} 
	D_h^{\delta_0}-z  & R_-^{\delta_0} \\ 
	R_+^{\delta_0} & 0\\
	\end{pmatrix} :  H^1 \times \C^N \longrightarrow L^2 \times \C^N,
\end{equation}
with 
\begin{equation}\label{lss:eq1.1}
	R_+^{\delta_0} =   \sum_1^N \delta_j \circ e_j^*, \quad 
	R_-^{\delta_0}=\sum_1^N  f_j \circ \delta_j^*,
\end{equation}
and $f_j = \mathcal{G}e_j$, see \eqref{GP:eq3}. Its inverse is given by 
$\mathcal{E}$ of the form \eqref{GP:eqn3}, satisfying the estimates 
\eqref{GP:eqn4.0}, which by \eqref{lss:eq0} become 
\begin{equation}\label{lss:eq2}
\|E^{\delta_0}\| \leq \tau_0^{-1}, \quad \|E_{\pm}^{\delta_0}\| =1, 
\quad \| E_{-+}^{\delta_0}\|  \leq \tau_0.
\end{equation}
By \eqref{GP:eqn4} we have  
\begin{equation}\label{lss:eq3}
t_{\nu}(E_{-+}^{\delta_0}(z)) = t_{\nu}(D_h^{\delta_0}-z), \quad \nu =1,\dots, N.
\end{equation}
Here, on the left hand side, $t_{\nu}(E_{-+}^{\delta_0}(z))$ are the eigenvalues 
of $((E_{-+}^{\delta_0}(z))^*E_{-+}^{\delta_0}(z))^{1/2}$ ordered in an increasing way and 
counting multiplicities, similarly to \eqref{lss:eq0}. 
\\
\\
3. Assume that $N\gg 1$. We will distinguish two cases: First, suppose that 
\begin{equation}\label{lss:eq3.0}
	s_j(E_{-+}^{\delta_0}(z)) \geq  \tau_0 h^{\kappa_2}, 
	\quad \text{for all } j=1,\dots, N - \lfloor (1-\theta)N\rfloor .
\end{equation}
Setting $Q=0$ we have $D_h^{\delta} = D_h^{\delta_0}$ and 
\begin{equation}\label{lss:eq4}
t_j(D_h^{\delta,\delta_0}-z) 
= s_{N-j+1}(E_{-+}^{\delta_0}(z)) \geq  \tau_0 h^{\kappa_2}, \quad 
\lfloor (1-\theta)N\rfloor  +1 \leq j \leq N. 
\end{equation}
Secondly, we consider the case when 
\begin{equation}\label{lss:eq3.1}
s_j(E_{-+}^{\delta_0}(z)) <  \tau_0 h^{\kappa_2} 
\text{ for some } j=1,\dots, N - \lfloor (1-\theta)N\rfloor .
\end{equation}
Next, $Q$ is an admissible tunneling potential as in 
Proposition \ref{prop:TunPotAdm} and choose $L$ large enough, so that 
the first term on the right hand side of \eqref{eq:DiracPotential13} 
is dominant. Combining \eqref{eq:DiracPotential13} and Stirling's formula 
we find that for some constant $C>0$ and for all $1\leq k \leq N/2$ 
\begin{equation}\label{lss:eq7}
	s_k(M_{Q}) 
	\geq 
	\frac{h^2}{C}
	-\mO(1)L^{-(s-1-\varepsilon)}h^{-2}N.
\end{equation}
In view of \eqref{lss:eq0.1}, taking $L$ as in \eqref{prop:p1.2} with 
$C_L>0$ sufficiently large, we conclude that 
\begin{equation}\label{lss:eq8}
	s_k(M_{Q}) 
	\geq 
	\frac{h^2}{2C}, \quad 
	\text{for } 1\leq k \leq N/2. 
\end{equation}
Additionally, with this choice of $L$, we can combine \eqref{lss:eq5}, 
\eqref{lss:eq6} to deduce \eqref{prop:p1.3} and \eqref{prop:p1.4}.
\\
\par 
Continuing, put $D_h^{\delta} = D_h^{\delta_0} + \delta \wt{Q}$ with 
\begin{equation*}
	\wt{Q} := C_1^{-1}h^{\kappa_1} Q,
\end{equation*}
so that in view of \eqref{prop:p1.3}
\begin{equation}\label{lss:eq6.1}
	\|\wt{Q}\|_{L^2(X;\C^2)\to L^2(X;\C^2)}\leq 
	\|\wt{Q}\|_{L^\infty(X;\C^{2\times 2})} \leq 1. 
\end{equation}
If $\delta\leq \tau_0/2$, the construction of the 
perturbed Grushin problem (\ref{GP:eq4.01}-\ref{GP:eq15}) 
goes through. Since $E_-^{\delta_0} \wt{Q}E_+^{\delta_0} 
= C^{-1}h^{\kappa_1}M_Q$, we get using \eqref{lss:eq8} 
\begin{equation}\label{lss:eq7.1}
	s_k(\delta E_-^{\delta_0} \wt{Q}E_+^{\delta_0} ) 
	= C^{-1}\delta h^{\kappa_1} s_k(M_{Q}) 
	\geq \frac{\delta h^{2+\kappa_1}}{C}, \quad 
	\text{for } 1\leq k \leq N/2.
\end{equation}
Here, the constant $C>0$ changes in the last inequality.  
\par 
By the Ky Fan inequalities \eqref{eq:KyFan}, we get that for three 
compact operators $A,B,C$ 
\begin{equation*}
	s_n(A+B+C)
	\geq 
	s_{n+j+\ell-2}(A)
	-s_j(B)
	-s_\ell(C).
\end{equation*}
Applying this to \eqref{GP:eq10} with $\ell=1$ yields 
by \eqref{GP:eq12} 
\begin{equation}\label{lss:eq9}
	s_n(E^{\delta_0,\delta}_{-+})
	\geq 
	s_{n+j-1}(\delta E_-^{\delta_0} \wt{Q}E_+^{\delta_0} )
	-s_j(E^{\delta_0,\delta}_{-+})
	-\frac{2\delta^2}{\tau_0}.
\end{equation}
Put $j=N-\lfloor (1-\theta)N\rfloor $. Since $N\gg 1$ and 
$\theta \in ]0,1/4[$, we find that $1\leq n+j-1 \leq N/2$. 
So, using \eqref{lss:eq3.1} with $j=N-\lfloor (1-\theta)N\rfloor $ and 
\eqref{lss:eq7.1} gives 
\begin{equation}\label{lss:eq10}
	s_n(E^{\delta_0,\delta}_{-+})
	\geq 
	\frac{\delta h^{2+\kappa_1}}{C}
	-\tau_0h^{\kappa_2}
	-\frac{2\delta^2}{t_0}, 
	\quad 
	1 \leq n \leq N-\lfloor (1-\theta)N\rfloor .
\end{equation}
Put
\begin{equation}\label{lss:eq11}
	\delta = \tau_0 h^{\kappa_1+2}/C_0, 
\end{equation}
with $C_0>0$ large enough. Then for $h>0$ small enough
\begin{equation}\label{lss:eq12}
	s_n(E^{\delta_0,\delta}_{-+})
	\geq 
	\frac{\delta h^{2+\kappa_1}}{C}, 
	\quad 
	1 \leq n \leq N-\lfloor (1-\theta)N\rfloor .
\end{equation} 
for some $C>0$. Hence, for $h>0$ small enough 
\begin{equation}\label{lss:eq13}
	s_n(E^{\delta_0,\delta}_{-+})
	\geq 
	\tau_0 h^{\kappa_2}, 
	\quad 
	1 \leq n \leq N-\lfloor (1-\theta)N\rfloor ,
\end{equation}
and we conclude \eqref{prop:p1.6}. 
\par 
Finally, the min-max principle in combination 
with \eqref{lss:eq5}, \eqref{lss:eq6.1} and \eqref{lss:eq11} yields 
\begin{equation}\label{lss:eq5.1}
\begin{split}
	t_\nu(D_h^{\delta,\delta_0} - z)
	 &\geq t_\nu(D_h^{\delta_0} - z) - \frac{\tau_0 h^{\kappa_1+2}}{C_0},  
	\quad \nu >N. 
\end{split}
\end{equation}
4. When $N=\mO(1)$, we note that \eqref{lss:eq10} still holds with 
$\nu = N$ which yields the last statement of the proposition. 
\end{proof}
A key observation is that the construction in Proposition \ref{prop:p1} 
can be iterated. More precisely we may plug this Proposition into 
the iterative scheme \cite[(7.31)-(7.40)]{Sj09} to deduce directly 
Proposition \ref{prop:p2} below. Since this scheme is very beautiful we 
present it here for the reader's convenience: 
\\
\par  
First, suppose that $N\gg1$ is sufficiently large. We begin with 
$(D^{(0)},N^{(0)},\tau_0^{(0)}):=(D_h,N,\tau_0)$ and 
obtain by part 2 of Proposition \ref{prop:p1} a triple 
($D_h^{\delta},\lfloor (1-\theta)N\rfloor ,\tau_0 h^{\kappa_2})
=: (D^{(1)},N^{(1)},\tau_0^{(1)})$ with $D_h^{\delta} = D_h + 
\delta^{(1)} h^{\kappa_1} Q^{(1)}/C_1$, $\delta^{(1)} = \delta=\tau_0^{(0)}h^{\kappa_1+2}/C_0$ 
and $Q^{(1)}$ an admissible tunneling potential satisfying 
(\ref{prop:p1.2} - \ref{prop:p1.4}) such that 
%
\begin{equation}\label{lss:eq20}
\begin{split}
&t_{\nu}(D^{(0)}-z) 
	\geq \tau_0^{(0)}, \quad N^{(0)} +1 \leq \nu, \\
&t_{\nu}(D^{(1)}-z) 
\geq \tau_0^{(1)}, \quad N^{(1)} +1 \leq \nu \leq N^{(0)}, \\
&t_{\nu}(D^{(1)}-z) \geq  t_{\nu}(D^{(0)}-z)  
	- \delta^{(1)}, \quad \nu > N^{(0)}. \\
\end{split}
\end{equation}
Notice here that $N^{(1)}$ is the number of singular values of 
$D^{(1)}$ smaller or equal than $\tau_0^{(1)}$. 
\par 
In the next step, we replace $(D^{(0)},N^{(0)},\tau_0^{(0)})$ 
by $(D^{(1)},N^{(1)},\tau_0^{(1)})$ and apply again Proposition \ref{prop:p1}. 
Continuing on like this, we get a sequence $(D^{(k)},N^{(k)},\tau_0^{(k)})$, 
for $k=0,1,\dots, k(N)$, given by 
\begin{equation}\label{lss:eq21}
	\begin{split}
	&D^{(k+1)} = D^{(k)} + \delta^{(k+1)}\frac{h^{\kappa_1}}{C_1}Q^{(k+1)} 
	, \quad \delta^{(k+1)} = \tau_0^{(k)} h^{\kappa_1+2}/C_0, 
	\quad \tau_0^{(k)}=\tau_0 h^{\kappa_2 k}, \\ 
	& N^{(k+1)} = [(1-\theta)N^{(k)}],\quad 
	D^{(0)} = D_h, \quad N^{(0)} = N, \quad \delta^{(1)}=\tau_0\frac{h^{\kappa_1+2}}{C_0} .
	\end{split}
\end{equation}
Here, $k(N)$ is determined by the condition that $N^{(k(N))}$ is of the order 
of magnitude of a large constant and $Q^{(k)}$, $k=1,\dots k(N)$, are admissible 
tunneling potentials satisfying (\ref{prop:p1.2} - \ref{prop:p1.4}). Furthermore, 
\begin{equation}\label{lss:eq22}
\begin{split}
&t_{\nu}(D^{(k+1)}-z) \geq  t_{\nu}(D^{(k)}-z)  - \delta^{(k+1)}, 
\quad \nu > N^{(k)}, \\
&t_{\nu}(D^{(k)}-z) \geq \tau_0^{(k)}, \quad N^{(k)} + 1 \leq \nu \leq N^{(k-1)}
\\
&t_{\nu}(D^{(k+1)}-z) \geq \tau_0^{(k+1)}, \quad N^{(k+1)} + 1 \leq \nu \leq N^{(k)}.
\end{split}
\end{equation}
%
%
Since $N^{(k)} \leq (1-\theta)^{k}N$, we see that the condition on $k(N)$, 
i.e. that $k(N)$ is the first index such that $N^{(k(N))} \leq C$ for some sufficiently 
large $C>0$, implies that $C\geq N^{(k(N))}>(1-\theta)C$ and 
\begin{equation}\label{lss:eq23}
k \leq \frac{\log\frac{N}{C}}{\log\frac{1}{1-\theta}}.
\end{equation}
For $\nu > N$, we apply the first estimate in \eqref{lss:eq22} iteratively 
and get 
\begin{equation}\label{lss:eq24}
\begin{split}
t_{\nu}(D^{(k+1)}-z) &\geq  t_{\nu}(D_h-z)  -  \frac{h^{\kappa_1+2}\tau_0}{C_0} \sum_0^k h^{\kappa_2 k}, \\
& \geq  t_{\nu}(D_h-z)  - \tau_0 h^{\kappa_1+2}\frac{1}{C_0(1-h^{\kappa_2})}.
\end{split}
\end{equation}
For $1 \ll \nu \leq N$, we let $\ell = \ell(N)$ be the unique index for which 
$N^{(\ell)}+1\leq \nu \leq N^{(\ell -1)}$. The second inequality in 
\eqref{lss:eq22} implies 
\begin{equation}\label{lss:eq25}
t_{\nu}(D^{(\ell)}-z) 
\geq \tau_0^{(\ell)} =\tau_0 h^{\kappa_2\ell}.
\end{equation}
For $k \geq \ell $, the first estimate in \eqref{lss:eq22} together with 
\eqref{lss:eq25} gives
\begin{equation}\label{lss:eq26}
\begin{split}
t_{\nu}(D^{(k+1)}-z) &\geq 
\tau_0 h^{\kappa_2\ell}  -  \frac{h^{\kappa_1+2}\tau_0}{C_0} \sum_\ell^k h^{\kappa_2 k}, \\
& \geq  \tau_0 h^{\kappa_2\ell}   - \tau_0 h^{\kappa_1+2 + \kappa_2\ell}\frac{1}{C_0(1-h^{\kappa_2})} \\
& \geq \tau_0 h^{\kappa_2\ell}\left(1 - h^{\kappa_1+2}\frac{1}{C_0(1-h^{\kappa_2})} \right).
\end{split}
\end{equation}
This iteration works until $k = k(N) = \mO(\frac{\log\frac{N}{C}}{\log\frac{1}{1-\theta}})$, in which case 
$N^{(k(N))} = \mO(1)$. For $k > k(N)$ we continue iterating, by decreasing $N^{(k)}$ by $1$ at each 
step until we reach $N^{(k_0)} =1$, with 
\begin{equation}\label{lss:eq27}
k_0 = k(M) + \mO(1) = \mO(\log N) + \mO(1) = \mO(\log N).
\end{equation}
By part 3 of Proposition \ref{prop:p1} we see that \eqref{lss:eq22}, 
and consequently \eqref{lss:eq25} and \eqref{lss:eq26}, still hold. 
In particular, we see that after our last step we end up with 
\begin{equation}\label{lss:eq28}
	t_{1}(D^{(k_0+1)}-z) \geq \tau_0 h^{\kappa_2(k_0+1) } 
	=\tau_0 h^{\kappa_2 O(\log N)}.
\end{equation}
By construction
\begin{equation}\label{lss:eq29}
\begin{split}
	D^{(k_0+1)} 
		&= D_h + \sum_0^{k_0}\delta^{(k+1)}\frac{h^{\kappa_1}}{C_1}Q^{(k+1)} \\
		&=D_h + \delta \sum_0^{k_0}\frac{h^{\kappa_1+\kappa_2k}}{C_1}Q^{(k+1)} \\ 
		&=:D_h +\delta \frac{h^{\kappa_1}}{C_1}Q 
\end{split}
\end{equation}
where $\delta=\tau_0\frac{h^{\kappa_1+2}}{C_0}$ and 
all $Q^{(k)}$ are admissible tunneling potentials satisfying 
(\ref{prop:p1.2} - \ref{prop:p1.4}) uniformly. Therefore, 
$Q$ is admissible tunneling potentials satisfying 
(\ref{prop:p1.2} - \ref{prop:p1.4}). 
\\
\par 
In conclusion we get the analogue of \cite[Proposition 7.3]{Sj09}: 
\begin{prop}\label{prop:p2}
	Let $z\in\C$ be fixed and let $D_h$ be as in \eqref{eq1}. 
	Let $s>1$, $\varepsilon\in ]0, s-1[$, $\eta>0$ and fix 
	$\kappa_3\geq 2+ \frac{5(1+\varepsilon)}{s-1-\varepsilon}$. 
	Put $\kappa_1= 1+ \frac{5s}{s-1-\varepsilon} + \kappa_3$, put 
	$\kappa_2 = 2(\kappa_1+2) + \eta$, and let $C_L>0$ be sufficiently 
	large. Let $\tau_0 \in ]0,\sqrt{h}]$ and fix $\theta \in ]0,1/4[$ and 
	let $N_\theta \gg 1$. Define the finite decreasing sequence of integers  
	$N^{(k)}$ by putting $N^{(k+1)}=[(1-\theta)N^{(k)}]$ for $0\leq k \leq 
	k(N_\theta)$, where $k(N_\theta)$ is the last index for which 
	$N^{(k)}\geq N_\theta$. Then, we define $N^{(k+1)} = N^{(k)}-1$ for 
	$k(N_\theta)\leq k \leq k_0$, where $k_0$ is such that $N^{(k_0)}=1$.
	\par 
	Then, there exists an admissible tunneling potential $Q$ as in \eqref{eq5} 
	satisfying \eqref{prop:p1.1}, \eqref{prop:p1.3} and \eqref{prop:p1.4} 
	such that if $D_h^\delta = D_h + \delta h^{\kappa_1} Q/C_1$, 
	$\delta =\tau_0h^{\kappa_1+2}/C_0$, $C_0>0$ large enough, we have
	\begin{itemize}
		\item If $\nu > N^{(0)}$, then $t_\nu(D_h^\delta -z)\geq 
				\tau_0(1 - \mO(h^{\kappa_1+2}))$. 
		\item If $N^{(k)}+1\leq \nu \leq N^{(k-1)}$, $k=1,\dots, k_0$, 
		then $t_\nu(D_h^\delta -z)\geq \tau_0 h^{\kappa_2 k}(1 - \mO(h^{\kappa_1+2}))$, 
		where the constant in the error term is uniform in $k$ and $\nu$, 
		\item and $t_1(D_h^\delta -z)\geq \tau_0 h^{\kappa_2 (k_0+1)}$ with 
		$k_0 = \mO(\log h^{-1})$
	\end{itemize} 
\end{prop}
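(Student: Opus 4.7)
The plan is to build the potential $Q$ iteratively by repeatedly invoking Proposition \ref{prop:p1}. I would start by setting $(D^{(0)}, N^{(0)}, \tau_0^{(0)}) := (D_h, N, \tau_0)$ and, at step $k+1$, apply Proposition \ref{prop:p1}(2a) to $D^{(k)}$ with the new threshold $\tau_0^{(k)} = \tau_0 h^{\kappa_2 k}$ to obtain a fresh admissible tunneling potential $Q^{(k+1)}$ and define
\begin{equation*}
D^{(k+1)} = D^{(k)} + \delta^{(k+1)}\frac{h^{\kappa_1}}{C_1} Q^{(k+1)}, \qquad \delta^{(k+1)} = \tau_0^{(k)} \frac{h^{\kappa_1+2}}{C_0}, \qquad N^{(k+1)} = \lfloor (1-\theta) N^{(k)}\rfloor.
\end{equation*}
By construction this does the following bookkeeping at each step: (i) Proposition \ref{prop:p1}(2a) guarantees that the singular values indexed $N^{(k+1)}+1 \leq \nu \leq N^{(k)}$ of $D^{(k+1)}-z$ are pushed above $\tau_0^{(k+1)}$, while (ii) those indexed $\nu > N^{(k)}$ can drop by at most $\delta^{(k+1)}$, which is a geometric series in $h^{\kappa_2}$ whose total is $\mO(\tau_0 h^{\kappa_1+2})$. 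So already after all ``bulk'' steps the first two bullets of the proposition hold.

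Next, I would continue the iteration past the index $k(N_\theta)$ where $N^{(k)}$ has shrunk to $\mO(1)$. There I switch to Proposition \ref{prop:p1}(2b), applying one step per remaining small singular value (so $N^{(k+1)} = N^{(k)}-1$) until $N^{(k_0)} = 1$ for some $k_0 = k(N_\theta)+\mO(1)$. Since $N^{(k)} \leq (1-\theta)^k N$ and $N = \mO(h^{-1})$ by \eqref{lss:eq0.1}, we get $k(N_\theta) = \mO(\log N) = \mO(\log h^{-1})$, hence $k_0 = \mO(\log h^{-1})$. Carrying out the same telescoping argument as above in this final stretch yields $t_1(D^{(k_0+1)}-z) \geq \tau_0 h^{\kappa_2(k_0+1)}(1-\mO(h^{\kappa_1+2}))$, giving the third bullet.

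The final step is to assemble the total perturbation
\begin{equation*}
D^{(k_0+1)} = D_h + \delta \frac{h^{\kappa_1}}{C_1} Q, \qquad Q := \sum_{k=0}^{k_0} h^{\kappa_2 k} Q^{(k+1)},
\end{equation*}
with $\delta = \tau_0 h^{\kappa_1+2}/C_0$, and verify that $Q$ is itself an admissible tunneling potential satisfying \eqref{prop:p1.1}, \eqref{prop:p1.3}, \eqref{prop:p1.4}. Since each $Q^{(k+1)}$ is a sum of two admissible potentials with frequency cutoff $L = C_L h^{-5/(s-1-\varepsilon)}$, so is any finite linear combination; the relevant bounds follow by the triangle inequality and the geometric decay of the weights $h^{\kappa_2 k}$, which makes the sum essentially dominated by its first term.

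The main technical obstacle, in my view, is not any single step but the simultaneous bookkeeping between the singular-value estimates and the norm bounds on the cumulative potential: one has to ensure that the ranges $\nu > N^{(k)}$ versus $N^{(k+1)} < \nu \leq N^{(k)}$ produce compatible bounds through the iteration so that the small loss $\mO(\delta^{(k+1)})$ never overwhelms the lifted threshold $\tau_0^{(k+1)}$. This is exactly where the choice $\kappa_2 = 2(\kappa_1+2)+\eta$ enters, making $\delta^{(k+1)} = \tau_0^{(k)} h^{\kappa_1+2} \ll \tau_0^{(k+1)} = \tau_0^{(k)} h^{\kappa_2}$, so the telescoping contributions remain negligible compared to the running threshold at every scale.
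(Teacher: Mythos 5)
Your proposal reproduces the paper's iterative scheme \eqref{lss:eq21}--\eqref{lss:eq29} step for step, including the telescoping of the losses $\delta^{(j+1)}$ against the running thresholds $\tau_0^{(\ell)}$ and the final assembly of the cumulative perturbation into a single admissible tunneling potential, and the argument is correct. One caveat on your closing remark only: since $\kappa_2 = 2(\kappa_1+2)+\eta > \kappa_1+2$, for small $h$ one has $h^{\kappa_1+2} \gg h^{\kappa_2}$ and hence $\delta^{(k+1)} \gg \tau_0^{(k+1)}$, the reverse of what you wrote; the telescoping in \eqref{lss:eq26} actually works because $\sum_{j\geq \ell}\delta^{(j+1)} = \mO\bigl(\tau_0^{(\ell)}h^{\kappa_1+2}\bigr) \ll \tau_0^{(\ell)}$ (a comparison with the threshold at step $\ell$, not $\ell+1$), while the constraint $\kappa_2 > 2(\kappa_1+2)$ is what makes the Schur-complement lower bound \eqref{lss:eq10} inside the proof of Proposition~\ref{prop:p1} come out positive.
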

Note that, following the remark after Proposition \ref{adm:prop1}, 
the admissible tunneling potential $Q$ obtained in Proposition 
\ref{prop:p2} can be chosen to be real-valued if the operators 
$P_j$ in \eqref{eq2} have real coefficients. 
\subsection{Probabilistic lower bound on the smallest singular value}
Let $z\in\C$ be fixed, let $s>1$, $\varepsilon\in ]0, s-1[$ and 
let $\kappa_1,\kappa_2,\kappa_3$ be as in Proposition \ref{prop:p2}. 
For $C_L>0$ large enough let $L$ be as in \eqref{prop:p1.2}. Assume 
\begin{equation}\label{plb:eq1}
	Ch^{-2-\frac{5s}{s-1-\varepsilon}} \leq R \leq Ch^{-\kappa_3},
\end{equation}
for $C>0$ large enough. Consider $D_h$ as in \eqref{eq1} and $Q=Q_\gamma$ 
as in \eqref{eq5} with 
\begin{equation}\label{plb:eq1.0}
	\|\gamma\|_{\C^{D}} \leq 2R, \quad \gamma = (\alpha,\beta) 
	\in \C^{D_1}\times \C^{D_2} \simeq \C^{D}.
\end{equation}
For $\delta = \tau_0 h^{\kappa_1+2}/C_0$, $C_0>0$ large enough, put 
\begin{equation}\label{plb:eq1.1}
	D^\delta_h = D_h + \delta h^{\kappa_1}Q.
\end{equation}
Using \eqref{prop:p1.1}, \eqref{prop:p1.2}, the Grushin problem \eqref{GP:eq4.01} 
applied to \eqref{plb:eq1.1} with $\delta_0=0$ and $N=N(\tau_0^2)=\mO(h^{-1})$, 
is bijective with bounded inverse \eqref{GP:eq6}. The Shur 
complement formula shows that $D_h^\delta-z$ is bijective if and only if 
$E_{-+}^\delta(z)$ is bijective. Hence, $z$ is in the spectrum of $D_h^\delta$ 
if and only if 
\begin{equation}
	\det E_{-+}^\delta(z) = 0, \quad E_{-+}^\delta(z)=E_{-+}^\delta(z,Q) \in \C^{N\times N}.
\end{equation}
Since $Q$ depends holomorphically on $\gamma$, so does $\det E_{-+}^\delta(z,Q)$, by \eqref{GP:eq10} and \eqref{GP:eq12}. 
By Proposition \ref{prop:p2} and \eqref{GP:eq15} we can perform the 
same calculation as in \cite[Formula (7.46)]{Sj09} and conclude that 
there exists a constant $C>0$ such that for $Q$ as in Proposition \ref{prop:p2} 
\begin{equation}\label{plb:eq2}
\begin{split}
	\log |\det E_{-+}^{\delta} | &\geq\sum_1^N \log t_\nu(D_h^\delta -z) \\ 
		&\geq -C (\log \tau_0^{-1} + (\log h^{-1})^2)(h^{-1}+\log h^{-1}).
\end{split}
\end{equation}
We want to show that we have a strictly positive lower bound on the 
smallest singular value $t_1(D^\delta_h-z)$ of $(D^\delta_h-z)$ not 
for one specific tunneling potential $Q$, but in fact with 
good probability with respect to the probability measure 
\eqref{eq:probaM}. 
\\
\par 
For $\|\gamma\| < 2R$, we consider the holomorphic function 
\begin{equation}\label{plb:eq3}
	\gamma\mapsto f(\gamma) := \det E_{-+}^\delta(z,Q) 
\end{equation}
Combining \eqref{GP:eqn4.0}, \eqref{GP:eq12} and \eqref{eq:DiracPotential14} 
we see that $\| E_{-+}^\delta\| \leq C\tau_0$. Therefore 
$|\det E_{-+}^\delta|\leq (C\tau )^N$ with $N=\mO(h^{-1})$, and 
\begin{equation}\label{plb:eq4}
	\log | f(\gamma)| \leq  h^{-2} \varepsilon_0(h) , 
	\quad \|\gamma\| < 2R,
\end{equation}
where 
\begin{equation}\label{plb:eq6}
	\varepsilon_0(h) =  C(\log \tau_0^{-1} + (\log h^{-1})^2)(h+h^2\log h^{-1}).
\end{equation}
Furthermore, we deduce from \eqref{plb:eq2} that
\begin{equation}\label{plb:eq5}
	\log | f(\gamma)| \geq - \varepsilon_0(h)h^{-2}.
\end{equation}
for a $\gamma$ satisfying $\|\gamma\|\leq R/2$. 
\\
\par
The relations (\ref{plb:eq4}-\ref{plb:eq5}) correspond to 
\cite[(8.4-8.7)]{Sj09}, and the discussion in Section 8 in \cite{Sj09} 
applies to our case with the obvious modifications,  
and we obtain the analog of \cite[Proposition 8.2]{Sj09}.
\begin{prop}\label{prop:probaLowerBd}
Let $z\in\C$ be fixed, let $s>1$, $\varepsilon\in ]0, s-1[$ and 
let $\kappa_1,\kappa_2,\kappa_3$ be as in Proposition \ref{prop:p2}. 
For $C_L>0$ large enough let $L$ be as in \eqref{prop:p1.2}.
Let $\gamma\in \C^D$ be a real or complex random vector (depending on 
whether the operators $P_j$ in \eqref{eq2} have real or complex 
coefficients) with probability law \eqref{eq:probaM}. %
\par
Put $\kappa_5 = \kappa_3+\kappa_4 +2 + \frac{10}{s-1-\varepsilon}$, 
where $\kappa_4$ is as in \eqref{eq:probaM2}. 
Let $0 < \varepsilon <  \exp( -C_1 h^{-2} \varepsilon_0(h))$, $C_1>1$ 
sufficiently large. Then there exist a $C>1$ such that for $h>0$ 
small enough
\begin{equation*}
	\prob \left( 
		 |\det E_{-+}^\delta(z)  | < \varepsilon
		\right)
	\leq C h^{-\kappa_5}\varepsilon_0(h)
	\exp\left( - \frac{h^2}{C\varepsilon_0(h)} \log \frac{1}{\varepsilon }
	\right).
\end{equation*} 
\end{prop}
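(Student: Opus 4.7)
The plan is to follow the strategy developed in \cite[Section 8]{Sj09} with minor modifications. By the Neumann series \eqref{GP:eq10}, $f(\gamma) := \det E_{-+}^\delta(z, Q_\gamma)$ depends holomorphically (resp.\ real-analytically) on $\gamma$ on the ball $B(0, 2R) \subset \C^D$ (resp.\ $\R^D$). The upper bound \eqref{plb:eq4} gives $\log|f(\gamma)| \leq h^{-2}\varepsilon_0(h)$ on $B(0, 2R)$, and by combining Proposition \ref{prop:p2} with \eqref{GP:eq15} and \eqref{plb:eq2}, the admissible tunneling potential produced there corresponds to a specific point $\gamma_0 \in B(0, R/2)$ at which $\log|f(\gamma_0)| \geq -h^{-2}\varepsilon_0(h)$.

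The first step is a deterministic Cartan/Remez-type estimate for sublevel sets of $u := \log|f|$. Since $u$ is plurisubharmonic on $B(0, 2R)$, one slices by complex lines through $\gamma_0$, applies Jensen's formula and the two-constants theorem on each one-dimensional disk, and integrates over the space of directions to obtain a bound of the form
\begin{equation*}
L\bigl(\{\gamma \in B(0, R): |f(\gamma)| < \varepsilon\}\bigr)
\leq C L(B(0, R)) \exp\Bigl( -\frac{h^2 \log(1/\varepsilon)}{C \varepsilon_0(h)}\Bigr),
\end{equation*}
valid for $\log(1/\varepsilon) \gg h^{-2}\varepsilon_0(h)$. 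The scale $h^{-2}\varepsilon_0(h)$ in the denominator of the exponent is precisely the oscillation gap between the upper bound \eqref{plb:eq4} and the lower bound \eqref{plb:eq5}.

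The second step is to pass from Lebesgue to the probability measure \eqref{eq:probaM}. Using $\|\nabla\phi\| = \mO(h^{-\kappa_4})$ and $\mathrm{diam}(B(0, R)) = 2R = \mO(h^{-\kappa_3})$, the oscillation of $\phi$ on $B(0, R)$ is at most $\mO(h^{-\kappa_3 - \kappa_4})$, so $e^{\phi}/Z_h$ is bounded by $L(B(0, R))^{-1} \exp(\mO(h^{-\kappa_3 - \kappa_4}))$ on $B(0, R)$. Combining with the Lebesgue bound yields
\begin{equation*}
\prob(|f| < \varepsilon)
\leq C \exp\Bigl( \mO(h^{-\kappa_3 - \kappa_4}) - \frac{h^2 \log(1/\varepsilon)}{C \varepsilon_0(h)}\Bigr).
\end{equation*}
Taking $C_1$ large enough in $\varepsilon < \exp(-C_1 h^{-2}\varepsilon_0(h))$ ensures $\log(1/\varepsilon) \cdot h^2 / (2C\varepsilon_0(h))$ dominates $\mO(h^{-\kappa_3 - \kappa_4})$, absorbing that term up to a slight weakening of $C$. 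Collecting the remaining polynomial-in-$h$ corrections coming from the Cartan estimate (logarithmic in $M/m$), from the volume normalization, and from the dimension $D = \mO(L^2 h^{-2}) = \mO(h^{-2 - 10/(s-1-\varepsilon)})$, yields the prefactor $C h^{-\kappa_5}\varepsilon_0(h)$ with $\kappa_5 = \kappa_3 + \kappa_4 + 2 + 10/(s-1-\varepsilon)$ matching the stated value.

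The main technical obstacle is the sharp Cartan-type measure estimate in several complex variables with explicit dependence on $\log(M/m) \sim h^{-2}\varepsilon_0(h)$ and on $D \sim h^{-2}$. However, this is precisely the content of the chain of lemmas in \cite[Section 8]{Sj09} leading to the proof of \cite[Proposition 8.2]{Sj09}, which transfer verbatim to the present setting once \eqref{plb:eq4} and \eqref{plb:eq5} are in hand.
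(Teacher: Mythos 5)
Your proposal is correct and follows the same route as the paper, which at this point simply invokes \cite[Section 8]{Sj09} verbatim: the paper notes that \eqref{plb:eq4}--\eqref{plb:eq5} are the analogues of (8.4)--(8.7) in \cite{Sj09} and that the rest of the argument carries over with obvious modifications. You have usefully unpacked what that reference contains — the holomorphy of $\gamma\mapsto\det E_{-+}^\delta(z,Q_\gamma)$ via the Neumann series, the one-complex-variable slicing with Jensen/two-constants at a good point $\gamma_0$ produced by Proposition \ref{prop:p2}, the resulting Lebesgue-measure sublevel-set estimate with the oscillation gap $\sim h^{-2}\varepsilon_0(h)$ in the denominator, and the transfer to the law \eqref{eq:probaM} via the gradient bound on $\phi$ and the diameter of $B(0,R)$ — and you correctly identify how $\kappa_5=\kappa_3+\kappa_4+2+\frac{10}{s-1-\varepsilon}$ arises from $D\asymp L^2h^{-2}$, $R=\mO(h^{-\kappa_3})$, and $\|\nabla\phi\|=\mO(h^{-\kappa_4})$. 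This matches the intended proof; no gaps.
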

Using \eqref{GP:eq15} and the fact that $\| E_{-+}^\delta\| \leq C\tau_0$ we get  
\begin{equation*}
	|\det E_{-+}^{\delta} | 
	\leq 8t_1(D^\delta_h-z) s_1(E_{-+}^{\delta})^{N-1}
	\leq 8(C\tau_0)^{N-1}t_1(D^\delta_h-z) .
\end{equation*}
Hence,
\begin{equation*}
	t_1(D^\delta_h-z) \geq \frac{|\det E_{-+}^{\delta} | }{8(C\tau_0)^{N-1}},
\end{equation*}
and we conclude that under the assumptions of Proposition 
\ref{prop:probaLowerBd}
\begin{equation}\label{eq:FL}
	\prob \left( 
		 t_1(D^\delta_h-z) \geq \frac{\varepsilon}{8(C\tau_0)^{N-1}}
		\right)
	\geq 1- C h^{-\kappa_5}\varepsilon_0(h)
	\exp\left( - \frac{h^2}{C\varepsilon_0(h)} \log \frac{1}{\varepsilon }
	\right),
\end{equation} 
which concludes the proof of Theorem \ref{thm1}. 
\section{Counting eigenvalues}\label{sec:Count}
The dual lattice $\Gamma^*$ of $\Gamma$, see \eqref{eq:Gamma}, is determined by 
the condition that $\gamma^*\in\Gamma^*$ when $\Rea \langle \gamma | \gamma^*\rangle 
\in 2\pi \Z$ for $\gamma \in \Gamma$. This yields that $\Gamma^* = 
\frac{1}{\sqrt{3}}(\omega \Z \oplus \omega^2\Z)$, with $\omega$ as in 
\eqref{eq:Gamma}.
\par
For $h\in]0,1]$, let $D_h^\delta$ be as in \eqref{plb:eq1.1}, and for 
$\rho \in \C$ put  
\begin{equation}\label{eq:ce0}
\begin{split}
	P_\rho &:=  \begin{pmatrix}
		2hD_{\overline{x}} & 0 \\
		0 & 2hD_{\overline{x}} \\
	\end{pmatrix}
	+
	\rho
	\left(
	\begin{pmatrix}
		0& U(x) \\
		U(-x) & 0 \\
	\end{pmatrix}
	+\delta h^{\kappa_1}Q
	\right)
	\\
	&=: \wt{D}_h + \rho \left( \wt{U} + \delta h^{\kappa_1}Q\right),
\end{split}
\end{equation}
which is a closed densely defined unbounded operator 
$L^2(\C/\Gamma;\C^2) \to L^2(\C/\Gamma;\C^2)$ with domain 
$H_h^1(\C/\Gamma;\C^2)$. Note that $P_0 = \wt{D}_h$ and 
$P_1 = D_h^\delta$. Furthermore, $P_\rho$ is Fredholm of index 
$0$ for any $\rho \in \C$, since it is a bounded perturbation of 
a Fredholm operator of index $0$. 
\par 
Since for any $\gamma^*\in \Gamma^*$ 
\begin{equation*}
	\e^{-i\Rea\langle x | \gamma^*\rangle  }P_\rho\,
	\e^{i\Rea\langle x | \gamma^*\rangle  }
	= P_\rho + h\gamma^*,
\end{equation*}
we deduce from the fact that $P_\rho$ is Fredholm of index 
$0$ that 
\begin{equation}\label{eq:ce1}
	\sigma(P_\rho)	= \sigma(P_\rho) + h\gamma^*, \quad 
	\gamma^*\in \Gamma^*.
\end{equation}
By \cite[(2.9)]{BEWZ22} we know that $\sigma(\wt{D}_h) = \sigma(P_0)= h\Gamma^*$.  
So provided that $\lambda\neq h\Gamma^*$, we know that 
$(\wt{D}_h - \lambda)^{-1}:L^2(\C/\Gamma;\C^2) \to L^2(\C/\Gamma;\C^2)$ is a compact 
operator, since $H^1_h(\C/\Gamma;\C^2)$ injects compactly into $L^2(\C/\Gamma;\C^2)$. 
Hence, for $\lambda\neq h\Gamma^*$, the operator 
\begin{equation*}
	T_\lambda := (\wt{D}_h - \lambda)^{-1}\left( \wt{U} + \delta h^{\kappa_1}Q\right)
	:~L^2(\C/\Gamma;\C^2) \to L^2(\C/\Gamma;\C^2) 
\end{equation*}
is compact. So, for $\lambda\neq h\Gamma^*$, we have that 
\begin{equation*}
	\lambda \in \sigma(P_\rho) \quad \Leftrightarrow 
	\quad
	-1/\rho \in \sigma(T_\lambda). 
\end{equation*}
Since $T_\lambda$ is compact, this can happen for at most 
a countable set of points $\rho\in\C$ with a potential accumulation point 
at infinity. 
\par
Until further notice, we assume that the event \eqref{eq:FL} holds. 
Therefore, the spectra of both $P_0$ and $P_1$ are discrete and 
we can find a continuous path $\rho(t)\in \C$, $t\in [0,1]$, 
with $\rho(0)=0$ and $\rho(1)=1$ such that 
\begin{equation*}
	\sigma(P_{\rho(t)}) \text{ is discrete for all } t\in [0,1].
\end{equation*}
For $\gamma^*\in\Gamma^*$ let $\mathfrak{C}_{\gamma^*}$ denote a primitive cell of $\Gamma^*$. Here we 
see $\mathfrak{C}_{\gamma^*}$ as parallelogram in $\C\simeq \R^2$ with edge points $\gamma^*$,$\gamma^*+\omega/\sqrt{3}$,
$\gamma^*+\omega^2/\sqrt{3}$ and $\gamma^*+\omega/\sqrt{3}+\omega^2/\sqrt{3}$, 
for some $\gamma^*$. Moreover, by convention, $\gamma^*$ is the only edge 
point contained in $\mathfrak{C}_{\gamma^*}$, and the boundary of 
$\mathfrak{C}_{\gamma^*}$ is given by the lines 
from $\gamma^*$ to $\gamma^*+\omega/\sqrt{3}$ and from $\gamma^*$ and 
$\gamma^*+\omega^2 /\sqrt{3}$ with the edge points $\gamma^*+\omega/\sqrt{3}$ and 
$\gamma^*+\omega^2/\sqrt{3}$ excluded. A quick computation shows that 
\begin{equation}\label{eq:ce2}
	|\C/\Gamma| = \frac{(4\pi)^2\sqrt{3}}{2}, \quad 
	|\mathfrak{C}_{\gamma^*}| = \frac{1}{2\sqrt{3}}.
\end{equation}
Notice that the primitive cells of $h\Gamma^*$ are of the form $h\mathfrak{C}_{\gamma^*}$. 
The $\Gamma^*$-translation invariance \eqref{eq:ce1} of $\sigma(P_{\rho(t)})$ 
shows that each $h\mathfrak{C}_{\gamma^*}$ contains the exact same 
amount of eigenvalues of $P_{\rho(t)}$ counting multiplicities. Since 
the spectra $\sigma(P_{\rho(t)})$ are discrete, the eigenvalues depend 
continuously on $\rho(t)$ and thus on $t$. Therefore, by continuity, 
for each $\gamma^*$ 
\begin{equation}\label{eq:ce3}
	\#( \sigma(P_{\rho(t)}) \cap h\mathfrak{C}_{\gamma^*}) = 
	\#( \sigma(P_0) \cap h\mathfrak{C}_{\gamma^*}) = 2, 
	\quad \text{for all } t \in [0,1].
\end{equation}
The equality in the last line comes from \cite[(2.9)]{BEWZ22}. 
\par
Let $\Omega\Subset\C$ be relatively compact with a Lipschitz boundary 
and recall from \eqref{eq:ce0} that $P_1 = D_h^\delta$. 
Counting how many fundamental cells $h\mathfrak{C}_{\gamma^*}$ are needed to cover 
$\Omega$ and using \eqref{eq:ce2} we find that 
\begin{equation}\label{eq:ce4}
	\#( \sigma(D_h^\delta) \cap \Omega) = \frac{2|\Omega|}{h^2|\mathfrak{C}_0|} 
	+\mO(h^{-1})
	= \frac{2|\Omega|\cdot |\C/\Gamma| }{(2\pi h)^2} +\mO(h^{-1}).
\end{equation}
On the other hand, the matrix valued symbol $d(x,\xi)$, as in \eqref{eq1.1}, 
has the two complex valued eigenvalues 
\begin{equation}\label{eq:ce5}
	\lambda_{j}(x,\xi) = \xi_1+i\xi_2 + (-1)^j \sqrt{U(x)U(-x)}, \quad (x,\xi)\in T^*(\C/\Gamma), 
	\quad j=1,2.
\end{equation}
So, by Fubini's theorem and translation in $\xi$ we find that
\begin{equation}\label{eq:ce6}
	\iint_{(\lambda_j)^{-1}(\Omega)} dxd\xi 
	= \int_{\C/\Gamma} \int_{\xi_1+i\xi_2 \in \Omega} d\xi dx
	=|\C/\Gamma| \cdot |\Omega|.
\end{equation}
Combining this with \eqref{eq:ce4} we find that 
\begin{equation}\label{eq:ce7}
	\#( \sigma(D_h^\delta) \cap \Omega) 
	= \frac{1}{(2\pi h)^2}\sum_1^2 \iint_{(\lambda_j)^{-1}(\Omega)} dxd\xi  
	+\mO(h^{-1}).
\end{equation}
with probability bounded from below by the right hand side of \eqref{eq:FL}. 
The claim of Theorem \ref{thm3} follows.
\appendix 
\section{Matrix-valued smooth pseudo-differential calculus}
\label{App:MatrixPseudo}
In this section we review some essential facts of semiclassical 
analysis on a compact $d$-dimensional manifold $X$. In what follows 
we equip $\C^n$, $n\in\N^*$, with the standard sesquilinear scalar 
product, and $\mathrm{Hom}(\C^n,\C^n)\simeq 
\C^{n\times n}\ni A$ with the Hilbert-Schmidt scalar product 
$\langle A|B \rangle_{\mathrm{HS}} = (\tr B^*A )^{1/2}$. 
\\
\par
Let $U\subset \R^d$ be an open set. For $n\in\N$ and $m\in\R$ we define the 
symbol class $S^{m}(T^*U;\mathrm{Hom}(\C^n,\C^n))$ as the set of all 
functions $p(\cdot\,; h) \in C^{\infty}(T^*U;\mathrm{Hom}(\C^n,\C^n))$ 
such that for any compact set $K\Subset U$ and for all $\alpha,\beta \in \N^d$ 
there exists a constant $C_{\alpha,\beta,K}>0$ such that for all $h\in]0,1]$
\begin{equation}\label{eq:sc1}
	\|\partial_x^\alpha\partial_\xi^\beta a(x,\xi;h)\|_{\mathrm{HS}} 
	\leq C_{\alpha,\beta,K}\langle \xi \rangle^{m-|\beta|}.
\end{equation}
Here, $\langle \xi \rangle := (1+|\xi|^2)^{1/2}$.  
The choice of norm on $\mathrm{Hom}(\C^n,\C^n)$ is not important 
since all norms are equivalent, so a change of norm would only lead to a 
change of the constant in the symbol estimate. So the symbol class $S^m$ 
defined above is independent of the choice of norm on $\mathrm{Hom}(\C^n,\C^n)$. 
One can easily check that the above definition of the symbol class is 
equivalent to having all entries $a_{ij}$ of 
the matrix valued symbol in the standard symbol class 
$S^{m}(T^*U)$. The residual symbol space of order $-\infty$ is defined by 
$S^{-\infty}(T^*U;\mathrm{Hom}(\C^n,\C^n)):= 
\bigcap_{m}S^{m}(T^*U;\mathrm{Hom}(\C^n,\C^n))$. 
\\
\par 
Given the above definition of the class $S^{m}(T^*U;\mathrm{Hom}(\C^n,\C^n))$, 
we can also define symbols on $T^*X$. We say that $p\in S^m(T^*X;\mathrm{Hom}(\C^n,\C^n))$ if 
$p\in C^\infty(T^*X;\mathrm{Hom}(\C^n,\C^n))$ and if for any smooth coordinate chart 
$\phi:X\supset U \to V\subset \R^d$ we have that 
$(\widehat{\phi}^{-1})^*p \in S^m(T^*V;\mathrm{Hom}(\C^n,\C^n))$, 
where $\widehat{\phi}^{-1}:T^*V \to T^*U$ is defined 
by $\widehat{\phi}^{-1}(x,\xi)=(\phi^{-1}(x),(d_x\phi^{-1})^{-T}\xi)$. 
\\
\par 
Let $L^{\infty}(X;\C^{n\times n})$ be the space $\C^{n\times n}$-valued 
maps $f$ on $X$ such that $\| \sqrt{\tr f^* f} \|_{L^{\infty}(X)}$. 
A linear continuous map $R_h: \mathcal{D}'(X;\C^n) \to C^{\infty}(X;\C^n)$ 
is called \emph{negligible} if its distribution kernel $K_R$ is smooth and 
each of its $C^\infty(X\times X;\mathrm{Hom}(\C^n,\C^n))$ seminorms is 
$\mO(h^\infty)$, i.e. it satisfies 
\begin{equation}\label{eq:sc2}
\|\partial_x^\alpha \partial^\beta_y K_R(x,y)\|_{L^\infty(X;\C^{n\times n})}
	= \mO(h^\infty), 
\end{equation}
for all $\alpha,\beta\in \N^d$, when expressed in local coordinates.
\\
\par
Let $\phi: X\supset U \to V\subset\R^d$ be a coordinate chart between open 
sets and let $\chi\in C_c^\infty(U)$. We will refer to 
the induced pair $(\phi,\chi)$ as a \emph{cut-off chart}. 
\begin{definition}
A linear continuous map $P_h: C_c^{\infty}(X;\C^n)\to \mathcal{D}'(X;\C^n)$ 
is called a \emph{semiclassical pseudo-differential operator} belonging 
to the space $\Psi_{h}^{m}(X;\C^n,\C^n)$ if and only if
\begin{itemize}
	\item $\chi P_h \psi$ is negligible for all 
	$\chi,\psi \in C^\infty_c(X)$ with $\supp \chi \cap \supp \psi = \emptyset$;
	\item for every coordinate chart $\phi:  U \to V$ there exists a symbol 
	$p_{\phi}\in S^{m}(T^*\R^d;\mathrm{Hom}(\C^n,\C^n))$ 
	such that for all $\chi,\psi \in C^\infty_c(U)$ 
	\begin{equation}\label{eq:sc3}
		\chi P_h\psi 
		= \chi \phi^*\Op_h(p_{\phi})(\phi^{-1})^*\psi.
	\end{equation}
\end{itemize}
\end{definition}
In \eqref{eq:sc3} we use the standard semiclassical quantization of 
symbols $a\in S^{m}(T^*\R^d;\mathrm{Hom}(\C^n,\C^n))$ defined by 
\begin{equation}\label{eq:sa4}
	\Op_h(a) u(x)= a(x,hD_x)u(x) = \frac{1}{(2h\pi)^d} \iint_{\R^{2d}} 
	\e^{\frac{i}{h} (x-y)\cdot \xi} 
	a(x,\xi;h)u(y) dy d\xi, 
	\quad u\in C_c^\infty(V),
\end{equation}
seen as an oscillatory integral. 
\\
\par 
Given a symbol $p\in S^{m}(T^*X;\mathrm{Hom}(\C^n,\C^n))$ one can obtain 
an operator $\Op_h(p) \in \Psi_{h}^{m}(X;\C^n,\C^n)$, using a non-canonical 
quantization procedure: Take a partition of 
unity $\{\chi_k\}_{k \in K}$ subordinate to a finite covering of $X$ 
by coordinate charts $\{\phi_k:X\supset U_k\to V_k\subset \R^d\}_{k\in K}$ 
such that $\sum \psi^2_k =1$. Then
\begin{equation}\label{eq:sc4}
	\Op_h(p) = \sum_{k \in K} \chi_k\phi_k^* 
	\Op_h(p_{\phi_k}) 
	(\phi_k^{-1})^* \chi_k ~ \in \Psi_{h}^{m},
	\quad 
	p_{\phi_k} := p\circ\widehat{\phi}_k^{-1}
\end{equation}
where $\widehat{\phi}_k^{-1}:T^*V_k \to T^*U_k$ is defined 
as above. The correspondence $\Psi_{h}^{m}\ni P_h \mapsto p\in S^m$ 
is not globally well-defined, but it gives rise to a bijection 
\begin{equation}\label{eq:sc5}
	\Psi_{h}^{m}(X;\C^n,\C^n)/ h\Psi_{h}^{m-1} (X;\C^n,\C^n)
	\longrightarrow 
	S^{m}(T^*X;\mathrm{Hom}(\C^n,\C^n)) / hS^{m-1}(T^*X;\mathrm{Hom}(\C^n,\C^n)) .
\end{equation}
The image $\sigma_P$ of $P$ under the map \eqref{eq:sc5} is called 
\emph{principal symbol} of $P$.
\\
\par
A pseudo-differential operator $P \in \Psi_{h}^{m}(X;\C^n,\C^n)$ maps 
$C^\infty(X;\mathrm{Hom}(\C^2,\C^2))$ to $C^\infty(X;\mathrm{Hom}(\C^2,\C^2))$ 
and so extends to a well-defined operator $\mathcal{D}'(X;\C^n)\to 
\mathcal{D}'(X;\C^n)$. They can therefore be composed with each other, and one 
can easily check that if $P_j\in \Psi_{h}^{m_j}(X;\C^n,\C^n)$, $j=1,2$, then 
$P_1\circ P_2 \in \Psi_{h}^{m_1+m_2}(X;\C^n,\C^n)$ and 
\begin{equation*}
	\sigma_{P_1\circ P_2}(x,\xi) = \sigma_{P_1}(x,\xi)\sigma_{P_2}(x,\xi).
\end{equation*}
\par
We recall the definition of \emph{semiclassical Sobolev spaces} on $X$. 
Let $\cV$ be either $\cV=\C^n$ or $\cV=\C^{n\times n}$ equipped with a 
sesquilinear scalar product $\langle \cdot | \cdot \rangle_{\cV}$ and norm 
$\|\cdot\|_{\cV}$ as discussed at the beginning of Appendix \ref{App:MatrixPseudo}. 
First we recall that on $\R^d$ we define the \emph{semiclassical Sobolev space} 
$H_{h}^s(\R^d;\cV)\subset \mathcal{S}'(\R^d;\cV)$, $s\in\R$, as the space of 
all tempered distributions $u\in\mathcal{S}'(\R^d;\C^n)$ such that 
\begin{equation*}
	\| u \|_{H_{h}^s(\R^d;\cV)} 
	= \| \Lambda^s u \|_{L^2(\R^d;\cV)} 
	= \| \|\Lambda^s u\|_{\cV} \|_{L^2(\R^d)} 
	<\infty.
\end{equation*}
where $\Lambda^s:=\diag( \langle hD_x \rangle^s,\dots, \langle hD_x \rangle^s):
\C^n \to \C^n$. The inclusion $H^s_h\subset  H^{s'}_h$, $s\geq s'$, is 
compact when $s>s'$, since this is true component wise. 
Via the standard $L^2$ pairing and the identification $(\C^n)^*\simeq \C^n$ 
via the standard sesquilinear scalar product on $\C^n$, we 
can identify $H_{h}^{-s}$ with the dual of $H^{s}_h$. 
\\
\par 
Let $\{\phi_k:X\supset U_k\to V_k\subset \R^d\}_{k\in K}$ 
be a finite collection of coordinate charts covering $X$, 
and $\{\chi_k\}_{k\in K}$, $\chi_k \in C^\infty_c(U_k)$, 
be a finite partition of unity subordinate to the covering 
by coordinate charts. We define the \emph{semiclassical Sobolev 
space} $H_{h}^s(X;\cV)\subset \mathcal{D}'(X;\cV)$ 
as the set of all distributions $u\in\mathcal{D}'(X;\cV)$ such that 
\begin{equation*}
		(\phi^{-1})^*\chi u \in H_{h}^s(\R^d;\cV)
\end{equation*}
for all cut-off charts $(\phi,\chi)$. We can equip $H_{h}^s(X;\cV)$ 
with the norm
\begin{equation}\label{eq:sc7}
	\| u \|_{H_{h}^s(X;\cV)}^2 = \sum_{k\in K}
	\|(\phi_k^{-1})^*\chi_k u \|_{H_{h}^s(\R^d;\cV)}^2, \quad 
	K< +\infty, 
\end{equation}
making it Banach space. Taking different coordinate patches and 
cut-off functions in \eqref{eq:sc7} yields an equivalent (uniformly 
as $h\to 0$) norm. Moreover, 
these spaces satisfy the inclusion $H_h^s\subset H_h^0 \subset H_h^{-s}$, 
$s>0$, the compact injection $H^s_h \hookrightarrow H^{s'}_h$, $s>s'$. 
See for instance \cite[Chapter 1, \S 7]{Sh01} (there the discussion is 
for $h=1$ but the case of general $h$ is completely analogous). 
\par
One may follow the standard proofs from the case when $n=1$ to obtain the 
following continuity result.
\begin{prop}\label{app:prop6}
	Let $P\in \Psi_{h}^{m}(X;\C^n,\C^n)$. Then, the operator 
	\begin{equation}\label{eq:sa8}
		P:~H_{h}^{s}(X;\C^n) \longrightarrow H_{h}^{s-m}(X;\C^n),
	\end{equation}
	is bounded uniformly in $h>0$. 
\end{prop}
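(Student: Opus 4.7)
The plan is to reduce to the well-known scalar-valued semiclassical boundedness theorem on $\R^d$ by two successive reductions: first localization to coordinate charts, then a componentwise reduction from matrix- to scalar-valued symbols.

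First I would invoke the definition \eqref{eq:sc7} of the semiclassical Sobolev norm on $X$: since it is a finite sum of contributions from a fixed finite atlas of cut-off charts $(\phi_k,\chi_k)$, it suffices to bound each term $\|(\phi_k^{-1})^*\chi_k P u\|_{H_h^{s-m}(\R^d;\C^n)}$ by $C\|u\|_{H_h^s(X;\C^n)}$, uniformly in $h$. Picking a second family $\tilde\chi_k\in C_c^\infty(U_k)$ with $\tilde\chi_k\equiv 1$ on an open neighborhood of $\supp\chi_k$, I would split
\[
\chi_k P u \;=\; \chi_k P \tilde\chi_k u \;+\; \chi_k P (1-\tilde\chi_k)u.
\]
The supports of $\chi_k$ and $(1-\tilde\chi_k)$ are disjoint, so the first bullet of the definition of $\Psi_h^m(X;\C^n,\C^n)$ together with \eqref{eq:sc2} make the second summand negligible and thus $\mO(h^\infty)$ in any Sobolev norm. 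The second bullet expresses the first summand as $\chi_k\phi_k^*\Op_h(p_{\phi_k})(\phi_k^{-1})^*\tilde\chi_k u$, reducing the problem to showing that on $\R^d$, for any $q\in S^m(T^*\R^d;\mathrm{Hom}(\C^n,\C^n))$, the operator $\Op_h(q)$ is bounded $H_h^s(\R^d;\C^n)\to H_h^{s-m}(\R^d;\C^n)$ uniformly in $h$.

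Next I would reduce to the scalar case. The $\C^n$-valued semiclassical Sobolev norm on $\R^d$ is, directly from the definition, equivalent to $\bigl(\sum_{i=1}^n\|u_i\|_{H_h^s(\R^d)}^2\bigr)^{1/2}$, and by \eqref{eq:sc1} all matrix entries $q_{ij}$ of the symbol lie in the scalar class $S^m(T^*\R^d)$. A componentwise estimate therefore reduces the claim to the scalar statement that $\Op_h(q): H_h^s(\R^d)\to H_h^{s-m}(\R^d)$ is bounded uniformly in $h$ for $q\in S^m(T^*\R^d)$. For this I would conjugate by the elliptic Fourier multipliers $\langle hD_x\rangle^{\pm s}$: the semiclassical composition calculus on $\R^d$ shows that $\langle hD_x\rangle^{s-m}\Op_h(q)\langle hD_x\rangle^{-s}=\Op_h(\tilde q)$ for some $\tilde q\in S^0(T^*\R^d)$, and the semiclassical Calder\'on--Vaillancourt theorem (see e.g.~\cite{Zw12,DiSj99}) yields $L^2\to L^2$ boundedness uniform in $h$, which by construction is equivalent to the desired bound.

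The main obstacle is not conceptual but is rather the bookkeeping of the localization step: one must verify that pullback under $\phi_k$ preserves the symbol class (this uses the coordinate-invariance of $S^m$ encoded in the transformation rule for $\widehat\phi^{-1}$) and yields an equivalent semiclassical Sobolev norm on $\R^d$, and that the cross terms produce only $\mO(h^\infty)$ corrections in all Sobolev norms. Once these reductions are discharged, the remaining input is the classical scalar semiclassical Calder\'on--Vaillancourt theorem, which is standard.
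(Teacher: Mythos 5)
Your proposal is correct and matches the approach the paper has in mind: the paper gives no proof at all, stating only that ``one may follow the standard proofs from the case when $n=1$,'' and your argument is precisely a fleshed-out version of that reduction (localize to charts using the negligibility of the off-diagonal terms from \eqref{eq:sc2}, pass to $\R^d$ via \eqref{eq:sc3}, reduce componentwise using that the norm from the diagonal weight $\Lambda^s$ is the $\ell^2$ sum of scalar Sobolev norms, and finish with conjugation by $\langle hD_x\rangle^{\pm s}$ and semiclassical Calder\'on--Vaillancourt).
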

If we equip $X$ with a smooth density of integration, denoted by 
$dx$, then we can define an $L^2$ scalar product by 
\begin{equation}\label{eq:sc7.0}
	\langle \phi | \psi \rangle 
	= 
	\int_X \langle \phi(x) | \psi(x) \rangle_{\cV} \,dx, 
	\quad 
	\phi,\psi \in C^\infty(X;\cV),
\end{equation}
The induced norm is equivalent to the norm of $H_{h}^0(X;\cV)$ defined 
in \eqref{eq:sc7} and we get $L^2(X;\cV)\simeq H^0_h(X;\cV)$. Here, the 
$L^2$ space is defined as the closure of $C^\infty(X;\cV)$ under the 
norm induced by \eqref{eq:sc7.0}. Taking a different density of integration 
or inner product on $\cV$ in \eqref{eq:sc7.0} yields again 
an equivalent norm. 
\par
An equivalent way to define the Sobolev spaces $H_{h}^s(X;\cV)$ is as follows: 
Consider a family of non-negative elliptic differential operators
\begin{equation*}
	h^2R_\ell = \sum_{j,k=1}^d (hD_{x_j})^*r_{j,k,\ell}(x) hD_{x_k}, 
	\quad \ell =1,\dots,n,
\end{equation*}
expressed in local coordinates, where the star indicates that we take the 
adjoint with respect to some fixed positive smooth density on $X$. Each 
$h^2R_\ell$ is selfadjoint with domain $H^2_h(X;\C)$, so $(1+h^2R_\ell)^{s/2}
:L^2(X;\C)\to L^2(X;\C)$ is a closed densely defined linear operator 
for $s\in\R$. It is bounded precisely when $s\leq 0$.
\begin{prop}[{\cite[Proposition 16.2.2]{Sj19}}]\label{app:prop9}
		For every $s\in\R$ and $\ell\in \{1,\dots,n\}$ we have that $H^s_h(X;\C)$ 
		is the space of all 
		$u\in \mathcal{D}'(X;\C)$ such that $(1+h^2R_\ell)^{s/2} u \in L^2(X;\C)$ 
		and the norm $\|u \|_{H_{h}^s(X;\C)}$ is equivalent to the norm 
		$\|(1+h^2R_\ell)^{s/2}u \|_{L^2(X;\C)}$, uniformly as $h\to 0$. 
\end{prop}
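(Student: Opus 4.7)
\smallskip

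The plan is to prove Proposition \ref{app:prop9} by showing that $(1+h^2R_\ell)^{s/2}$ belongs to the semiclassical pseudo-differential calculus $\Psi^s_h(X;\C)$ and is classically elliptic, after which the equivalence of norms becomes a consequence of Proposition \ref{app:prop6}. Concretely, I would first observe that $h^2 R_\ell$ is a nonnegative selfadjoint second-order semiclassical differential operator whose principal symbol $r_\ell(x,\xi) = \sum_{j,k}r_{j,k,\ell}(x)\xi_j\xi_k$ is real, nonnegative, and (by assumption) elliptic. Therefore $1 + h^2 R_\ell$ is selfadjoint, bounded below by $1$, and elliptic with principal symbol $1 + r_\ell(x,\xi) \ge c\langle \xi\rangle^2$.

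The main technical step is to establish that $(1+h^2R_\ell)^{s/2}\in \Psi^s_h(X;\C)$ with principal symbol $(1+r_\ell(x,\xi))^{s/2}$. I would do this via the Helffer--Sj\"ostrand functional calculus: writing
\begin{equation*}
(1+h^2R_\ell)^{s/2} = -\frac{1}{\pi}\int_{\C}(\partial_{\bar w}\widetilde{f})(w)\,\bigl(w - (1+h^2R_\ell)\bigr)^{-1} L(dw),
\end{equation*}
where $\widetilde f$ is an almost-holomorphic extension of a suitable cut-off of $t \mapsto t^{s/2}$ (with the part near infinity handled separately, since $1+h^2R_\ell \ge 1$). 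Combining this with the standard parametrix construction for $(w-(1+h^2R_\ell))^{-1}$ in the semiclassical calculus---exactly as in \cite[Chapter 8]{DiSj99}---yields that $(1+h^2R_\ell)^{s/2}$ lies in $\Psi^s_h(X;\C)$ and is elliptic of order $s$. The same argument applied to the exponent $-s/2$ gives $(1+h^2R_\ell)^{-s/2}\in \Psi^{-s}_h(X;\C)$, and these two are genuine inverses of each other on $L^2$ by the functional calculus.

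Given this, one direction of the norm equivalence is immediate: by Proposition \ref{app:prop6}, the operator $(1+h^2R_\ell)^{s/2}:H^s_h(X;\C)\to H^0_h(X;\C) = L^2(X;\C)$ is bounded uniformly in $h$, so $\|(1+h^2R_\ell)^{s/2}u\|_{L^2}\le C\|u\|_{H^s_h}$. For the reverse direction, if $u\in\mathcal D'(X;\C)$ satisfies $v:=(1+h^2R_\ell)^{s/2}u\in L^2$, I would set $u = (1+h^2R_\ell)^{-s/2}v$, which belongs to $H^s_h(X;\C)$ by Proposition \ref{app:prop6} applied to $(1+h^2R_\ell)^{-s/2}\in \Psi^{-s}_h$, and with norm bounded by $C\|v\|_{L^2}$. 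This simultaneously identifies the function space and proves the norm equivalence.

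The principal obstacle I expect is verifying carefully that the functional calculus really lands inside the semiclassical pseudo-differential class with symbolic estimates uniform in $h$, in particular that the Helffer--Sj\"ostrand contour integral converges in the appropriate symbol seminorms and that non-integer powers of a lower-bounded elliptic operator produce genuine classical symbols of order $s$ (as opposed to merely operators with good mapping properties). An alternative, if one wishes to avoid the functional calculus, is to first establish the result for $s\in 2\N$ by direct computation using Proposition \ref{app:prop6} and composition, and then interpolate (complex interpolation of Sobolev spaces) to cover all real $s$; this is morally the approach of \cite[Chapter 1, \S 7]{Sh01} cited earlier in the appendix, and provides a self-contained backup route.
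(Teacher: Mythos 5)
The paper supplies no internal proof of this proposition: it is stated with a direct citation to \cite[Proposition 16.2.2]{Sj19}, so there is no argument in the text to compare against. Your plan, however, is substantively the same as the standard one (and, to the best of my knowledge, the one Sj\"ostrand carries out): establish that $(1+h^2R_\ell)^{\pm s/2}\in\Psi^{\pm s}_h(X;\C)$ with elliptic principal symbol $(1+r_\ell)^{\pm s/2}$, then read off both directions of the norm equivalence from the uniform $H^s_h\to H^{s-m}_h$ boundedness in Proposition~\ref{app:prop6}. That skeleton is correct.

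The one place you should be more careful is exactly the step you flag as the ``principal obstacle'': the Helffer--Sj\"ostrand formula, as written with an almost holomorphic extension $\widetilde f\in C_c^\infty(\C)$, only applies directly to compactly supported $f$, and $t\mapsto t^{s/2}$ is not of that form; saying that the part near infinity is ``handled separately'' does not resolve it. The clean fix is to reduce to negative order: pick an integer $m>s/2$, write $(1+h^2R_\ell)^{s/2}=(1+h^2R_\ell)^{m}\,(1+h^2R_\ell)^{s/2-m}$, observe that $(1+h^2R_\ell)^{m}$ is a genuine semiclassical differential operator of order $2m$, and then apply the functional calculus to $g(t)=t^{s/2-m}$ (suitably modified on $t<1/2$, harmless since $1+h^2R_\ell\ge 1$). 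For $g$ decaying at infinity with symbol-type bounds, the extended Helffer--Sj\"ostrand formula of \cite[Chapter 8]{DiSj99} applies and the resolvent parametrix gives the symbol estimates uniformly in $h$. With that, and verifying $(1+h^2R_\ell)^{s/2}(1+h^2R_\ell)^{-s/2}=\Id$ on $L^2$ via the spectral theorem (so that $u=(1+h^2R_\ell)^{-s/2}v$ indeed recovers the original distribution $u$ when $v=(1+h^2R_\ell)^{s/2}u\in L^2$), your argument closes. The interpolation backup through even integer powers is also sound, provided you track the $h$-uniformity through the complex interpolation step, but is not shorter once the fractional-power parametrix is set up.
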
	
We deduce that for every $s\in\R$ the space $H^s_h(X;\cV)$ is the space of all 
$u\in \mathcal{D}'(X;\cV)$ such that $\Lambda^s u \in L^2(X;\cV)$ where 
\begin{equation*}
	\Lambda^s:=\diag((1+h^2R_1)^{s/2}, \dots, (1+h^2R_n)^{s/2})
\end{equation*}
Furthermore, the norm $\|u \|_{H_{h}^s(X;\cV)}$ is equivalent to the norm 
$\|\Lambda^s u \|_{L^2(X;\cV)}$, uniformly as $h\to 0$. Using the scalar 
product \eqref{eq:sc7.0}, we also recover the fact 
that the dual of $H^s_h$ can be identified with $H^{-s}_h$.
\\
\par
Under the above assumptions the spectrum of $R_\ell$, $\ell =1,\dots, n$,  
consists solely out of isolated eigenvalues with finite multiplicities. We 
denote the respective eigenvalues by 
\begin{equation*}
		0\leq \lambda_1(\ell)^2 \leq \lambda_2(\ell)^2 \leq \dots, 
		\quad 0\leq \lambda_j(\ell),
\end{equation*}
ordered increasingly and repeated according to multiplicities. For 
each $\ell =1,\dots,n$, let $\phi_j^\ell$, $j\in \N^*$, denote a corresponding  
$L^2$ orthonormal basis of eigenfunctions. Thanks to the Weyl asymptotics for the eigenvalues 
of elliptic second order selfadjoint differential operators on compact manifolds, see for 
instance \cite[Chapter 12]{GriSjoMLA} and the reference given therein, we know 
that for each fixed $\ell =1,\dots,n$
\begin{equation}\label{eq:sa8.00}
	\#\{ j\in\N^*; \lambda_j(\ell)\leq \lambda \} 
	\asymp \lambda^2, 
\end{equation}
uniformly for $\lambda\geq 1$ and independently of $h$.
In fact much more is known about these asymptotics, see for 
instance \cite{Iv98}, however we do not need such refinements here. 
%
By elliptic regularity we know that $\phi_j^\ell \in C^\infty(X)$, and 
so if $u\in \mathcal{D}'(X)$ then 
\begin{equation*}
	u = \sum_k u(k,\ell) \phi_k^\ell, \quad u(k,\ell) = (u,\phi_k^\ell), 
\end{equation*}
in $\mathcal{D}'(X)$. Since $X$ is compact, the sequence of coefficients 
is of at most temperate growth. Fix $\ell\in \{1,\dots,n\}$ and write 
$\mu_n(\ell)=h\lambda_n(\ell)$. Proposition \ref{app:prop9} and the above 
discussion then imply that $u\in H^{s}_h(X;\C^n)$ if and only if 
$\sum_{\ell=1}^n\sum_k\langle \mu_k(\ell)\rangle^{s}|u_\ell(k,\ell)|^2$ 
is finite and 
\begin{equation*}\label{eq:sa8.0}
	\|u \|_{H_{h}^s(X;\C^n)}^2 \asymp 
	\sum_{\ell=1}^n\sum_k\langle \mu_k(\ell)\rangle^{2s}|u_\ell(k,\ell)|^2.
\end{equation*}
Similarly, using the Hilbert-Schmidt scalar product on $\C^{n\times n}$ we 
see that $u\in H^{s}_h(X;\C^{n\times n})$ if and only if 
$\sum_{i,j=1}^{n}\sum_k\langle \mu_k(i)\rangle^{s}|u_\ell(k,i,j)|^2$ 
is finite and 
\begin{equation}\label{eq:sa8.1}
	\|u \|_{H_{h}^s(X;\C^{n\times n})}^2 \asymp 
	\sum_{i,j=1}^{n}\sum_k\langle \mu_k(i)\rangle^{2s}|u_\ell(k,i,j)|^2.
\end{equation}
uniformly in $h>0$.
%
%
\begin{prop}\label{app:prop5} 
	Let $X$ be a smooth compact manifold equipped with a smooth 
	density of integration and let $P \in \Psi_{h}^{m}(X;\C^n,\C^n)$ 
	then its formal $L^2$ adjoint $P^* \in \Psi_{h}^{m}(X;\C^n,\C^n)$ 
	with 
	\begin{equation}\label{eq:sc7.1}
		\sigma(P^*) = \sigma(P)^*,
	\end{equation}
	where the adjoint on the right hand side is the adjoint of $n\times n$ 
	complex matrices with respect to the standard sesquilinear inner 
	product $\langle \cdot | \cdot \rangle_{\C^n}$. 
\end{prop}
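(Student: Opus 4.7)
The plan is to reduce to the standard scalar semiclassical pseudo-differential calculus on $\R^d$, treating the matrix structure pointwise via the Hilbert-Schmidt pairing and the pointwise matrix adjoint on $\C^n$.

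First, I would check that the negligibility condition away from the diagonal is preserved under taking adjoints: if $\chi,\psi \in C^\infty_c(X)$ have disjoint supports, then the Schwartz kernel of $\chi P^* \psi$ is obtained (up to the matrix adjoint and swap of variables) from that of $\bar\psi P \bar\chi$, and so inherits the estimates \eqref{eq:sc2}. Thus $P^*$ already satisfies the first bullet in the definition of $\Psi_{h}^{m}$.

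Next I would treat a single coordinate chart $(\phi,\chi)$. By \eqref{eq:sc3} there exists $p_\phi \in S^m(T^*\R^d;\mathrm{Hom}(\C^n,\C^n))$ with $\chi P \psi = \chi \phi^* \Op_h(p_\phi) (\phi^{-1})^*\psi$ for $\psi \in C^\infty_c(U)$. Under the pullback $\phi^*$ the given smooth density on $X$ becomes $J(y)\,dy$ for some strictly positive smooth $J$. Consequently the formal $L^2$-adjoint with respect to this density is related to the standard $L^2(\R^d)$-adjoint by a conjugation $P^{*} = J^{-1} P^{*,0} J$ with multiplication operators, both of which are $0$-th order pseudo-differential operators acting as scalar multiples of the identity on $\C^n$. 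So it suffices to verify the statement for $P = \Op_h(a)$ with $a \in S^m(T^*\R^d;\mathrm{Hom}(\C^n,\C^n))$ and the standard $L^2(\R^d;\C^n)$-inner product.

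The core computation is then the matrix-valued version of the classical adjoint formula. Writing out $(\Op_h(a) u \mid v)_{L^2}$ using \eqref{eq:sa4} and the pointwise identity $\langle a(x,\xi)u(y) \mid v(x)\rangle_{\C^n} = \langle u(y) \mid a(x,\xi)^{*} v(x) \rangle_{\C^n}$, one obtains an amplitude representation for $\Op_h(a)^*$. A standard stationary-phase expansion (which goes through entry-wise, since the symbol estimates \eqref{eq:sc1} are taken in the Hilbert--Schmidt norm and all derivatives commute with the pointwise matrix adjoint) yields
\begin{equation*}
\Op_h(a)^* = \Op_h(a^{\#}), \qquad
a^{\#}(x,\xi) \sim \sum_{\alpha \in \N^d} \frac{h^{|\alpha|}}{\alpha!} \partial_x^\alpha D_\xi^\alpha \bigl(a(x,\xi)^*\bigr),
\end{equation*}
with $a^{\#} \in S^m(T^*\R^d;\mathrm{Hom}(\C^n,\C^n))$ and principal part $a^*$. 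Reassembling via the partition of unity as in \eqref{eq:sc4} gives $P^* \in \Psi_{h}^{m}(X;\C^n,\C^n)$, and reading off the leading term through the bijection \eqref{eq:sc5} yields $\sigma(P^*) = \sigma(P)^*$.

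The only real obstacle is the bookkeeping required to track the density factor $J$ and to confirm that the stationary phase expansion, classically written for scalar $a$, extends to the matrix setting without change. This last point is painless because the matrix adjoint is $\C$-antilinear, commutes with $\partial_x^\alpha$ and $D_\xi^\alpha$, and the Hilbert-Schmidt norm in \eqref{eq:sc1} is invariant under it; hence the remainder estimates of the scalar theory transfer verbatim.
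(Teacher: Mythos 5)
The paper states Proposition~\ref{app:prop5} without proof, treating it as standard material in the appendix's review of matrix-valued semiclassical calculus (implicitly citing the scalar theory in \cite{Sj19}, \cite{DiSj99}, \cite{Zw12}). Your sketch supplies exactly the standard argument that fills this in: reduction via a cut-off chart, conjugation by the scalar density factor $J$ (a zeroth-order operator acting as a multiple of the identity on $\C^n$), the pointwise identity $\langle a(x,\xi) u\mid v\rangle_{\C^n}=\langle u\mid a(x,\xi)^* v\rangle_{\C^n}$ feeding into the usual amplitude-to-left-symbol reduction, and the key observations that the Hilbert--Schmidt norm in \eqref{eq:sc1} is invariant under $a\mapsto a^*$ and that the matrix adjoint commutes with $\partial_x^\alpha$ and $D_\xi^\alpha$, so the scalar remainder estimates carry over entrywise. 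The proof is correct and is the intended (though omitted) argument.
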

The following Sobolev estimates can be obtained through straight forward 
computations when $X=\R^d$, and can be carried over to the case of a compact 
smooth manifold $X$ via \eqref{eq:sc7}. See for instance 
\cite[Section 16.1]{Sj19}. 
\begin{prop}\label{app:prop4} 
	Let $s>d/2$. Then there exists a constant $C=C(s)>0$ such that for 
	all $u,v \in H_{h}^s(X;\C)$, $w\in H^s(X;\C)$ we have that $u\in L^\infty(X)$, 
	$uv \in H_{h}^s(X;\C)$, $uw \in H_{h}^s(X;\C)$ and 
	\begin{equation*}
		\begin{split}
		&\|u \|_{L^\infty(X)} \leq C h^{-d/2}\| u \|_{H_{h}^s(X;\C)} \\
		&\|uv \|_{H_{h}^s(X;\C)} 
			\leq C h^{-d/2}\| u \|_{H_{h}^s(X;\C)} 
					\| v\|_{H_{h}^s(X;\C)},\\
		&\|uw \|_{H_{h}^s(X;\C)} 
			\leq C\| u \|_{H_{h}^s(X;\C)} 
					\| w\|_{H^s(X;\C)}.
		\end{split}
	\end{equation*}
\end{prop}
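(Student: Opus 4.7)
The plan is to reduce each of the three estimates to the corresponding one on $\R^d$ via the partition-of-unity definition \eqref{eq:sc7}, and then argue on the Fourier side. Since multiplication by any fixed $\chi_k\in C_c^\infty(U_k)$ is a bounded map $H_h^s\to H_h^s$ with norm depending only on $\chi_k$ (Proposition \ref{app:prop6} applied in $\Psi_h^0$), it suffices to prove all three inequalities for compactly supported functions on $\R^d$. There Proposition \ref{app:prop9} with $R_1=-\Delta$, combined with Plancherel, gives the equivalent Fourier-side norm $\|u\|_{H_h^s}^2\asymp\int_{\R^d}\langle h\xi\rangle^{2s}|\hat u(\xi)|^2\,d\xi$. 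The semiclassical Sobolev embedding is then a one-line Fourier inversion plus Cauchy--Schwarz:
\[
 |u(x)|\leq(2\pi)^{-d}\|\hat u\|_{L^1}\leq(2\pi)^{-d}\|\langle h\xi\rangle^{-s}\|_{L^2}\|\langle h\xi\rangle^s\hat u\|_{L^2}\leq C_sh^{-d/2}\|u\|_{H_h^s},
\]
the change of variables $\eta=h\xi$ giving $\|\langle h\xi\rangle^{-s}\|_{L^2}^2=h^{-d}\int\langle\eta\rangle^{-2s}\,d\eta$, which is finite exactly because $s>d/2$.

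For the algebra estimate I would combine Peetre's inequality $\langle h\xi\rangle^s\leq C_s(\langle h(\xi-\eta)\rangle^s+\langle h\eta\rangle^s)$ with $\widehat{uv}=(2\pi)^{-d}\hat u*\hat v$ to get
\[
 \langle h\xi\rangle^s|\widehat{uv}(\xi)|\leq C_s\bigl((\langle h\cdot\rangle^s|\hat u|)*|\hat v|(\xi)+|\hat u|*(\langle h\cdot\rangle^s|\hat v|)(\xi)\bigr).
\]
Taking $L^2(d\xi)$ together with Young's inequality $\|f*g\|_{L^2}\leq\|f\|_{L^2}\|g\|_{L^1}$ produces
\[
 \|uv\|_{H_h^s}\leq C_s\bigl(\|u\|_{H_h^s}\|\hat v\|_{L^1}+\|\hat u\|_{L^1}\|v\|_{H_h^s}\bigr),
\]
and inserting $\|\hat u\|_{L^1},\|\hat v\|_{L^1}\leq C_sh^{-d/2}\|\cdot\|_{H_h^s}$ (the same Cauchy--Schwarz computation as in the embedding) closes the estimate with the claimed $h^{-d/2}$ loss.

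The mixed estimate $\|uw\|_{H_h^s}\leq C\|u\|_{H_h^s}\|w\|_{H^s}$ is the delicate point, and is where the main obstacle lies: the naive Peetre/Young scheme above would produce a stray $\|\hat u\|_{L^1}\lesssim h^{-d/2}\|u\|_{H_h^s}$ which must not appear in the final bound. The remedy is a paraproduct-type frequency split of $\hat u*\hat w$ according to whether $|\eta|\leq|\xi|/2$ or $|\eta|>|\xi|/2$. In the low-$\eta$ region one has the elementary bound $\langle h\xi\rangle\leq 2\langle h(\xi-\eta)\rangle$, so the weight sits on $u$ and Young's ($L^2\!*\!L^1$) against $\|\hat w\|_{L^1}\leq C_s\|w\|_{H^s}$ (the ordinary, non-semiclassical Sobolev bound for $s>d/2$) gives the desired estimate with no $h$-loss. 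In the high-$\eta$ region I would split further: when $h|\xi|\leq 1$ one simply bounds $\langle h\xi\rangle^s\leq C_s$ and applies $L^2\!*\!L^1$ with $\|\hat u\|_{L^2}\leq\|u\|_{H_h^s}$ against $\|\hat w\|_{L^1}\leq C_s\|w\|_{H^s}$; when $h|\xi|>1$ the trivial bound $\langle h\xi\rangle^s\leq C_sh^s\langle\eta\rangle^s$ (from $h|\xi|\leq 2h|\eta|\leq 2\langle\eta\rangle$) transfers the weight onto $\hat w$, and the surplus $h^s$ exactly compensates the $h^{-d/2}$ loss coming from $\|\hat u\|_{L^1}$ (net factor $h^{s-d/2}\leq1$ since $s>d/2$). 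The key structural insight, and what makes the estimate hold with no $h$-loss at all, is this asymmetry between the semiclassical weights on $u$ and the classical weights on $w$: in the frequency regime where transferring the derivative weight to $u$ would cost $h^{-d/2}$, one can instead transfer it to $w$, paying $h^s$ on the left-hand side for something that is free on the right.
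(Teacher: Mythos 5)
Your proof is correct. The paper supplies no argument of its own for this proposition (it defers to \cite[Section 16.1]{Sj19}), and your Fourier-side strategy --- reduction to $\R^d$ via a partition of unity and the norm \eqref{eq:sc7}, then Peetre plus Young's inequality for the $L^\infty$ embedding and the $H^s_h$-algebra bound, then a low/high frequency split for the mixed bound --- is essentially the standard argument that reference relies on. In particular you correctly identify why the naive Peetre/Young scheme fails for $\|uw\|_{H^s_h}$ (it leaves a stray $\|\hat u\|_{L^1}\lesssim h^{-d/2}\|u\|_{H^s_h}$), and the remedy of transferring the weight onto $\hat w$ in the regime $h|\xi|>1$, $|\eta|>|\xi|/2$ is exactly right. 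One small slip in exposition: as written, the parenthetical chain $h|\xi|\le 2h|\eta|\le 2\langle\eta\rangle$ only yields $\langle h\xi\rangle^s\le C_s\langle\eta\rangle^s$, not the claimed $C_sh^s\langle\eta\rangle^s$; the $h^s$ gain comes from not discarding the $h$, i.e. $\langle h\xi\rangle\le\sqrt{2}\,h|\xi|\le 2\sqrt{2}\,h|\eta|\le 2\sqrt{2}\,h\langle\eta\rangle$, which is valid precisely because $h|\xi|>1$ and $|\eta|>|\xi|/2$. With that line corrected, the net factor $h^{s-d/2}\le 1$ closes the estimate as you claim.
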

\begin{definition}\label{def:ClassEllip}
	A symbol $p \in S^{m}(T^*X;\mathrm{Hom}(\C^n,\C^n))$ is called 
	classically elliptic if there exists a symbol 
	$q\in S^{-m}(T^*X;\mathrm{Hom}(\C^n,\C^n))$ such that 
	$pq-1$, $qp-1\in S^{-1}(T^*X;\mathrm{Hom}(\C^n,\C^n))$. 
\end{definition}
\begin{prop}
	Let $P\in \Psi_{h}^{m}(X;\C^n,\C^n)$ with classically elliptic principal 
	symbol. Then there exists a $C>0$ such that for all $u\in H^s_h$ 
	\begin{equation}\label{eq:sa9}
		\|u\|_{H^{s}_h} \leq C(\|Pu\|_{H^{s-m}_h} + \|u\|_{H^0_h}).
	\end{equation}
\end{prop}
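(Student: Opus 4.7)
The plan is to prove \eqref{eq:sa9} by constructing a left parametrix for $P$ and then invoking the semiclassical Sobolev continuity from Proposition \ref{app:prop6}; the argument is entirely parallel to the scalar case presented in \cite[Section 16.1]{Sj19}, once one accounts for the matrix-valued symbolic calculus set up in this appendix.

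First, I would apply Definition \ref{def:ClassEllip} to the principal symbol $p$ of $P$ to obtain a symbol $q_0 \in S^{-m}(T^*X;\mathrm{Hom}(\C^n,\C^n))$ with $q_0 p - 1 \in S^{-1}$, and quantize it via \eqref{eq:sc4} to produce $Q_0 = \Op_h(q_0) \in \Psi_h^{-m}$. Using the composition rule together with the fact that two operators in $\Psi_h^0$ with the same principal symbol differ by an element of $h\Psi_h^{-1}$ (cf. \eqref{eq:sc5}), I get $Q_0 P = I + S_1$ with $S_1 \in \Psi_h^{-1}$.

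To push the remainder down to any prescribed order, I would use the finite Neumann-type identity
\begin{equation*}
	\bigl(I - S_1 + S_1^2 - \cdots + (-S_1)^N\bigr)(I + S_1) = I + (-1)^{N+1} S_1^{N+1},
\end{equation*}
and set $Q = \bigl(I - S_1 + \cdots + (-S_1)^N\bigr) Q_0 \in \Psi_h^{-m}$, so that $QP = I + R$ with $R = (-1)^{N+1} S_1^{N+1} \in \Psi_h^{-N-1}$. I would choose $N \geq s - 1$, which is all that is needed here. The matrix-valued non-commutativity causes no difficulty, since only a one-sided parametrix is being constructed; the product $S_1^{N+1}$ is taken in the fixed (operator) order.

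With these ingredients in hand, write $u = Q P u - R u$ and apply the triangle inequality together with Proposition \ref{app:prop6} to get
\begin{equation*}
	\|u\|_{H^s_h} \leq \|Q P u\|_{H^s_h} + \|R u\|_{H^s_h} \leq C\|Pu\|_{H^{s-m}_h} + C\|u\|_{H^0_h},
\end{equation*}
using that $Q : H^{s-m}_h \to H^s_h$ and, for $N \geq s-1$, $R : H^0_h \to H^{N+1}_h \hookrightarrow H^s_h$ are bounded uniformly in $h$. This yields \eqref{eq:sa9}. The only real obstacle is the bookkeeping required to combine the iterative symbolic correction (decreasing orders in $\langle \xi \rangle$) with the implicit $h$-expansion encoded in \eqref{eq:sc5} in the matrix setting, but this is standard and requires no essentially new ideas beyond the scalar case.
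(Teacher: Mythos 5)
Your proof is correct and uses the standard left-parametrix argument. The paper states this proposition without proof in its appendix (as a standard fact of the matrix-valued semiclassical calculus), so there is no proof in the paper to compare against; your construction — quantize the classical symbolic inverse to get $Q_0 P = I + S_1$ with $S_1 \in \Psi_h^{-1}$, then use the finite geometric series to depress the remainder to order $\leq -s$, then apply Proposition \ref{app:prop6} — is exactly what one would write in and the matrix-valued setting causes no extra difficulty since only a one-sided product is involved. One cosmetic slip: $\bigl(I - S_1 + \cdots + (-S_1)^N\bigr)(I+S_1) = I + (-1)^N S_1^{N+1}$, not $(-1)^{N+1}$, but this has no bearing on the argument since the sign of $R$ is irrelevant. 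One could shorten the argument slightly by doing a single parametrix step, obtaining $\|u\|_{H^s_h} \leq C(\|Pu\|_{H^{s-m}_h} + \|u\|_{H^{s-1}_h})$, and then iterating this inequality (or invoking the interpolation inequality $\|u\|_{H^{s-1}_h} \leq \varepsilon\|u\|_{H^s_h} + C_\varepsilon\|u\|_{H^0_h}$) to reach $H^0_h$; iterating the parametrix as you do is an equivalent and equally standard route.
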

Following \cite[Chapter 9]{DiSj99}, with the obvious modifications 
for the case of matrix-valued symbols, we get the following properties 
for trace class operators. 
\begin{prop}\label{app:prop3}
Let $a\in S^{m}(T^*\R^d;\mathrm{Hom}(\C^n,\C^n))$ such that 
\begin{equation}\label{eq:sa10}
	\sum_{|\alpha|\leq 2d+1} \| \partial^\alpha_\rho a_{jk} \|_{L^1} < \infty, 
	\quad j,k =1,\dots, n. 
\end{equation}
Then the corresponding operator $a(x,hD_x)$ is of trace class and 
\begin{equation}\label{eq:sa11}
	\| a(x,hD_x)\|_{\tr} = O(h^{-d}) 
	\sum_{j,k=1}^n\sum_{|\alpha|\leq 2d+1}
	\iint | (h^{1/2}\partial_{x,\xi})^\alpha a_{jk}(x,\xi)| dx d\xi, 
\end{equation}
\begin{equation}\label{eq:sa12}
	\tr a(x,hD_x) = \frac{1}{(2\pi h)^d} \iint \tr a(x,\xi) dx d\xi, 
\end{equation}
where the second trace is the trace of complex $n\times n$ matrices. 
\end{prop}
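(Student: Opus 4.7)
The plan is to reduce everything to the scalar case treated in \cite[Chapter 9]{DiSj99} by exploiting the matrix structure. The key observation is that the matrix-valued quantization \eqref{eq:sa4} acts entrywise: identifying $L^2(\R^d;\C^n) \simeq \bigoplus_{j=1}^n L^2(\R^d;\C)$ via the standard basis of $\C^n$, the operator $a(x,hD_x)$ is the block operator $(a_{jk}(x,hD_x))_{j,k=1}^n$ whose entries are scalar semiclassical pseudo-differential operators with symbols $a_{jk} \in S^m(T^*\R^d;\C)$. Each $a_{jk}$ inherits the hypothesis \eqref{eq:sa10}, so by \cite[Chapter 9]{DiSj99} each $a_{jk}(x,hD_x)$ is of trace class with
\begin{equation*}
  \|a_{jk}(x,hD_x)\|_{\tr} = O(h^{-d}) \sum_{|\alpha|\leq 2d+1} \iint |(h^{1/2}\partial_{x,\xi})^\alpha a_{jk}(x,\xi)|\, dx\, d\xi,
\end{equation*}
and $\tr a_{jj}(x,hD_x) = (2\pi h)^{-d}\iint a_{jj}(x,\xi)\, dx\, d\xi$.

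For the trace-norm estimate \eqref{eq:sa11}, I would use the standard fact that a block operator $A = (A_{jk})_{j,k=1}^n$ on a finite orthogonal direct sum is of trace class as soon as each of its entries is, with $\|A\|_{\tr} \leq \sum_{j,k} \|A_{jk}\|_{\tr}$. Indeed, writing $\iota_j : \mathcal{H}_j \hookrightarrow \bigoplus_k \mathcal{H}_k$ and $\pi_j$ for the canonical isometric embedding and orthogonal projection, one has $A = \sum_{j,k} \iota_j A_{jk} \pi_k$, so the bound follows from the ideal property of the trace class and the fact that $\iota_j$, $\pi_k$ are contractions. Summing the scalar bounds above over $j,k = 1,\dots,n$ yields precisely \eqref{eq:sa11}, up to adjusting the implicit constant to absorb the factor $n^2$.

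For the trace formula \eqref{eq:sa12}, I would use the complementary fact that the trace of a block trace-class operator is the sum of the traces of its diagonal blocks, $\tr A = \sum_{j=1}^n \tr A_{jj}$. Applying this with $A_{jj} = a_{jj}(x,hD_x)$ and invoking the scalar trace formula on each diagonal entry gives
\begin{equation*}
  \tr a(x,hD_x) = \sum_{j=1}^n \tr a_{jj}(x,hD_x) = \frac{1}{(2\pi h)^d} \iint \sum_{j=1}^n a_{jj}(x,\xi)\, dx\, d\xi,
\end{equation*}
and recognizing $\sum_j a_{jj}(x,\xi) = \tr a(x,\xi)$ (trace of an $n\times n$ complex matrix) finishes the proof.

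The main (and essentially only) obstacle is the block-matrix trace calculus, which is completely standard but should be stated cleanly; there is no new analytic content beyond what is already in \cite{DiSj99}. In particular, the commutative-to-noncommutative passage causes no difficulty here since no composition or symbol calculus is needed — only the linear structure of quantization and the trace.
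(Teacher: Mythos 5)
Your argument is correct and is essentially the approach the paper intends: the paper simply states that Proposition \ref{app:prop3} follows from \cite[Chapter 9]{DiSj99} ``with the obvious modifications for the case of matrix-valued symbols,'' and the block-decomposition reduction you spell out — entrywise quantization, the ideal property bound $\|A\|_{\tr}\leq\sum_{j,k}\|A_{jk}\|_{\tr}$, and $\tr A=\sum_j\tr A_{jj}$ via cyclicity and $\pi_k\iota_j=\delta_{jk}\Id$ — is exactly what those obvious modifications amount to.
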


\providecommand{\bysame}{\leavevmode\hbox to3em{\hrulefill}\thinspace}
\providecommand{\MR}{\relax\ifhmode\unskip\space\fi MR }
\providecommand{\MRhref}[2]{%
  \href{http://www.ams.org/mathscinet-getitem?mr=#1}{#2}
}
\providecommand{\href}[2]{#2}

\end{document}